\documentclass[12pt]{article}
\usepackage[right=2cm,left=2cm,top=2cm,bottom=2cm]{geometry}
\usepackage{hyperref}
\hypersetup{colorlinks, citecolor=blue, filecolor=blue, linkcolor=blue, urlcolor=blue}

\usepackage{adjustbox}
\usepackage{caption}
\usepackage{bm}
\usepackage{algorithm}
\usepackage[noend]{algpseudocode}
\usepackage{tcolorbox}
\usepackage[stable]{footmisc}
\usepackage{comment}
\usepackage{threeparttable}

\usepackage{graphicx}
\usepackage{url}
\usepackage[round]{natbib}
\usepackage{amsmath,amssymb,amsfonts,amsthm}
\usepackage{mathtools}
\usepackage{engord}
\usepackage{float}
\usepackage{subfig}
\usepackage{pdflscape}
\usepackage{booktabs}
\usepackage{pgfplots}
\usepackage{amssymb}
\usepackage{multirow}
\usepackage{xr}
\usepackage{xcolor}
\pgfplotsset{compat=1.14}
\pgfplotsset{every axis label/.append style={font=\tiny}}

\theoremstyle{plain}

\newtheorem{lemma}{Lemma}
\newtheorem{proposition}{Proposition}
\newtheorem{corollary}{Corollary}

\theoremstyle{definition}
\newtheorem{remark}{Remark}
\newtheorem{definition}{Definition}
\newtheorem{example}{Example}
\newtheorem{assumption}{Assumption}

\DeclareMathOperator{\vect}{vec}

\usepackage{setspace}
\usepackage{sectsty}
\sectionfont{\large}
\subsectionfont{\normalsize}
\subsubsectionfont{\normalsize}

\usepackage{titlesec}
\titleformat{\section}{\Large\bfseries}{\thesection}{1em}{}

\usepackage{enumitem}
\setdescription{leftmargin=\parindent,labelindent=\parindent}
\setlist[enumerate]{topsep=0pt}

\title{Sensitivity, Informativeness, and Misspecification \\in GMM Estimation}

\author{
Fangzhou Yu\thanks{School of Economics, University of Sydney.
	\href{fangzhou.yu@student.unsw.edu.au}{yfz.1017@gmail.com}} \and Seojeong Lee\thanks{Department of Economics, Seoul National University.
	\href{s.jay.lee@snu.ac.kr}{s.jay.lee@snu.ac.kr}}
}

\date{\today}

\begin{document}

%\bgroup
%\let\footnoterule\relax

%\begin{singlespace}
\maketitle
\begin{abstract}
    This paper develops misspecification-robust sensitivity and informativeness diagnostics for GMM estimators, evaluated at pseudo-true values. The sensitivity matrix nests that of \citet*{andrews2017measuring} under correct specification. The informativeness $\Delta$ measures the share of an estimator's asymptotic variance explained by sampling variation in the moments, a notion of structural efficiency that equals one under correct specification and can fall below one under misspecification, even when the Hansen $J$-test does not reject. We derive influence-function representations for one-step, two-step, iterated, and continuously updating GMM. We show that in minimum-distance estimation, estimating the optimal weight matrix adds estimator variance that the moments do not explain, lowering informativeness, while simpler weight matrices largely avoid it. The choice of weight matrix therefore involves a trade-off between classical efficiency and informativeness. In applications to the automobile demand model of \citet*{berry1995automobile}, the consumption insurance model of \citet*{blundell2008consumption}, and the income-and-democracy regressions of \citet*{acemoglu2008income}, misspecification reorders sensitivity rankings, simpler weights preserve the informativeness that the optimal weight loses, and $\Delta$ detects structural-efficiency losses that the $J$-test does not.
\end{abstract}

\noindent\textit{Keywords:} Generalized method of moments; misspecification; sensitivity analysis; influence function; informativeness; structural estimation.\\
\noindent\textit{JEL classification:} C13, C26, C36, C52.

\thispagestyle{empty}
%\end{singlespace}

\clearpage
%\egroup
\setcounter{page}{1}

%% Temporary tool to track how this paper is structured. Feel free to comment in or out. 
% \tableofcontents
% \bigskip

%%%%%%%%%%%%%%%%%%%%%%%%%%%%%%%%%%%%%%%%%%%%%%%%%%%%%%%%%%%%%

\section{Introduction}\label{sec:intro}

Assessing sensitivity to model specification is central to evaluating the robustness of empirical conclusions. In reduced-form settings, researchers have access to a rich toolkit: for example, the omitted variable bias formula in linear regression provides a transparent benchmark for assessing the impact of confounding variables, and formal sensitivity diagnostics have been developed for a range of reduced-form estimators \citep*{rosenbaum1983assessing, imbens2003sensitivity, oster2019unobservable, cinelli2020making}. Comparable tools for structural estimators are far less standardized. The relationship between parameters and identifying moment conditions is implicit and often nonlinear, and misspecification is prevalent in applied work.

Empirical researchers increasingly rely on sensitivity diagnostics to assess how GMM estimates depend on particular moment conditions. In many applications, however, overidentifying restrictions are strongly rejected, and the maintained moment conditions fail to hold exactly at any parameter value. In such settings, the GMM estimator does not converge to a structurally ``true'' parameter satisfying the moment restrictions; it converges instead to a pseudo-true value defined as the minimizer of a misspecified population objective function. Local sensitivity can no longer be interpreted as a perturbation around a correctly specified benchmark, and the relevant object is the mapping from the moment conditions to the estimator's misspecified probability limit. This paper develops sensitivity and informativeness measures defined relative to the pseudo-true value and shows that this change of benchmark gives the diagnostics different economic content and interpretation.

A growing literature has begun to formalize sensitivity analysis for structural estimators using large-sample approximations. \citet*{gentzkow2014measuring} introduce the concepts of asymptotic sensitivity and asymptotic sufficiency, which characterize how parameter estimates covary with auxiliary sample statistics. \citet*{andrews2017measuring} (hereafter AGS) formalize sensitivity for GMM through local perturbations of the moment conditions, mapped linearly into asymptotic bias by a closed-form sensitivity matrix. These approaches evaluate sensitivity at a benchmark where the moment conditions are correctly specified. The natural reference point under misspecification is the pseudo-true value, which retains economic meaning in several settings of empirical interest. For instance, in overidentified IV with heterogeneous treatment effects, the 2SLS estimator converges to a weighted average of the local average treatment effects \citep*{angrist1996identification, lee2018consistent}.

Recent work argues that applied researchers should default to misspecification-robust inference for overidentified GMM \citep*{andrews2025purpose}. We take this recommendation as our starting point and ask what additional diagnostic information the misspecification-robust influence function provides. We propose a misspecification-robust sensitivity (MRS) matrix $\Lambda$ that characterizes how local deviations in the moment conditions affect the first-order behavior of GMM estimators around their pseudo-true probability limits. To construct MRS, we derive influence function representations for one-step, two-step, and iterated GMM, explicitly accounting for misspecification and for variation arising from estimated weight matrices; continuously updating GMM is treated in Appendix~\ref{sec:cugmm}. Relative to AGS, MRS contains additional terms operating through the Jacobian of the moment conditions and the estimated weight matrix; these terms are closely related to components of the misspecification-robust asymptotic variance \citep*[e.g.,][]{imbens1997one, hall2003large, hansen2021inference, hwang2022doubly}.

The misspecification-robust influence function also yields an informativeness measure $\Delta_k$, defined as the population $R^2$ from the linear projection of the estimator's influence function on the moments. \citet*{gentzkow2014measuring} show that the analogous quantity, called sufficiency in their framework and informativeness in \citet*{andrews2020informativeness}, equals one under correct specification; this equivalence breaks down under misspecification. We reinterpret $\Delta_k$ as a measure of \textit{structural efficiency}: the proportion of the estimator's asymptotic variance attributable to sampling variation in the moments, as opposed to variation arising from the Jacobian or the estimated weight matrix. Unlike classical GMM efficiency, which concerns the choice of optimal weights to minimize variance, structural efficiency quantifies a different source of variance loss. This loss exists even at the efficient weight matrix when the model is misspecified. $\Delta_k$ is complementary to the Hansen $J$-test \citep*{hansen1982large}. The $J$-test asks whether the moment restrictions are jointly violated; $\Delta_k$ asks how much of the estimator's variance is left unexplained by the moments after such a violation. The income-and-democracy application of \citet*{acemoglu2008income} (AJRY) in Section~\ref{sec:ajry} makes the distinction concrete. At the iterated fixed point the income coefficient has informativeness $\widehat\Delta_\gamma=0.77$: only $77\%$ of its asymptotic variance is explained by the moments, the rest a structural-efficiency loss the $J$-test does not quantify. $\widehat\Delta_\gamma$ is invariant to the centering of the weight matrix, whereas the $J$-test verdict flips with it, from failing to reject without centering ($p=0.42$) to rejecting with centering ($p=0.007$).

We make three contributions. First, we define a projection-based sensitivity measure for GMM estimators evaluated at pseudo-true values that nests the AGS sensitivity matrix under correct specification. Second, we derive misspecification-robust influence functions for one-step, two-step, and iterated GMM, organize them as a channel decomposition into moment, Jacobian, weight-matrix, and first-step components, and use this decomposition to characterize when informativeness falls strictly below one. Third, we apply the framework to three canonical structural settings: the automobile demand model of \citet*{berry1995automobile} (BLP), the consumption insurance model of \citet*{blundell2008consumption} (BPP), and the AJRY income-and-democracy regressions. Accounting for misspecification reorders sensitivity rankings in BLP; in BPP it lowers $\widehat\Delta$ below one for every parameter under the optimal weight but leaves it near one under the diagonally weighted scheme actually used in BPP; and in AJRY it reveals structural-efficiency loss at the iterated fixed point that is invariant to the weight-matrix centering on which the conclusion of the Hansen $J$-test depends. The diagnostics are computable directly from the same inputs as misspecification-robust standard errors.

As a by-product, the framework rationalizes a widespread practice in minimum distance estimation: since the efficient weight matrix contributes a misspecification channel that simpler weight matrices largely avoid (Proposition~\ref{prp:eff_channel}), the common use of diagonally weighted or fixed-weight estimators trades efficiency for informativeness. This trade-off is an asymptotic counterpart to the small-sample bias argument of \citet*{altonji1996small}.

Our paper is also related to a recent literature on local sensitivity in structural models, including \citet*{jorgensen2023sensitivity, iskrev2019expect} on calibrated parameters, \citet*{christensen2023counterfactual} on counterfactual conclusions, \citet*{armstrong2021sensitivity} on moment selection, and \citet*{bonhomme2022minimizing} on estimators designed to minimize the impact of misspecification. Our contribution is complementary in focusing on sensitivity to moment conditions evaluated at the estimator's misspecified probability limit.

The remainder of the paper is organized as follows. Section~\ref{sec:frame} defines misspecification-robust sensitivity and informativeness in terms of the conditional expectation of the estimator given the moments, and expresses both measures through influence functions. Section~\ref{sec:sensgmm} derives the misspecification-robust influence functions for one-step, two-step, and iterated GMM (Section~\ref{sec:ifgmm}). It then characterizes the influence function of optimal minimum distance, in which the optimal weight's misspecification channel appears in isolation, and the implications for diagonal weighting (Proposition~\ref{prp:eff_channel}). Section~\ref{sec:discussions} illustrates the resulting sensitivity and informativeness through canonical examples, including 2SLS and a normal mean with a misspecified variance restriction. Section~\ref{sec:app} applies the diagnostics to the BLP model of automobile demand, the BPP model of household consumption insurance, and the AJRY income-and-democracy regressions. Section~\ref{sec:conclusion} concludes.

\section{Framework}\label{sec:frame}

\subsection{Setup and Definitions}\label{sec:setup}

Consider i.i.d.\ random vectors \(X_1,\ldots,X_n\) with unknown distribution \(P_0\).
Let \(g(X_i,\theta)\) be a \(q\times 1\) vector of moment functions, and let
\(\theta \in \Theta \subset \mathbb{R}^p\) be a \(p\times 1\) parameter vector with
\(q>p\). A GMM estimator is defined as
\[
    \widehat\theta
    = \arg\min_{\theta \in \Theta}
    \widehat g(\theta)' W \widehat g(\theta),
\]
where \(\widehat g(\theta) = n^{-1}\sum_{i=1}^n g(X_i,\theta)\) is the sample moment
vector and \(W\) is a positive semidefinite weight matrix. For expositional
clarity, we treat the weight matrix \(W\) as deterministic in this section and
consider stochastic weight matrices in Section~\ref{sec:ifgmm}. The moment condition
model is said to be correctly specified if there exists \(\theta \in \Theta\)
such that \(\mathbb{E}[g(X_i,\theta)] = 0\); otherwise the model is misspecified.

We denote by \(\theta_0\) the probability limit of the GMM estimator
\(\widehat\theta\). Under standard regularity conditions, \(\theta_0\) is the
unique minimizer of the population GMM criterion
\[
    Q(\theta) = \mathbb{E}[g(X_i,\theta)]' W \mathbb{E}[g(X_i,\theta)] .
\]
When the model is correctly specified, this minimizer satisfies
\(\mathbb{E}[g(X_i,\theta_0)] = 0\) and therefore coincides with the true
parameter. When the model is misspecified, \(\theta_0\) is the
pseudo-true parameter that best fits the moment conditions in the GMM
objective. Let \(g(\theta)=\mathbb{E}[g(X_i,\theta)]\) and \(g = g(\theta_0)\).

Suppose the standard regularity conditions hold, e.g., Theorem~1 of \citet*{hall2003large}. Then,
\begin{equation}
    \label{eq:asymdist}
    \sqrt{n}
    \begin{pmatrix}
        \widehat\theta - \theta_0 \\
        \widehat g(\theta_0) - g
    \end{pmatrix}
    \xrightarrow{d}
    \begin{pmatrix}
        \widetilde\theta \\
        \widetilde g
    \end{pmatrix}
    \sim N(0,\Sigma), \ \text{with } \Sigma =
    \begin{pmatrix}
        \sigma_{\theta \theta} & \sigma_{\theta g} \\ \sigma_{g \theta} & \sigma_{g g}
    \end{pmatrix}.
\end{equation}

By properties of the multivariate normal distribution, the conditional expectation of \(\widetilde\theta\) given \(\widetilde g\) is linear and given by
\begin{equation}
    \mathbb{E}[\widetilde\theta \mid \widetilde g]
    = \sigma_{\theta g}\sigma_{gg}^{-1}\widetilde g.
    \label{eq:condexp}
\end{equation}
We interpret the corresponding population regression coefficient as a measure of the sensitivity of the estimator to the moments.

\begin{definition}[Misspecification-Robust Sensitivity]
    \label{def:mrs}
    The misspecification-robust sensitivity (MRS) of the GMM estimator \(\widehat\theta\) with respect to the moment vector \(\widehat g(\theta_0)\) is defined as
    \[
        \Lambda = \sigma_{\theta g}\sigma_{gg}^{-1}.
    \]
\end{definition}

The sensitivity matrix \(\Lambda\) is \(p\times q\). Its \((k,l)\) element measures how a local deviation in the \(l\)th moment affects the first-order behavior of the \(k\)th component of the estimator around its probability limit, holding the other moments fixed. The linear relationship in equation~\eqref{eq:condexp} also appears in \citet*{andrews2017measuring}, who study sensitivity under correct specification and analyze the response of \(\widehat\theta\) to local perturbations of the data distribution. Our framework extends their sensitivity measure by allowing for misspecified moment conditions. Under correct specification, MRS nests the AGS sensitivity matrix (Section~\ref{sec:ifgmm}).

Sensitivity captures how the estimator responds locally to deviations in the realized moments. A complementary concept concerns the extent to which the moments are informative for the estimator. We adapt the notion of informativeness from \citet*{andrews2020informativeness} to the GMM setting under misspecification. In their framework, $\widehat c$ is a scalar structural estimator and $\widehat\gamma$ is a vector of descriptive statistics. Informativeness is the population $R^2$ from regressing $\widehat c$ on $\widehat\gamma$ under their joint asymptotic distribution. We specialize their setup by taking $\widehat c$ to be a component of the GMM estimator $\widehat\theta$ and $\widehat\gamma$ to be the sample moment vector $\widehat g(\theta_0)$.

\begin{definition}[Informativeness of Moments]
    \label{def:info}
    The informativeness of the moment vector for the \(k\)th component of the GMM estimator, \(\widehat\theta_k\), is defined as
    \[
        \Delta_k
        = \frac{\sigma_{\theta_k g}\sigma_{gg}^{-1}\sigma_{g\theta_k}}
        {\sigma_{\theta_k\theta_k}},
    \]
    where \(\sigma_{\theta_k\theta_k}\) denotes the \(k\)th diagonal element of \(\sigma_{\theta\theta}\) and \(\sigma_{\theta_k g}\) is the \(k\)th row of \(\sigma_{\theta g}\).
\end{definition}

By Definition~\ref{def:mrs}, $\sigma_{\theta_k g}\sigma_{gg}^{-1}$ is the $k$th row of $\Lambda$, so $\Delta_k$ in Definition~\ref{def:info} is the population $R^2$ from the linear projection of $\widetilde\theta_k$ on $\widetilde g$. We interpret $\Delta_k$ as a measure of structural efficiency under misspecification. By ``structural efficiency'' we mean the proportion of the estimator's asymptotic variance attributable to sampling variation in the moments, as opposed to variation arising from the Jacobian or the estimated weight matrix. Lower values indicate that a larger share of the estimator's variance is left unexplained by the moments.

Under correct specification, the influence function of the GMM estimator is $\psi(X_i) = \Lambda_{AGS}\, g(X_i, \theta_0)$, where $\Lambda_{AGS} = -(G'WG)^{-1}G'W$ and $G = \mathbb E[\partial g(X_i, \theta_0)/\partial \theta']$ is the population Jacobian of the moments at $\theta_0$. Since the influence function is a deterministic linear combination of the moments, it is spanned by the moments themselves, and $\Delta_k = 1$ for every $k$. \citet*{andrews2020informativeness} make essentially this observation in their footnote~2 (p.~2234) and implement it in their Section~5.1, Step~4: when the descriptive statistic is a vector of estimation moments that fully determines the structural estimator, $\Delta_k = 1$ by construction. Under misspecification, the GMM influence function contains additional terms that are not in the linear span of the moments, and $\Delta_k$ is generally less than one. We derive these additional terms in Section~\ref{sec:ifgmm}.

Structural efficiency is distinct from the test of overidentifying restrictions. The $J$-test asks whether the population moments are jointly compatible with zero at some parameter value. Structural efficiency instead asks how much of the estimator's asymptotic variation is linearly explained by sampling variation in the moments themselves. A rejected $J$-test does not by itself indicate whether the misspecification materially affects the precision or stability of a particular parameter estimate; $\Delta_k$ measures this channel. The two diagnostics are therefore complementary.

\subsection{Influence Function Representations of Sensitivity and Informativeness}\label{sec:ifsens}

To implement the proposed sensitivity and informativeness measures, we require estimates of the asymptotic covariance matrix \(\Sigma\). We use influence function representations for this purpose. For an asymptotically linear estimator such as the GMM estimator \(\widehat\theta\), an influence function \(\psi(x)\) satisfies
\begin{align}
     & \sqrt{n}(\widehat\theta - \theta_0) = \frac{1}{\sqrt{n}}\sum_{i=1}^{n}\psi(X_i) + o_p(1), \label{eq:theta_if} \\
     & \mathbb{E}[\psi(X_i)] = 0, \qquad
    \mathbb{E}[\psi(X_i)\psi(X_i)'] < \infty \nonumber .
\end{align}

Let \(\psi\) and \(\nu\) denote influence functions associated with the GMM estimator \(\widehat\theta\) and the sample moment vector \(\widehat g(\theta_0)\), respectively. Then
\begin{equation}
    \label{eq:expan}
    \sqrt{n}
    \begin{pmatrix}
        \widehat\theta - \theta_0 \\
        \widehat g(\theta_0) - g
    \end{pmatrix}
    =
    \frac{1}{\sqrt{n}}\sum_{i=1}^n
    \begin{pmatrix}
        \psi(X_i) \\
        \nu(X_i)
    \end{pmatrix}
    + o_p(1),
\end{equation}
where
\[
    \nu(X_i)
    = g(X_i,\theta_0) - g.
\]
As a consequence, the misspecification-robust sensitivity defined in Definition~\ref{def:mrs} can be written as
\begin{equation}
    \label{eq:iflambda}
    \Lambda
    = \mathbb{E}[\psi \nu']
    \mathbb{E}[\nu \nu']^{-1}.
\end{equation}
Similarly, the informativeness of the moments for the \(k\)th parameter, \(\Delta_k\) in Definition~\ref{def:info}, is given by
\begin{equation}
    \label{eq:ifdelta}
    \Delta_k
    = \frac{
        \mathbb{E}[\psi_k \nu']
        \mathbb{E}[\nu \nu']^{-1}
        \mathbb{E}[\nu \psi_k]
    }{
        \mathbb{E}[\psi_k^2]
    },
\end{equation}
where \(\psi_k\) is the \(k\)th component of \(\psi\).

A closely related expression appears in \citet*{iskrev2019expect}, who measures the sensitivity of structural parameters to calibrated parameters by treating the latter as if estimated. \citet*{jorgensen2023sensitivity} argues that calibrated parameters are non-stochastic and should be held fixed, leading to a different sensitivity measure. Equation~\eqref{eq:iflambda} concerns sample moments, which are stochastic by construction, and so aligns formally with the Iskrev treatment though applied to a different object.

Equations~\eqref{eq:iflambda} and \eqref{eq:ifdelta} are expressed in terms of population moments and influence functions. In practice, the sensitivity matrix \(\Lambda\) and the informativeness measure \(\Delta_k\) can be consistently estimated using a plug-in approach based on estimated influence functions. We construct estimated influence functions, denoted \(\widehat{\psi}_i\) and \(\widehat{\nu}_i\), for each observation \(i=1,\ldots,n\), by replacing unknown population quantities with their sample counterparts evaluated at the GMM estimate \(\widehat{\theta}\).

First, the estimated influence function for the moments, \(\widehat{\nu}_i\), is obtained by centering the moment function evaluated at the GMM estimate,
\[
    \widehat{\nu}_i
    = g(X_i,\widehat{\theta}) - \widehat g(\widehat{\theta}).
\]
Second, the estimated influence function for the GMM estimator, \(\widehat{\psi}_i\), is obtained by substituting sample estimators into the analytic expressions for the influence function \(\psi(X_i)\) derived in Proposition~\ref{prp:gmm_ifs} below. The population influence function depends on the Jacobian of the moment conditions, the weight matrix, and additional terms involving derivatives of the GMM criterion. We replace these population quantities with their sample counterparts evaluated at \(\widehat\theta\) and compute
\[
\widehat{\psi}_i = \widehat\psi(X_i;\widehat\theta).
\]
With these estimated influence functions in hand, the sensitivity matrix \(\widehat{\Lambda}\) can be computed using a sample analogue of equation~\eqref{eq:iflambda}. Specifically, \(\widehat{\Lambda}\) is obtained by regressing \(\widehat{\psi}_i\) on \(\widehat{\nu}_i\), which yields
\[
    \widehat{\Lambda}
    = \left( \sum_{i=1}^n \widehat{\psi}_i \widehat{\nu}_i^{\prime} \right)
    \left( \sum_{i=1}^n \widehat{\nu}_i \widehat{\nu}_i^{\prime} \right)^{-1}.
\]

Similarly, the informativeness measure \(\widehat{\Delta}_k\) can be estimated using a sample analogue of equation~\eqref{eq:ifdelta}, based on the sample variances and covariances of \(\widehat{\psi}_{ik}\) and \(\widehat{\nu}_i\). This implementation requires only the analytic form of the influence function \(\psi\), which is derived in the next section for a range of GMM estimators.

\section{Sensitivity Measures for GMM Estimators}\label{sec:sensgmm}

\subsection{Influence Function Representations under Misspecification}\label{sec:ifgmm}

Equations~\eqref{eq:iflambda} and \eqref{eq:ifdelta} show that the misspecification-robust sensitivity, \(\Lambda\), and informativeness, \(\Delta_k\), can be computed directly from the influence functions of the estimator and the moments. In this section, we derive the misspecification-robust influence functions for GMM estimators. As shown in \citet*{hall2003large} and subsequent work, the asymptotic behavior of GMM under misspecification depends on both the choice of the weight matrix and the estimation procedure.

We consider a class of GMM estimators defined as minimizers of
\[
    \widehat Q_n(\theta)
    = \widehat g(\theta)' \widehat W \widehat g(\theta),
\]
with different choices of the weight matrix.

In one-step GMM, the weight matrix does not depend on the parameter. We consider both the case of a deterministic weight matrix, where \(\widehat W = W\) is non-stochastic, and the case of an estimated weight matrix satisfying \(\widehat W \xrightarrow{p} W\), where \(\widehat W\) is asymptotically linear and does not depend on \(\theta\).

In two-step efficient GMM, the weight matrix is evaluated at a preliminary estimator \(\widehat\phi\). Specifically, the second-step estimator minimizes the criterion using the efficient weight matrix
\[
    \widehat W(\widehat\phi)
    = \left( \frac{1}{n}\sum_{i=1}^{n}
    g(X_i,\widehat\phi) g(X_i,\widehat\phi)' \right)^{-1}.
\]

Iterated GMM updates the weight matrix repeatedly by re-evaluating it at the current parameter estimate until convergence. At convergence, the estimator minimizes the criterion with weight matrix \(\widehat W(\widehat\theta)\).

We now introduce notation used in the influence function derivations. Let $\widehat W(\theta)$ be a parameter-dependent weight matrix estimator, and its probability limit be $W(\theta)$. Let $W = W(\theta_0)$.

The influence function derivations rely on the population curvature of the GMM objective. Let
\[
    R = \mathbb E\!\left[\frac{\partial}{\partial\theta'}\vect(G(X_i,\theta_0)')\right],
    \qquad H = (g'W\otimes I_p)R.
\]
These matrices are part of the one-step GMM influence function. Define
\[
    \gamma_i = \vect\{G(X_i,\theta_0)' - G'\}.
\]
We use the convention $\vect(G')$ rather than $\vect(G)$, under which Proposition~\ref{prp:gmm_ifs} relies on the identity $(G(X_i,\theta_0) - G)'Wg = (g'W\otimes I_p)\gamma_i$ for any weight matrix $W$. The influence functions of two-step and iterated GMM additionally involve
\[
    B = (g'\otimes G')S,
\]
where $S$ is the derivative of $\vect W(\cdot)$ with respect to its argument, evaluated at the relevant parameter value: $S = \partial \vect W(\phi)/\partial\phi'|_{\phi_0}$ for two-step GMM and $S = \partial \vect W(\theta)/\partial\theta'|_{\theta_0}$ for iterated GMM. The matrix $B$ captures the dependence of the first-order condition on the parameter inside the weight matrix, and governs the linearization of the iteration map (Proposition~\ref{prp:sens_to_1s}).

The matrices $H$ and $B$ depend on the misspecification vector $g$ and vanish under correct specification. The population curvature matrix is defined as
\[
    A = G'WG + H.
\]

We assume the following regularity conditions.
\begin{assumption}
    \label{ass:regularity}
    \begin{enumerate}[label=(\roman*)]

        \item
              The observations \(\{X_i\}_{i=1}^n\) are i.i.d.

        \item
              % The parameter space $\Theta\subset\mathbb R^p$ is compact.
              % The population GMM objective $Q(\theta)$ has a unique minimizer
              % $\theta_0$ in the interior of $\Theta$.
              % All population weight matrices $W$ are positive semi-definite.
              The parameter space \(\Theta \subset \mathbb{R}^p\) is compact.
              For one-step, two-step, and iterated GMM, the population objective \(Q(\theta)\) has a unique minimizer \(\theta_0\) in the interior of \(\Theta\).

        \item
              The moment function \(g(X,\theta)\) is three times continuously differentiable
              in \(\theta\in\Theta\) almost surely and satisfies the uniform integrability
              conditions
              \[
                  \mathbb E\!\left[\sup_{\theta\in\Theta}\|g(X,\theta)\|^2\right]<\infty,
                  \qquad
                  \mathbb E\!\left[\sup_{\theta\in\Theta}\|G(X,\theta)\|^2\right]<\infty,
              \]
              for all \(j_1,j_2\in\{1,\dots,p\}\),
              \[
                  \mathbb E\!\left[
                      \sup_{\theta\in\Theta}
                      \left\|
                      \frac{\partial^2}{\partial\theta_{j_1}\partial\theta_{j_2}}
                      g(X,\theta)
                      \right\|^2
                      \right]<\infty,
              \]
              and for all \(j_1,j_2,j_3\in\{1,\dots,p\}\),
              \[
                  \mathbb E\!\left[
                      \sup_{\theta\in\Theta}
                      \left\|
                      \frac{\partial^3}{\partial\theta_{j_1}\partial\theta_{j_2}\partial\theta_{j_3}}
                      g(X,\theta)
                      \right\|
                      \right]<\infty.
              \]
              % These conditions ensure uniform laws of large numbers for
              % $\widehat g(\theta)$, $\widehat G(\theta)$, and $\widehat R(\theta)$
              % over $\Theta$.

        \item
              The population curvature matrices \(A\), \(A+B\) are nonsingular.

        \item
              For iterated GMM, the population updating map is a contraction at
              \(\theta_0\).

        \item
              Weight matrix conditions:
              \begin{enumerate}
                  \item[(a)] (Estimated weight, one-step GMM) The weight matrix $\widehat W$ does not depend on $\theta$, $\widehat W \xrightarrow{p} W$ where $W$ is symmetric positive definite, and
                        \[
                            \sqrt n\,\vect(\widehat W - W) = \frac{1}{\sqrt n}\sum_{i=1}^n \omega_i + o_p(1), \quad \mathbb E[\|\omega_i\|^2] < \infty.
                        \]
                  \item[(b)] (Parameter-dependent weight) Let $\widehat\phi$ be a preliminary estimator converging in probability to $\phi_0$, with $\sqrt n(\widehat\phi - \phi_0) = n^{-1/2}\sum_i \psi^\phi(X_i) + o_p(1)$. There exists a neighborhood $\mathcal N$ of $\phi_0$ such that $W(\phi)$ is continuously differentiable and $W(\phi_0)$ is positive definite. With probability approaching one, $\widehat W(\phi)$ is continuously differentiable on $\mathcal N$ and satisfies
                        \[
                            \sup_{\phi\in\mathcal N}\|\widehat W(\phi) - W(\phi)\| \xrightarrow{p} 0, \qquad
                            \sup_{\phi\in\mathcal N}\left\|\frac{\partial \vect(\widehat W(\phi))}{\partial \phi'} - \frac{\partial \vect(W(\phi))}{\partial \phi'}\right\| \xrightarrow{p} 0,
                        \]
                        and the influence function $\omega_i$ of $\widehat W(\phi_0)$ satisfies
                        \[
                            \sqrt n\,\vect(\widehat W(\phi_0) - W(\phi_0)) = \frac{1}{\sqrt n}\sum_{i=1}^n \omega_i + o_p(1), \quad \mathbb E[\|\omega_i\|^2] < \infty.
                        \]
              \end{enumerate}

    \end{enumerate}
\end{assumption}

Under Assumption~\ref{ass:regularity}, the GMM estimators considered above are consistent for their pseudo-true probability limits and asymptotically linear, with influence functions given by Proposition~\ref{prp:gmm_ifs}. Consistency under misspecification follows from standard arguments for misspecified GMM \citep*[e.g.,][]{hall2003large,hansen2021inference,hwang2022doubly}; these influence functions are the inputs to the misspecification-robust sensitivity and informativeness measures in equations~\eqref{eq:iflambda} and~\eqref{eq:ifdelta}.

\begin{proposition}[Decomposition of GMM Influence Functions under Misspecification]
    \label{prp:gmm_ifs}
    Suppose Assumption~\ref{ass:regularity} holds. The influence functions of the GMM estimators in Section~\ref{sec:ifgmm} are decomposed as follows.
    \noindent\textnormal{(i)} \emph{One-step GMM with deterministic weight $W$:}
    \[
        \psi^{1s}(X_i) = -A^{-1}\!\left[G'W\nu_i + (g'W\otimes I_p)\gamma_i\right].
    \]
    \noindent\textnormal{(ii)} \emph{One-step GMM with estimated weight $\widehat W$:} suppose Assumption~\ref{ass:regularity}(vi)(a) holds. Then
    \[
        \psi^{1s,W}(X_i) = -A^{-1}\!\left[G'W\nu_i + (g'W\otimes I_p)\gamma_i + (g'\otimes G')\omega_i\right].
    \]
    \noindent\textnormal{(iii)} \emph{Two-step efficient GMM:} suppose Assumption~\ref{ass:regularity}(vi)(b) holds and let $W = W(\phi_0)$. Then
    \[
        \psi^{2s}(X_i) = -A^{-1}\!\left[G'W\nu_i + (g'W\otimes I_p)\gamma_i + (g'\otimes G')\omega_i + B\psi^\phi(X_i)\right].
    \]
    \noindent\textnormal{(iv)} \emph{Iterated GMM:} let $W = W(\theta_0)$. Then
    \[
        \psi^{it}(X_i) = -(A+B)^{-1}\!\left[G'W\nu_i + (g'W\otimes I_p)\gamma_i + (g'\otimes G')\omega_i\right].
    \]
\end{proposition}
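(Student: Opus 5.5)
The plan is a single mean-value expansion of the sample first-order condition, done once with a generic weight $\widehat W(\cdot)$ and then specialized to the four cases. Write $\widehat G(\theta)=n^{-1}\sum_i G(X_i,\theta)$. The estimator satisfies the first-order condition
\[
F_n(\widehat\theta,\widehat W) := \widehat G(\widehat\theta)'\,\widehat W\,\widehat g(\widehat\theta)=0,
\]
possibly plus a term from differentiating $\widehat W(\theta)$ in $\theta$. That extra term is absent when the iterated estimator is read as a fixed point of the updating map; in that reading, $\widehat\theta$ minimizes $\widehat g(\theta)'\widehat W(\widehat\theta)\widehat g(\theta)$ over $\theta$ with the weight held fixed. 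The population counterpart is $G'Wg=0$. Consistency under Assumption~\ref{ass:regularity}(i)--(iii), together with the cited misspecified-GMM results, lets me expand around $(\theta_0,W)$. The uniform integrability of the first three derivatives gives uniform laws of large numbers for $\widehat g$, $\widehat G$ and $\partial\vect\widehat G'/\partial\theta'$. These control the remainders.

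\textbf{Step 1 (linearization in $\theta$ with the weight held fixed).} Expand $\widehat G(\theta)'\widehat W\widehat g(\theta)$ around $\theta_0$. The derivative of $G(\theta)'Wg(\theta)$ with respect to $\theta'$ is $G'WG+(g'W\otimes I_p)R=A$, using the $\vect(G')$ convention so that $\partial(G(\theta)'Wg)/\partial\theta'=(g'W\otimes I_p)R$. At $\theta_0$, the sample term decomposes as
\[
\widehat G'\widehat W\widehat g - G'Wg = G'W(\widehat g-g) + (\widehat G-G)'Wg + G'(\widehat W-W)g + o_p(n^{-1/2}).
\]
Apply the stated identity $(\widehat G-G)'Wg=(g'W\otimes I_p)n^{-1}\sum_i\gamma_i$ and the identity $G'(\widehat W-W)g=(g'\otimes G')\vect(\widehat W-W)$. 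Together these give
\[
0 = A(\widehat\theta-\theta_0) + n^{-1}\sum_i\big[G'W\nu_i+(g'W\otimes I_p)\gamma_i\big] + (g'\otimes G')\vect(\widehat W-W)+o_p(n^{-1/2}).
\]
Nonsingularity of $A$ (Assumption~(iv)) then delivers (i), where $\widehat W=W$, and (ii), where Assumption~(vi)(a) is substituted.

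\textbf{Step 2 (parameter-dependent weights).} For two-step GMM, write
\[
\vect(\widehat W(\widehat\phi)-W)=\vect(\widehat W(\phi_0)-W(\phi_0))+S(\widehat\phi-\phi_0)+o_p(n^{-1/2}).
\]
This uses the uniform convergence of $\partial\vect\widehat W/\partial\phi'$ from (vi)(b) and a mean-value argument on $\mathcal N$. Substituting the influence functions $\omega_i$ and $\psi^\phi$ produces the additional term $(g'\otimes G')S\psi^\phi=B\psi^\phi$, which gives (iii). For iterated GMM, $\phi$ is replaced by $\widehat\theta$ itself, so the $S$-term becomes $B(\widehat\theta-\theta_0)$. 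It moves to the left-hand side and combines with $A$, and nonsingularity of $A+B$ gives (iv).

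\textbf{Main obstacle.} The hard step is justifying (iv): that the limit of the iteration is a well-defined $\sqrt n$-consistent fixed point, so that the self-referential expansion is legitimate. I would use Assumption~(v). Because the population map is a contraction at $\theta_0$, the sample map is a contraction on a neighborhood with probability approaching one, and its unique fixed point $\widehat\theta$ is consistent. The expansion above holds at that fixed point. Nonsingularity of $A+B$ is exactly invertibility of $I$ minus the derivative of the linearized map. The remaining remainder bounds are routine given the moment conditions in (iii).
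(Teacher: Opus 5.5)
Your proposal is correct and follows essentially the same route as the paper: a row-wise mean-value expansion of the sample first-order condition $\widehat G(\theta)'\widehat W\widehat g(\theta)=0$ around $\theta_0$. The curvature converges to $A$, or to $A+B$ at the iterated fixed point. The term $\sqrt n F_n(\theta_0)$ is linearized into the moment, Jacobian, and weight-matrix channels via the two vectorization identities. The only difference is cosmetic: you expand $\vect\widehat W(\widehat\phi)$ around $\phi_0$ and fold the $S(\widehat\phi-\phi_0)$ term into the weight channel, whereas the paper does a joint mean-value expansion in $(\theta,\phi)$ with $F_{n,\phi}\to B$ (and a total derivative in $\theta$ for the iterated case), which is the same computation.
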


Each bracket contains a moment channel ($G'W\nu_i$) and misspecification channels (Jacobian, weight-matrix, and first-step). For ease of reference, let
\[
    M_\nu = -A^{-1}G'W, \quad M_\gamma = -A^{-1}(g'W\otimes I_p), \quad M_\omega = -A^{-1}(g'\otimes G').
\]
The iterated case differs from the one-step estimated-weight case only in replacing $A^{-1}$ with $(A+B)^{-1}$. The matrices $M_\gamma$, $M_\omega$, and $B$ each contain $g$ as a factor and vanish identically when $g = 0$.

\begin{remark}
    \label{rem:agsbias}
    The moment-channel coefficient $M_\nu$ is the asymptotic bias of the one-step GMM pseudo-true value under an additive shift of the moment level. With the weight matrix $W$ held fixed, the pseudo-true value under the shifted moment $g(\theta)+\delta$ solves the first-order condition
    \[
        \mathbb E[G(X_i,\theta)]'W\bigl(g(\theta)+\delta\bigr)=0 .
    \]
    Differentiating in $\theta'$ at $\theta_0$ and $\delta=0$ gives $G'WG$ from the moment factor and, from differentiating the Jacobian factor against the moment value $g$, the curvature term $H=(g'W\otimes I_p)R$, which together form $A=G'WG+H$; differentiating in $\delta'$ gives $G'W$. The implicit function theorem then yields
    \[
        \frac{\partial\theta_0}{\partial\delta'}=-A^{-1}G'W=M_\nu .
    \]
    AGS measure sensitivity at a correctly specified baseline through a local perturbation that vanishes at rate $n^{-1/2}$; there the moment value is zero, $H$ vanishes, and the bias coefficient is $\Lambda_{AGS}=-(G'WG)^{-1}G'W$. They note that under a fixed alternative the pseudo-true value shifts by approximately $\Lambda_{AGS}$ times the induced moment change, but do not adopt this fixed-misspecification regime, because the deviation then grows large relative to the sampling error as the sample size increases. We work in exactly that regime: at the misspecified pseudo-true value $g\neq0$, so the coefficient is instead $M_\nu=-(G'WG+H)^{-1}G'W$, which augments the curvature matrix $G'WG$ of $\Lambda_{AGS}$ by $H$, the population counterpart of the curvature term carried by the finite-sample sample-sensitivity matrix of AGS and driven to zero by their correct-specification asymptotics. The two coincide when $H=0$, in particular under zero moment curvature ($R=0$) such as in linear instrumental variables. A general local deviation of the data-generating process also moves $G$ and $W$, activating the Jacobian and weight-matrix channels, so the bias is then governed by the full misspecification-robust sensitivity $\Lambda$ rather than by $M_\nu$ alone.
\end{remark}

Versions of these influence functions appear in \citet*{hall2003large}, \citet*{hwang2022doubly}, and \citet*{hansen2021inference}. Continuously Updating GMM (CUGMM), which jointly updates the parameter and the weight matrix, presents distinct theoretical challenges under misspecification, including non-convexity and the risk of variance collapse \citep*{kleibergen2025double}. We provide analysis of CUGMM, including its consistency and a derivation of the misspecification-robust influence function, in Appendix~\ref{sec:cugmm}.

Under $g = 0$, $M_\gamma = M_\omega = 0$, so $\psi(X_i) = M_\nu\nu_i$, $\Lambda$ collapses to $M_\nu = \Lambda_{AGS}$, and $\Delta_k = 1$ for every $k$. The condition $g = 0$ is sufficient for $\Delta_k = 1$ but not necessary. $\Delta_k$ can equal one under misspecification when the misspecification channels lie in the linear span of the moments. The next corollary states the necessary and sufficient condition.

\begin{corollary}[Misspecification and Structural Efficiency]
    \label{cor:miss}
    Suppose $\mathbb E[\psi_k(X_i)^2]>0$. For each estimator in Proposition~\ref{prp:gmm_ifs}, write
    \[
        \psi_k(X_i) = (M_\nu)_k\nu_i + r_k(X_i),
    \]
    where $(M_\nu)_k$ is the $k$th row of $M_\nu$ and $r_k(X_i)$ collects all misspecification-channel terms. Let $\mathcal L(\nu_i)$ denote the closed linear span of $\nu_i$ in $L^2(P)$ and $\Pi_\nu^\perp$ the orthogonal projection onto its complement. Then
    \[
        \Delta_k = 1 - \frac{\mathbb E[(\Pi_\nu^\perp r_k(X_i))^2]}{\mathbb E[\psi_k(X_i)^2]}.
    \]
    In particular, $\Delta_k = 1$ if and only if $r_k(X_i)\in\mathcal L(\nu_i)$, and $\Delta_k < 1$ if and only if $\mathbb E[(\Pi_\nu^\perp r_k(X_i))^2]>0$.
\end{corollary}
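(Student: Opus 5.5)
The plan is to read $\Delta_k$ as an $R^2$ in the Hilbert space $L^2(P)$, so that the corollary follows from the Pythagorean theorem applied to the decomposition of $\psi_k$ from Proposition~\ref{prp:gmm_ifs}.

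First I identify the numerator of \eqref{eq:ifdelta} with a squared projection norm. Let $\Pi_\nu$ be the orthogonal projection onto $\mathcal L(\nu_i)$. Because $\nu_i$ is a finite $q$-vector and $\mathbb E[\nu\nu']$ is nonsingular (as assumed implicitly in \eqref{eq:iflambda}), $\mathcal L(\nu_i)$ is the finite-dimensional space $\{c'\nu_i : c\in\mathbb R^q\}$. The projection is then the population least-squares fit
\[
\Pi_\nu\psi_k=\mathbb E[\psi_k\nu']\mathbb E[\nu\nu']^{-1}\nu_i .
\]
This can be checked from the normal equations $\mathbb E[(\psi_k-\Pi_\nu\psi_k)\nu']=0$. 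From it,
\[
\mathbb E[(\Pi_\nu\psi_k)^2]=\mathbb E[\psi_k\nu']\mathbb E[\nu\nu']^{-1}\mathbb E[\nu\psi_k],
\]
which is exactly the numerator of \eqref{eq:ifdelta}. Pythagoras gives $\mathbb E[\psi_k^2]=\mathbb E[(\Pi_\nu\psi_k)^2]+\mathbb E[(\Pi_\nu^\perp\psi_k)^2]$. Since $\mathbb E[\psi_k^2]>0$ by hypothesis, we get
\[
\Delta_k=1-\frac{\mathbb E[(\Pi_\nu^\perp\psi_k)^2]}{\mathbb E[\psi_k^2]} .
\]

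Second, I use the decomposition $\psi_k=(M_\nu)_k\nu_i+r_k(X_i)$. For the iterated estimator, $(M_\nu)_k$ is taken as the $k$th row of $-(A+B)^{-1}G'W$, consistent with how the statement defines $r_k$ as everything else. The first term lies in $\mathcal L(\nu_i)$, so by linearity of $\Pi_\nu^\perp$ we have $\Pi_\nu^\perp\psi_k=\Pi_\nu^\perp r_k$. Substituting this into the displayed formula gives the claimed expression for $\Delta_k$.

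For the equivalences, $\Delta_k=1$ holds if and only if $\|\Pi_\nu^\perp r_k\|_{L^2}=0$. Since $\mathcal L(\nu_i)$ is closed, this is equivalent to $r_k\in\mathcal L(\nu_i)$. The second equivalence is the same statement negated, using $\Delta_k\le 1$.

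The only step that needs care is checking that each $r_k$ belongs to $L^2(P)$, so that the projection is well defined. The terms $\gamma_i$, $\omega_i$, and $\psi^\phi$ have finite second moments by Assumption~\ref{ass:regularity}(iii) and (vi). The coefficient matrices $A^{-1}$ and $(A+B)^{-1}$ exist by (iv). Beyond these checks the argument is routine.
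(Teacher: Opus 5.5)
Your proposal is correct and matches the paper's proof. The paper likewise observes that $(M_\nu)_k\nu_i\in\mathcal L(\nu_i)$, so $\Pi_\nu^\perp\psi_k=\Pi_\nu^\perp r_k$, and then reads $\Delta_k$ as the population $R^2$; you simply make that $R^2$ identity explicit through the normal equations and Pythagoras. Your choice of $-(A+B)^{-1}G'W$ for the iterated case is immaterial, since the argument only uses that the first term lies in $\mathcal L(\nu_i)$.
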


Among the channels of Proposition~\ref{prp:gmm_ifs}, the weight-matrix channel is the one the researcher controls through the choice of weight matrix. This choice is most transparent in minimum distance estimation, where the moment is the matching residual $g(X_i,\theta)=m_i-f(\theta)$: $m_i$ collects the data moments, with mean $\mu$ and covariance $V=\operatorname{Var}(m_i)$, and $f$ is the model-implied map. Minimum distance isolates the weight-matrix channel. The centered moment $\nu_i=m_i-\mu$ and the covariance $V$ do not depend on $\theta$, so the optimal weight $\widehat W=\widehat V^{-1}$, the inverse of the sample covariance of the data moments, involves no preliminary parameter estimate and the first-step channel is absent; the Jacobian $-\partial f(\theta)/\partial\theta'$ is nonrandom, so $\gamma_i=0$ and the Jacobian channel vanishes. Only the moment and weight-matrix channels remain, and the influence function of the optimal minimum distance (OMD) estimator can be characterized in full. Minimum distance is also the setting where the choice between the optimal and a diagonal weight is a long-standing practical concern \citep*{altonji1996small,cheng2026weight}, and the setting of the BPP application in Section~\ref{sec:bpp}. Proposition~\ref{prp:eff_channel} derives the OMD influence function in closed form and relates the resulting variance and informativeness to the population overidentification criterion $g'Wg$.

\begin{proposition}[Optimal Minimum Distance]
    \label{prp:eff_channel}
    Suppose Assumption~\ref{ass:regularity}(vi)(a) holds. The influence function
    of the OMD estimator is
    \[
        \psi(X_i)=M_\nu\nu_i\,(1-\nu_i'Wg).
    \]
    Let
    $V_0=\operatorname{Var}(M_\nu\nu_i)=A^{-1}G'WGA^{-1}$ denote the asymptotic
    variance under the fixed weight $W$, and $V_{0,kk}=\operatorname{Var}[(M_\nu\nu_i)_k]$
    its $k$th diagonal entry. Then, for each coordinate $k$,
    \[
        \operatorname{Var}[\psi_k(X_i)]=(1+g'Wg)\,V_{0,kk}+\kappa_k,\qquad
        \kappa_k=\operatorname{Cov}\!\big((M_\nu\nu_i)_k^2,\,(1-\nu_i'Wg)^2\big),
    \]
    and $\Delta_k\le1$, with equality if $g=0$. If $\nu_i$ is Gaussian,
    $\kappa_k=0$ and
    \[
        \operatorname{Var}[\psi(X_i)]=(1+g'Wg)\,V_0,\qquad
        \Delta_k=\frac{1}{1+g'Wg}\quad\text{for every }k,
    \]
    uniformly below one under misspecification.
\end{proposition}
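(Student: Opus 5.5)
The plan is to specialise Proposition~\ref{prp:gmm_ifs}(ii) to the minimum distance moment $g(X_i,\theta)=m_i-f(\theta)$ with the estimated weight $\widehat W=\widehat V^{-1}$. Then I would compute the moments of the resulting influence function directly.

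\textbf{Influence function.} Because the Jacobian $G=-\partial f(\theta_0)/\partial\theta'$ is nonrandom, $\gamma_i=0$ and the Jacobian channel drops out. The weight does not depend on $\theta$, so case (ii) applies with $\nu_i=m_i-\mu$ and no first-step term. Next I need $\omega_i$. Since $\mathbb E[\nu_i]=0$, centering $\widehat V$ at the sample mean contributes only $o_p(n^{-1/2})$, so the influence function of $\widehat V$ is $\vect(\nu_i\nu_i'-V)$. The delta method for matrix inversion then gives
\[
\omega_i=-\vect\{W(\nu_i\nu_i'-V)W\},\qquad W=V^{-1}.
\]
Using $(g'\otimes G')\vect(\Omega)=\vect(G'\Omega g)=G'\Omega g$, the weight channel becomes
\[
(g'\otimes G')\omega_i=-G'W\nu_i\,\nu_i'Wg+G'WVWg=-G'W\nu_i\,\nu_i'Wg+G'Wg .
\]
The population first-order condition for the pseudo-true value is $G'Wg=0$, since the Jacobian factor is nonrandom. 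So the last term vanishes and the bracket in Proposition~\ref{prp:gmm_ifs}(ii) collapses to $G'W\nu_i(1-\nu_i'Wg)$. Multiplying by $-A^{-1}$ gives $\psi(X_i)=M_\nu\nu_i(1-\nu_i'Wg)$. I regard getting $\omega_i$ and this first-order-condition cancellation right as the main step; everything after it is moment algebra.

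\textbf{Variance decomposition.} As a sanity check, $\mathbb E[\psi]=-M_\nu VWg=-M_\nu g=A^{-1}G'Wg=0$. Write $a_k=(M_\nu\nu_i)_k$ and $b=1-\nu_i'Wg$. Then
\[
\operatorname{Var}[\psi_k]=\mathbb E[a_k^2b^2]=\mathbb E[a_k^2]\,\mathbb E[b^2]+\operatorname{Cov}(a_k^2,b^2).
\]
Here $\mathbb E[b^2]=1+g'WVWg=1+g'Wg$, and $\mathbb E[a_k^2]$ is the $k$th diagonal entry of $M_\nu VM_\nu'=A^{-1}G'WGA^{-1}=V_0$. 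This yields the displayed formula with $\kappa_k$. The bound $\Delta_k\le1$ holds because $\Delta_k$ is a population $R^2$ (equivalently, by Corollary~\ref{cor:miss}). When $g=0$ we have $\psi=M_\nu\nu_i$, which lies in the span of $\nu_i$, so $\Delta_k=1$.

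\textbf{Gaussian case.} The vectors $M_\nu\nu_i$ and $\nu_i'Wg$ are jointly normal with covariance $M_\nu VWg=M_\nu g=0$, so they are independent. Hence $\kappa_k=0$, and at the matrix level
\[
\operatorname{Var}[\psi]=\mathbb E[M_\nu\nu_i\nu_i'M_\nu']\,\mathbb E[b^2]=(1+g'Wg)V_0 .
\]
For the numerator of $\Delta_k$, I would use $\mathbb E[\psi\nu_i']=M_\nu V-M_\nu\mathbb E[\nu_i\nu_i'(\nu_i'Wg)]=M_\nu V$, because third moments of a centered Gaussian vanish. Thus $\Lambda=M_\nu$, and the explained variance is $M_\nu VM_\nu'=V_0$. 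It follows that $\Delta_k=V_{0,kk}/\{(1+g'Wg)V_{0,kk}\}=1/(1+g'Wg)$, which is strictly below one whenever $g\neq0$.
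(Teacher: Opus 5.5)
Your proof is correct and follows the paper's argument essentially step for step. You specialize Proposition~\ref{prp:gmm_ifs}(ii), obtain $\omega_i$ by the delta method for $\widehat V^{-1}$, and use $WVW=W$ together with $G'Wg=0$ to reach the rank-one factorization $M_\nu\nu_i(1-\nu_i'Wg)$; the moment algebra and Gaussian independence then give the variance and informativeness results. The only differences are cosmetic: you note explicitly that centering $\widehat V$ at $\bar m$ is negligible, and in the Gaussian case you compute $\Delta_k$ directly from $\Lambda=M_\nu$ and the explained variance $M_\nu VM_\nu'=V_0$, whereas the paper uses the same vanishing-third-moment fact through Corollary~\ref{cor:miss} to show that $r_k$ is orthogonal to $\mathcal L(\nu_i)$.
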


The weight-matrix channel is the moment channel times the mean-zero scalar $\nu_i'Wg$, a product quadratic in $\nu_i$ that lies outside $\mathcal L(\nu_i)$ and, by Corollary~\ref{cor:miss}, lowers $\Delta_k$. The squared multiplier $(1-\nu_i'Wg)^2$ has mean exactly $1+g'Wg$. Relative to the benchmark $V_0$, estimating the optimal weight adds the population overidentification criterion times the moment-channel variance, plus the cumulant $\kappa_k$. Under a fixed weight, e.g., equally weighted minimum distance (EWMD) which fixes the weight at the identity, the channel is absent, $\psi(X_i)=M_\nu\nu_i$, and $\Delta_k=1$ regardless of misspecification. The gap is therefore the informativeness cost of estimating the efficient weight. This structure is specific to weights that invert the covariance matrix of the moments. A generic data-dependent weight has a weight-matrix channel too, but one that neither factors through a single scalar nor scales with the population overidentification criterion.

The diagonally weighted minimum distance (DWMD) estimator is another popular choice in practice. The DWMD weight is $\widehat D=(\operatorname{diag}\widehat V)^{-1}$, the inverse of the diagonal of the same sample covariance that OMD inverts in full: each moment is weighted by the inverse of its own sampling variance, $\widehat D_{kk}=1/\widehat V_{kk}$.\footnote{This inverse-variance weighting follows \citet*{chatterjee2021estimating} and \citet*{cheng2026weight}. It is distinct from $\operatorname{diag}(\widehat V^{-1})$, the diagonal of the optimal weight, which coincides with it only when $V$ is diagonal; either replaces the rank-one outer product with a full-rank diagonal.} Estimating this diagonal weight contributes a weight-matrix channel through the influence function $\omega_i^D$ of $\widehat D$, but one that does not simplify as in OMD. The $k$th diagonal entry $1/\widehat V_{kk}$ has influence function $-(\nu_{ik}^2-V_{kk})/V_{kk}^2$, where $\nu_{ik}$ is the $k$-th coordinate of $\nu_i$, so the channel is a sum of per-moment terms,
\[
    (g'\otimes G')\,\omega_i^D
    = -\sum_{k=1}^{q} G_{k\cdot}'\,g_k\,\frac{\nu_{ik}^2-V_{kk}}{V_{kk}^2}
    = -\sum_{k=1}^{q} G_{k\cdot}'\,g_k\,\frac{\nu_{ik}^2}{V_{kk}^2},
\]
where $G_{k\cdot}$ is the $k$th row of $G$ and the second equality uses the DWMD (population) first-order condition $G'(\operatorname{diag}V)^{-1}g=0$. Each term is quadratic in a single coordinate $\nu_{ik}$ rather than the scalar $\nu_i'Wg$ behind the OMD factorization. It neither factors through one scalar nor scales with the population overidentification criterion $g'Wg$: by keeping only the diagonal of the weight, DWMD discards the cross-moment information that the efficient weight uses.

\begin{remark}
    \label{rem:altonji}
    Applied researchers frequently estimate with a non-optimal weight matrix such as 
    equal weighting, diagonal weighting, or one-step GMM. \citet*{altonji1996small} found that EWMD often outperforms
    OMD in finite samples and attributed the gap to sampling error in the
    estimated weight. Proposition~\ref{prp:eff_channel} provides an asymptotic
    counterpart of their finding: the optimal weight contributes a weight-matrix channel whose variance is the overidentification criterion times the moment-channel variance (exactly so under Gaussian moments, up to the cumulant $\kappa_k$ in general), while a fixed weight contributes none. The practice trades
    efficiency for informativeness, and the trade is favorable under
    misspecification, where the optimal weight is in any case not variance
    minimizing. The simpler estimator does not dominate, however: it is typically less precise. In the BPP application of Section~\ref{sec:bpp}, the
    misspecification-robust standard error of $\phi$ rises from $0.043$ under
    OMD to $0.113$ under DWMD, and the point estimates diverge, which is itself evidence of misspecification. The simpler weight preserves informativeness, with $\Delta\approx 1$, and forgoes only an efficiency gain that the optimal weight no longer delivers.
\end{remark}

\smallskip

\begin{remark}
    \label{rem:scaling}
    The OMD and DWMD channels scale differently with the number of moments $q$. Take $V = I_q$ with $\nu_i$ Gaussian, so the optimal weight is $W = I_q$ and coincides with the DWMD weight $(\operatorname{diag}V)^{-1}$ and the EWMD weight: the three estimators share the same population weight and differ only in which parts of it are estimated, the full matrix for OMD, the diagonal for DWMD, none for EWMD. Suppose each moment is misspecified by the same amount, $g_k = \delta$ for all $k$, giving $g'Wg = q\delta^2$. By Proposition~\ref{prp:eff_channel} (with $\kappa_k = 0$ under Gaussianity\footnote{Empirically the cumulant correction need not be small: in the BPP application of Section~\ref{sec:bpp}, fourth cumulants account for roughly $27\%$ of the variance of the weight-matrix channel for $\phi$.}), the OMD weight-matrix channel has variance $q\delta^2\,(G'G)_{kk}$; for DWMD the channel is $-G'(g\circ\nu_i\circ\nu_i)$ and by independence of $\nu_{ij}^2$ across $j$ has variance $2\delta^2\,(G'G)_{kk}$. Both channels are quadratic in $\nu_i$ and orthogonal to $\mathcal L(\nu_i)$, so they enter informativeness in full. The moment channel $G'\nu_i$ is common to the three estimators, so its variance $(G'G)_{kk}$ cancels in the ratio and
    \[
        \Delta_k^{OMD} = \frac{1}{1+q\delta^2},\qquad \Delta_k^{DWMD} = \frac{1}{1+2\delta^2},\qquad \Delta_k^{EWMD} = 1.
    \]
    The OMD channel grows with $q$ while the DWMD channel does not; in the BPP application of Section~\ref{sec:bpp}, $q = 325$, and the estimated informativeness for OMD is much lower than that of DWMD. 
\end{remark}

The influence function for two-step GMM derived in
Proposition~\ref{prp:gmm_ifs}(iii) has an intuitive decomposition. It consists
of a direct estimation effect and an indirect effect arising from the first-step
estimator \(\widehat{\phi}\). This structure allows us to formally characterize
the sensitivity of the second-step estimator to the first-step estimator.

\begin{proposition}[Sensitivity to the First-Step Estimator]
    \label{prp:sens_to_1s}
    Suppose Assumption~\ref{ass:regularity} holds. Let \(\widehat\phi\) be a
    first-step estimator with probability limit \(\phi_0\), and let
    \(\widehat\theta(\widehat\phi)\) denote the corresponding second-step GMM
    estimator defined as the solution to the sample first-order condition given
    \(\widehat\phi\). Then the sensitivity of the second-step estimator to the
    first-step estimator is given by
    \[
        \Lambda_{\phi}
        = \text{plim} \frac{\partial \widehat\theta(\widehat\phi)}{\partial
            \widehat\phi'}
        = -A^{-1}B,
    \]
    where \(A\) and \(B\) are evaluated at \((\theta_0,\phi_0)\).
\end{proposition}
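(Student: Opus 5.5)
The plan is to apply the implicit function theorem to the sample first-order condition and then pass to probability limits. Define the map
\[
F_n(\theta,\phi)=\widehat G(\theta)'\,\widehat W(\phi)\,\widehat g(\theta),
\qquad \widehat G(\theta)=n^{-1}\textstyle\sum_i G(X_i,\theta),
\]
so that $\widehat\theta(\phi)$ is defined by $F_n(\widehat\theta(\phi),\phi)=0$ for $\phi$ in the neighborhood $\mathcal N$ of Assumption~\ref{ass:regularity}(vi)(b). By Assumption~\ref{ass:regularity}(iii) and (vi)(b), with probability approaching one $F_n$ is continuously differentiable in $(\theta,\phi)$ near $(\theta_0,\phi_0)$. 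Whenever $\partial F_n/\partial\theta'$ is nonsingular there, the implicit function theorem gives a locally unique, differentiable solution map with
\[
\frac{\partial \widehat\theta(\phi)}{\partial\phi'}=-\Big[\frac{\partial F_n}{\partial\theta'}\Big]^{-1}\frac{\partial F_n}{\partial\phi'}\Big|_{(\widehat\theta(\phi),\phi)} .
\]
Evaluating at $\phi=\widehat\phi$ reduces the claim to computing the probability limits of the two derivative blocks at $(\widehat\theta,\widehat\phi)$.

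\textbf{The $\theta$-block.} Differentiating $F_n$ in $\theta$ gives
\[
\frac{\partial F_n}{\partial\theta'}=\widehat G'\widehat W\widehat G+(\widehat g'\widehat W\otimes I_p)\widehat R(\theta),
\qquad \widehat R(\theta)=n^{-1}\textstyle\sum_i \partial\vect(G(X_i,\theta)')/\partial\theta' .
\]
The second term comes from differentiating $\vect(\widehat G')$ against the fixed vector $\widehat W\widehat g$, using the same convention $\vect(G')$ as in the identity preceding Proposition~\ref{prp:gmm_ifs}.

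\textbf{The $\phi$-block.} Writing $\widehat G'\widehat W(\phi)\widehat g=(\widehat g'\otimes\widehat G')\vect\widehat W(\phi)$ gives
\[
\frac{\partial F_n}{\partial\phi'}=(\widehat g(\theta)'\otimes\widehat G(\theta)')\,\frac{\partial\vect\widehat W(\phi)}{\partial\phi'} .
\]

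\textbf{Passing to limits.} I would invoke uniform laws of large numbers for $\widehat g$, $\widehat G$ and $\widehat R$ over $\Theta$. These follow from compactness, continuity and the dominance conditions in Assumption~\ref{ass:regularity}(iii); the third-derivative bound gives equicontinuity of $\widehat R$. I would combine them with the uniform convergence of $\widehat W(\phi)$ and its derivative on $\mathcal N$ from (vi)(b), and with consistency $\widehat\theta\to_p\theta_0$ and $\widehat\phi\to_p\phi_0$. The standard argument ``uniform convergence plus continuity of the limit plus a consistent evaluation point'' then yields
\[
\frac{\partial F_n}{\partial\theta'}\to_p G'WG+(g'W\otimes I_p)R=A,
\qquad
\frac{\partial F_n}{\partial\phi'}\to_p (g'\otimes G')S=B,
\]
with $W=W(\phi_0)$ and $S=\partial\vect W(\phi)/\partial\phi'|_{\phi_0}$. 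These are exactly the $A$ and $B$ of the statement, evaluated at $(\theta_0,\phi_0)$. Nonsingularity of $A$ (Assumption~\ref{ass:regularity}(iv)) together with continuity of matrix inversion gives that $\partial F_n/\partial\theta'$ is invertible with probability approaching one. Its inverse then converges to $A^{-1}$, and the continuous mapping theorem yields $\Lambda_\phi=-A^{-1}B$.

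\textbf{Main obstacle.} The delicate step is justifying the implicit function theorem uniformly in a random neighborhood. One must show that with probability approaching one the solution $\widehat\theta(\phi)$ exists, is unique near $\theta_0$ for all $\phi$ near $\widehat\phi$, and coincides with the GMM minimizer. I would handle this by working on the event where $\sup\|\partial F_n/\partial\theta'-A\|$ over a shrinking neighborhood of $(\theta_0,\phi_0)$ is below $\tfrac12\|A^{-1}\|^{-1}$. On that event a quantitative (Newton–Kantorovich type) inverse function argument gives a unique local root. Consistency of the minimizer, together with interiority of $\theta_0$, then places $\widehat\theta$ at that root.

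As a consistency check, linearizing $F_n(\widehat\theta,\widehat\phi)=0$ in both arguments reproduces the $-A^{-1}B\psi^\phi$ term of Proposition~\ref{prp:gmm_ifs}(iii).
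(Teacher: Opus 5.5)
Your proposal is correct and follows the paper's proof essentially step for step: you apply the implicit function theorem to $F_n(\theta,\phi)=\widehat G(\theta)'\widehat W(\phi)\widehat g(\theta)$, then use the ULLN, the uniform convergence of $\widehat W(\phi)$ and its derivative from Assumption~\ref{ass:regularity}(vi)(b), and consistency of $(\widehat\theta,\widehat\phi)$ to get $F_{n,\theta}\xrightarrow{p}A$ and $F_{n,\phi}\xrightarrow{p}B$, and you conclude with nonsingularity of $A$ and the continuous mapping theorem. Your additional argument that a unique local root exists with probability approaching one and coincides with the GMM estimator tightens a step the paper takes for granted, but the argument is otherwise the same.
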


Proposition~\ref{prp:sens_to_1s} shows that the indirect component of the two-step GMM influence function arises because first-step estimation uncertainty enters through the iteration map. The matrix \(A^{-1}B\) represents the linearization of a single iteration step, as in the two-step estimator. Iterated GMM repeatedly applies the same iteration map, so the local behavior of the iterated procedure is governed by successive applications of this operator. The contraction condition in Assumption~\ref{ass:regularity}(v) is imposed on the population iteration map underlying the two-step estimator. Linearizing this map around \(\theta_0\) yields the Jacobian \(-A^{-1}B\), so the contraction condition is equivalent to requiring \(\rho(-A^{-1}B)<1\), where \(\rho(\cdot)\) denotes the spectral radius.

Under this condition, the $s$-step influence functions follow a recursion
driven by repeated application of $A^{-1}B$. Applying
Proposition~\ref{prp:gmm_ifs}(iii) with the $(s-1)$-step estimator as the
first step,
\[
    \psi^{ss}(X_i) = -A_s^{-1}\big[f_{2,s}(X_i) + B_s\,\psi^{(s-1)(s-1)}(X_i)\big],
    \qquad \psi^{11}(X_i)=\psi^{1s,W}(X_i),
\]
where $f_{2,s}$ collects the moment, Jacobian, and weight-matrix channels of
Proposition~\ref{prp:gmm_ifs}(iii) and $A_s$, $B_s$, $f_{2,s}$ are evaluated
at the step-specific pseudo-true values $(\theta^{(s)},\theta^{(s-1)})$.
\citet*{hansen2021inference} establish that $\theta^{(s)}$ converges to the
fixed point $\theta_0$ geometrically; Lemma~\ref{lem:estimand} in the
Appendix states this result in the form used in the proof of
Proposition~\ref{prp:variance_reduction}. The next proposition shows that the
$s$-step influence function converges to the iterated influence function at
the same geometric rate, yielding a fixed-point interpretation of iterated
GMM under misspecification.

\begin{proposition}[Geometric convergence of iterated GMM]
    \label{prp:variance_reduction}
    Suppose Assumption~\ref{ass:regularity} holds, strengthened so that the
    weight influence function $\omega_i(\cdot)$ is locally Lipschitz in
    $L^2(P)$ and $S(\cdot)$ is locally Lipschitz near $\theta_0$, and that
    the first-step probability limit lies in the neighborhood $\mathcal N_0$
    of $\theta_0$ given in Lemma~\ref{lem:estimand}. For every
    $\tilde\rho\in(\rho(-A^{-1}B),1)$ there is a constant $C_{\tilde\rho}>0$
    such that, for all $s\ge1$,
    \[
        \|\psi^{ss}(X_i)-\psi^{it}(X_i)\|_{L^2(P)}
        \;\le\; C_{\tilde\rho}\,\tilde\rho^{\,s-1},
    \]
    where $\|V\|_{L^2(P)}=(\mathbb E\|V\|^2)^{1/2}$. Consequently
    $\operatorname{Var}(\psi^{ss})\to\operatorname{Var}(\psi^{it})$ and, for
    every $k$ with $\operatorname{Var}[\psi^{it}_k(X_i)]>0$,
    $\Delta_k^{ss}\to\Delta_k^{it}$, both at the rate $\tilde\rho^{\,s-1}$.
\end{proposition}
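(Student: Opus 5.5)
We compare the recursion with the fixed-point equation and bound the error by a perturbed linear recursion. From Proposition~\ref{prp:gmm_ifs}(iv), $(A+B)\psi^{it}=-f$, where $f=G'W\nu_i+(g'W\otimes I_p)\gamma_i+(g'\otimes G')\omega_i$ is evaluated at $(\theta_0,\theta_0)$. This can be rewritten as the fixed-point identity
\[
    \psi^{it}=-A^{-1}\bigl[f+B\,\psi^{it}\bigr].
\]
Here $A$, $B$, $f$ are the step-specific objects $A_s$, $B_s$, $f_{2,s}$ evaluated at $(\theta^{(s)},\theta^{(s-1)})=(\theta_0,\theta_0)$. Subtracting this identity from the recursion
\[
    \psi^{ss}=-A_s^{-1}\bigl[f_{2,s}+B_s\psi^{(s-1)(s-1)}\bigr]
\]
gives, with $e_s=\psi^{ss}-\psi^{it}$,
\[
    e_s=(-A^{-1}B)\,e_{s-1}+\eta_s,
\]
\[
    \eta_s=-\bigl(A_s^{-1}f_{2,s}-A^{-1}f\bigr)-\bigl(A_s^{-1}B_s-A^{-1}B\bigr)\psi^{(s-1)(s-1)}.
\]

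\textbf{Step 1: bound the perturbation $\eta_s$ in $L^2(P)$.} By Lemma~\ref{lem:estimand}, $\|\theta^{(s)}-\theta_0\|\le C\tilde\rho_1^{\,s}$ for any $\tilde\rho_1\in(\rho(-A^{-1}B),1)$. This uses the assumption that the first-step limit lies in $\mathcal N_0$. The building blocks $G$, $R$, $g$, $W(\cdot)$ and $S(\cdot)$ are continuously differentiable or locally Lipschitz near $\theta_0$, by Assumption~\ref{ass:regularity}(iii) and the strengthened hypotheses. Hence $A_s$ and $B_s$ are within $O(\tilde\rho_1^{\,s-1})$ of $A$ and $B$, and $A_s^{-1}$ is uniformly bounded for large $s$ by nonsingularity of $A$.

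The random parts of $f_{2,s}$ need separate treatment:
\begin{itemize}
    \item $\nu_i$ and $\gamma_i$ evaluated at $\theta^{(s)}$ are controlled by the dominated envelopes in (iii) via the mean value theorem, giving $L^2$-Lipschitz dependence on the parameter.
    \item $\omega_i$ evaluated at $\theta^{(s-1)}$ is controlled by the assumed $L^2$-Lipschitz property.
\end{itemize}
This yields $\|A_s^{-1}f_{2,s}-A^{-1}f\|_{L^2}\le C\tilde\rho_1^{\,s-1}$.

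The second term of $\eta_s$ needs $\sup_s\|\psi^{(s-1)(s-1)}\|_{L^2}<\infty$. I would establish this a priori from the same recursion, since its multiplier is eventually contractive in a suitable norm. Together these give $\|\eta_s\|_{L^2}\le C\tilde\rho_1^{\,s-1}$.

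\textbf{Step 2: unroll the linear recursion.} Set $T=-A^{-1}B$. Since $\rho(T)<\tilde\rho_1$, Gelfand's formula gives $\|T^m\|\le C\tilde\rho_1^{\,m}$ in operator norm. Matrix multiplication acts on $L^2(P;\mathbb R^p)$ with the same operator norm, so
\[
    e_s=T^{s-1}e_1+\sum_{j=2}^{s}T^{s-j}\eta_j .
\]
Therefore
\[
    \|e_s\|_{L^2}\le C\tilde\rho_1^{\,s-1}\|e_1\|+C\sum_{j} \tilde\rho_1^{\,s-j}\tilde\rho_1^{\,j-1}\le C'(s)\,\tilde\rho_1^{\,s-1},
\]
where $C'(s)$ grows linearly in $s$. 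Choosing $\tilde\rho_1<\tilde\rho$ absorbs the factor $s$: $s\tilde\rho_1^{\,s-1}\le C\tilde\rho^{\,s-1}$. This gives the claimed bound with some constant $C_{\tilde\rho}$.

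\textbf{Step 3: transfer to variances and informativeness.} By the bound
\[
    \|\operatorname{Var}(\psi^{ss})-\operatorname{Var}(\psi^{it})\|\le \|e_s\|\bigl(\|\psi^{ss}\|+\|\psi^{it}\|\bigr),
\]
the variances converge at rate $\tilde\rho^{\,s-1}$. The same rate holds for the cross-moments $\mathbb E[\psi_k\nu']$, because the fixed $\nu_i$ has finite second moment. Definition~\ref{def:info} writes $\Delta_k$ as a smooth function of these moments. Its denominator $\operatorname{Var}[\psi^{it}_k]$ is bounded away from zero, so $\Delta_k$ is locally Lipschitz there and $\Delta_k^{ss}\to\Delta_k^{it}$ at the same rate.

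\textbf{Main obstacle.} I expect the main obstacle to be Step 1, specifically $L^2$ control of the random components $\gamma_i$ and $\omega_i$ at the drifting pseudo-true values, together with the uniform bound on $\psi^{(s-1)(s-1)}$. The first thing I would try is the mean value theorem with the sup-envelopes of Assumption~\ref{ass:regularity}(iii), combined with the assumed Lipschitz property of $\omega_i(\cdot)$.
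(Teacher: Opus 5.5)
Your main $L^2(P)$ estimate is correct. It rests on the same ingredients as the paper's proof:
\begin{itemize}
\item the geometric drift of $\theta^{(s)}$ from Lemma~\ref{lem:estimand};
\item $L^2(P)$-Lipschitz control of the step-specific coefficients and of $\nu_i,\gamma_i,\omega_i$;
\item an adapted-norm (Gelfand) bound on powers of $-A^{-1}B$;
\item absorption of a slightly slower rate into $\tilde\rho$.
\end{itemize}
Only the bookkeeping differs. The paper splits $\psi^{ss}-\psi^{it}$ through an idealized recursion $\psi^{(s)}$ with all matrices frozen at $\theta_0$. The piece $\psi^{(s)}-\psi^{it}=(-A^{-1}B)^{s-1}(\psi^{1s,W}-\psi^{it})$ is pure linear algebra. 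The evaluation error $\psi^{ss}-\psi^{(s)}$ then satisfies a recursion with step-specific multipliers $-A_s^{-1}B_s$ and a forcing term involving only the manifestly bounded $\psi^{(s-1)}$. The cost is that the paper must control the products $\prod_k(-A_k^{-1}B_k)$ through a convergent infinite product. You instead keep the frozen multiplier $T=-A^{-1}B$ and push all drift into $\eta_s$, so you can unroll with the Gelfand bound directly. Your cost is the a priori bound $\sup_s\|\psi^{ss}\|_{L^2(P)}<\infty$. Your eventual-contraction argument for that bound is sound and is essentially the paper's infinite-product step in another guise, so neither route is materially shorter.

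The one real omission is in Step 3. The $s$-step estimator has probability limit $\theta^{(s)}$. Under Definition~\ref{def:info}, its informativeness $\Delta_k^{ss}$ is therefore the $R^2$ of $\psi^{ss}_k$ on its own centered moment $\nu_i(\theta^{(s)})=g(X_i,\theta^{(s)})-\mathbb E[g(X_i,\theta^{(s)})]$, not on a fixed $\nu_i(\theta_0)$. The projection direction drifts with $s$; this is why the paper cautions that step-wise $\widehat\Delta$ values along the AJRY path are not directly comparable. Your argument ``because the fixed $\nu_i$ has finite second moment'' thus proves convergence of a different quantity.

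The repair uses a bound you already derived in Step 1. The envelope $\mathbb E[\sup_\theta\|G(X,\theta)\|^2]<\infty$ gives $\|\nu_i(\theta^{(s)})-\nu_i(\theta_0)\|_{L^2(P)}=O(\tilde\rho_1^{\,s})$. Hence $\mathbb E[\psi^{ss}\nu_i(\theta^{(s)})']$ and $\mathbb E[\nu_i(\theta^{(s)})\nu_i(\theta^{(s)})']$ converge to their fixed-point counterparts at rate $\tilde\rho^{\,s-1}$. You also need the limiting matrix $\mathbb E[\nu_i\nu_i']$ to be nonsingular, as is already required for $\Lambda$. Matrix inversion is then locally Lipschitz, so the projection coefficients, the projected variance, and hence $\Delta_k^{ss}$ converge at the same rate. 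This is exactly how the paper closes the argument.
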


Propositions~\ref{prp:sens_to_1s} and~\ref{prp:variance_reduction} together provide two practical diagnostics. Proposition~\ref{prp:sens_to_1s} gives $\Lambda_\phi = -A^{-1}B$, the sensitivity of the two-step estimator's probability limit to perturbations in the first-step probability limit. A researcher using two-step GMM can compute $\Lambda_\phi$ at $\widehat\theta$ to assess how much the second-step estimate depends on the first-step choice. Proposition~\ref{prp:variance_reduction} makes the spectral radius $\rho(-A^{-1}B)$ the rate diagnostic: the $s$-step influence function, and with it the misspecification-robust variance and informativeness, approach their iterated limits geometrically at any rate above $\rho(-A^{-1}B)$. To leading order, each weight update shrinks the remaining gap by the factor $\rho$. At the values estimated in our iterating applications, $\rho=0.61$ in AJRY and $0.64$ in BLP (Sections~\ref{sec:ajry} and~\ref{sec:blp}), the gap falls below $10\%$ of its initial size within six and seven steps, respectively. In BPP the OMD weight is parameter-free, so $B=0$, $\rho=0$, and the iteration is trivial. Both diagnostics are computable from the same quantities used to construct misspecification-robust standard errors and require no additional estimation.

\subsection{Discussions on GMM Sensitivity}\label{sec:discussions}

\subsubsection{One-Step GMM}

When the weight matrix \(W\) is deterministic, the influence function is given by Proposition \ref{prp:gmm_ifs}(i):
\[
    \psi^{1s}(X_i) = -A^{-1} \left[ G'W\nu_i + (g'W\otimes I_p)\gamma_i \right] = M_\nu\nu_i + M_\gamma\gamma_i.
\]
This structure reveals how misspecification alters the sensitivity relative to the correctly specified case. The population curvature matrix \(A = G'WG + H\) now includes the term
\(H = (g'W \otimes I_p)R\), which arises from the curvature of the moment function evaluated at a parameter value where the population moments are nonzero.

Under misspecification, the influence function contains the Jacobian channel $M_\gamma\gamma_i$ and, when the weight matrix is estimated, the weight-matrix channel $M_\omega\omega_i$. By Corollary~\ref{cor:miss}, these channels reduce informativeness whenever their sum has a component orthogonal to $\mathcal L(\nu_i)$ in $L^2(P)$. Both channels are proportional to the misspecification vector $g$ and vanish under correct specification. They also vanish in one-step GMM with a deterministic weight matrix when the Jacobian is deterministic across observations ($\gamma_i = 0$): the influence function reduces to $\psi^{1s}(X_i) = M_\nu\nu_i$ and $\Delta_k = 1$ for every parameter, regardless of the Hansen $J$ statistic computed at the efficient weight. Minimum distance estimation \citep*{blundell2008consumption} provides the leading instance, examined in Section~\ref{sec:bpp}. The 2SLS estimator illustrates the case where the Jacobian channel is active, examined in Example~\ref{exm:2sls}.

\begin{example}[2SLS]
    \label{exm:2sls}
    Consider a linear IV model \(Y_i = D_i\theta + \epsilon_i\) with instruments \(Z_i\) and moment function \(g(X_i,\theta) = Z_i(Y_i - D_i\theta)\), estimated by 2SLS with weight \(\widehat W = (n^{-1}Z'Z)^{-1}\). This is one-step GMM with an estimated weight matrix, so Proposition~\ref{prp:gmm_ifs}(ii) gives
    \[
        \psi^{1s,W}(X_i) = M_\nu\nu_i + M_\gamma\gamma_i + M_\omega\omega_i.
    \]
    The moment is linear in \(\theta\), so the Jacobian \(G(X_i) = -Z_iD_i\) does not depend on \(\theta\), giving \(R = 0\), \(H = 0\), and \(A = G'WG\). The estimated weight matrix has influence function \(\omega_i = \vect(-W(Z_iZ_i' - \Omega)W)\), where \(\Omega = \mathbb E[Z_iZ_i']\) and \(W = \Omega^{-1}\). Linearity removes the curvature term \(H\), but it does not remove the misspecification channels. The Jacobian channel \(M_\gamma\gamma_i\) remains active because \(\gamma_i = \vect(G(X_i)' - G') \neq 0\) even though \(R = 0\). The weight channel \(M_\omega\omega_i\) remains active because the weight matrix is estimated. Both coefficients are proportional to \(g\) and vanish only when \(g = \mathbb E[Z_i\epsilon_i] = 0\), in which case \(\psi^{1s,W} = M_\nu\nu_i\), the sensitivity equals \(\Lambda_{AGS}\), and \(\Delta = 1\). Misspecification arises naturally under treatment-effect heterogeneity. With more instruments than parameters, no single \(\theta\) sets all moments to zero, so \(g \neq 0\) and the pseudo-true value is the GMM-weighted average of heterogeneous effects. The two channels are then nonzero, and the components of \(\gamma_i\) and \(\omega_i\) orthogonal to \(\mathcal L(\nu_i)\) drive informativeness below one by Corollary~\ref{cor:miss}. These channels coincide with the additional variance terms derived by \citet*{lee2018consistent} for 2SLS under heterogeneity. Both the sensitivity and the informativeness of 2SLS therefore depart from their correctly specified values.
\end{example}

\subsubsection{Efficient GMM Estimators}

Efficient GMM estimators use a weight matrix that depends on the parameter, \(W(\theta)=\Omega(\theta)^{-1}\) with \(\Omega(\theta)=\mathbb E[g(X_i,\theta)g(X_i,\theta)']\) the uncentered second-moment matrix. In the two-step GMM estimator, the weight matrix is evaluated at a preliminary estimator \(\widehat{\phi}\), so the second-step estimator inherits uncertainty from the first step. Proposition \ref{prp:gmm_ifs}(iii) shows that the influence function takes the form
\[
    \psi^{2s}(X_i) = -A^{-1}\!\left[G'W\nu_i + (g'W\otimes I_p)\gamma_i + (g'\otimes G')\omega_i \right]-A^{-1}B\psi^\phi(X_i).
\]
The matrix \(B\) captures the sensitivity of the optimal weight matrix to the parameter, and the term \(-A^{-1}B\,\psi^\phi(X_i)\) quantifies how first-step estimation uncertainty enters the second-step estimator. Under correct specification, \(g=0\) and hence \(B=0\), recovering the standard result that the choice of the first-step estimator does not affect asymptotic efficiency.

Iterated GMM repeatedly updates the weight matrix until the parameter used in the weight matrix coincides with the estimated parameter. As shown in Proposition \ref{prp:gmm_ifs}(iv), this alters the effective curvature of the problem from \(A\) to \(A+B\). The estimator no longer depends on an external preliminary estimator, but sensitivity to misspecification persists in the curvature through \(B\). Continuously updating GMM, which minimizes over the parameter inside the weight matrix as well, is treated in Appendix~\ref{sec:cugmm}.

The following example illustrates a central implication of these results: efficient GMM estimators can share the same local sensitivity while exhibiting substantially different informativeness under misspecification.

\begin{example}[Normal mean with misspecified variance restriction; \citealp{schennach2007point}]
    \label{exm:schennach}
    Consider i.i.d.\ data \(X_i\sim N(\mu,\sigma^2)\) and the overidentified moment
    vector
    \[
        g(X_i,\theta)
        =
        \begin{pmatrix}
            X_i-\theta \\
            (X_i-\theta)^2-1
        \end{pmatrix}.
    \]
    The parameter of interest is the mean \(\theta\), while the variance restriction may be misspecified when
    \(\sigma^2\neq1\).

    All three estimators have pseudo-true value \(\theta_0=\mu\) over \(\sigma^2\geq1/2\) (Table~\ref{tab:schennach}). The sensitivity is identical across the three estimators,
    \(\Lambda=(1,0)\), because each estimator locally tracks the sample mean.

    Despite identical sensitivity, informativeness differs substantially across estimators. When the weight matrix depends on the parameter, the influence function contains additional components arising from estimation of the weight matrix under misspecification. These components inflate the variance of the estimator without affecting local sensitivity, so informativeness satisfies \(\Delta<1\). Table~\ref{tab:schennach} summarizes the sensitivity and informativeness measures for one-step, two-step, and iterated estimators.

\end{example}

By Corollary~\ref{cor:miss}, the differences in informativeness across estimators arise from which misspecification channels lie outside $\mathcal L(\nu_i)$. For one-step GMM with $W=I$, the only active channel is the Jacobian channel: $r_k(X_i) = M_\gamma\gamma_i$, with $\gamma_i = (0,-2U_i)'$ and $\nu_i = (U_i, U_i^2-\sigma^2)'$, where $U_i = X_i - \mu$. The Jacobian channel reduces to a multiple of $U_i$, which is the first coordinate of $\nu_i$, so it adds no information beyond the moments themselves and $\Delta = 1$. For two-step and iterated GMM, the weight-matrix channel involves the influence function of the estimated second-moment matrix, which contains centered cubic and quartic terms in $U_i$ that are not in $\mathcal L(\nu_i)$; hence $\Delta<1$.

\begin{table}[ht]
    \centering
    \renewcommand{\arraystretch}{1.5}
    \begin{tabular}{lcc}
        \hline
        \textbf{Estimator} &  \textbf{Sensitivity} $\Lambda$ & \textbf{Informativeness} $\Delta$                                                        \\ \hline
        One-Step ($W=I$)                      & $(1, 0)$                       & $1$                                                                                      \\
        Two-Step GMM                               & $(1, 0)$                       & $\dfrac{(5\sigma^4-4\sigma^2+1)^2}{(5\sigma^4-4\sigma^2+1)^2 + 6\sigma^4(\sigma^2-1)^2}$ \\
        Iterated GMM                              & $(1, 0)$                       & $\dfrac{2\sigma^4}{5\sigma^4 - 6\sigma^2 + 3}$                                           \\ \hline
    \end{tabular}
    \caption{Sensitivity and Informativeness in \citet*{schennach2007point} Model}
    \label{tab:schennach}
\end{table}

\begin{figure}[ht]
    \centering
    \includegraphics[width=0.8\linewidth]{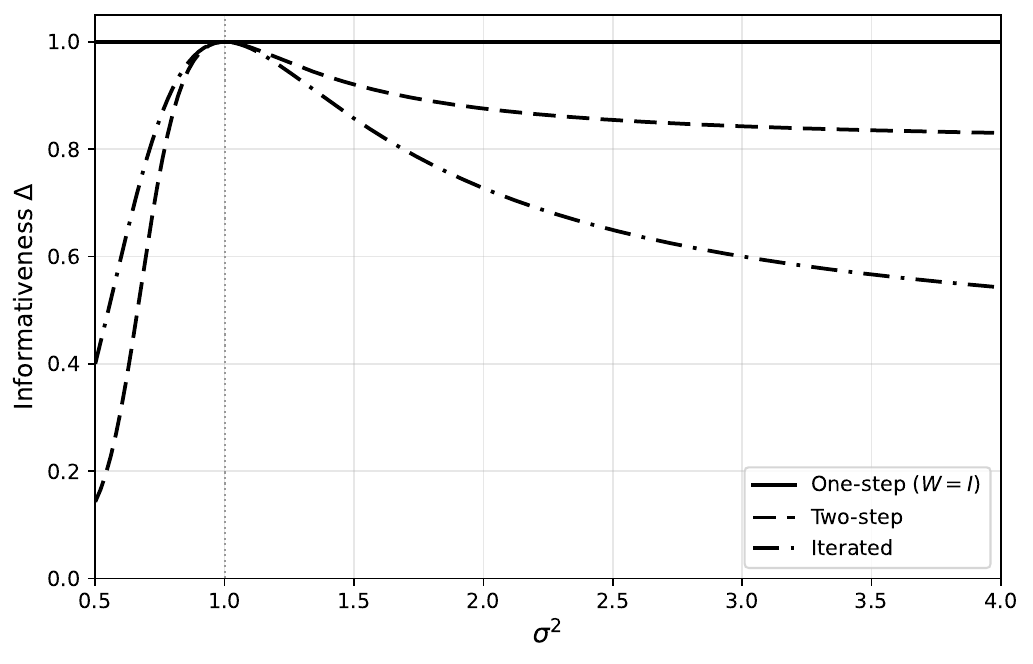}
    \caption{Informativeness of the influence function for one-step ($W=I$), two-step, and iterated GMM in the normal-mean model with misspecified variance restriction, plotted against $\sigma^2$ over the range $\sigma^2 \geq 1/2$.}
    \label{fig:schennach_panel}
\end{figure}
%The CUGMM analogue, derived in Appendix~\ref{sec:cugmm}, is omitted for clarity.

Figure~\ref{fig:schennach_panel} plots informativeness for the three estimators as a function of $\sigma^2$ over the range $\sigma^2 \geq 1/2$ in which all three pseudo-true values coincide with $\mu$. The one-step estimator with identity weight has $\Delta = 1$ throughout. The two efficient estimators coincide with the one-step estimator at $\sigma^2 = 1$, where the variance restriction is correctly specified, and fall below one elsewhere. Their informativeness ranking reverses across $\sigma^2 = 1$: two-step dominates iterated for $\sigma^2 > 1$, while iterated dominates two-step for $\sigma^2 < 1$.

\begin{example}[Schennach, Cont'd]
    \label{exm:schennach2}
    Example~\ref{exm:schennach} varies the weight matrix while holding the moment function fixed; here we do the reverse. Let $X_i \sim N(0,\sigma^2)$ and consider two researchers who incorrectly assume $\sigma^2 = 1$ and use one-step GMM with identity weight to estimate the mean. The first uses
    $g_1(X_i,\theta) = (X_i - \theta,\, (X_i-\theta)^2 - 1)'$ as in Example~\ref{exm:schennach}; the second uses
    $g_2(X_i,\theta) = (X_i - \theta,\, (X_i-\theta)^4 - 3)'$, which uses the kurtosis of the normal distribution. Because $X_i$ is symmetric about zero, the population GMM objective is even in $\theta$ for both moment functions; under regularity conditions analogous to the $\sigma^2 \geq 1/2$ requirement of Example~\ref{exm:schennach}, the pseudo-true value is $\theta_0 = 0$ in both cases. Direct calculation gives sensitivity $\Lambda_1 = \Lambda_2 = (1, 0)$. In contrast, their informativeness differs: $\Delta_1 = 1$, while
    \[
        \Delta_2 \;=\; \frac{(1+12\delta\sigma^2)^2}{1 + 24\delta\sigma^2 + 240 \delta^2 \sigma^4} \;<\; 1 \quad \text{whenever } \delta \neq 0,
    \]
    where $\delta = 3(\sigma^4 - 1)$ measures the degree of misspecification (with $\delta = 0$ at $\sigma^2 = 1$). The influence function of the second researcher contains a centered cubic term in $X_i$, which is correlated with the first moment but not in $\mathcal L(\nu_i)$, so it lowers informativeness by Corollary~\ref{cor:miss} without altering local sensitivity. Equivalently, $\Delta_2 = V_1/V_2$, where $V_1$ is the variance of the moment-channel component and $V_2$ is the total asymptotic variance. Derivations are in Appendix~\ref{app:schennach2}.
\end{example}

\section{Applications}\label{sec:app}

We apply the misspecification-robust sensitivity and informativeness diagnostics to three canonical structural settings, each chosen to illustrate a different combination of the influence-function channels in Proposition~\ref{prp:gmm_ifs}. BLP combines a parameter-dependent efficient weight with a large Jacobian channel; BPP isolates the weight-matrix channel in a minimum-distance setting where the Jacobian is deterministic; and AJRY likewise has a parameter-dependent weight, but a smaller cluster-level Jacobian channel, and is the setting in which we examine the iteration path and the centering of the Hansen $J$ statistic. The BLP and AJRY sections examine how iterating the weight matrix reshapes the two diagnostics.

\subsection{BLP Automobile Market}
\label{sec:blp}

Our first application revisits the automobile demand and supply model of \citet*{berry1995automobile} (BLP), following the implementation in \citet*{andrews2017measuring} (AGS). The purpose of this application is to illustrate how allowing for misspecified moments alters sensitivity diagnostics relative to AGS, holding the model, data, and estimation procedure fixed.

The model is identical to that in AGS. Let $X_j$ denote the observable characteristics for product $j$ in a given market, $p_j$ its observed price, and $Z_j$ the vector of instruments. The unobservables $u_j(\theta) = (\xi_j(\theta),\eta_j(\theta))'$ collect the unobserved component of product quality and the unobserved component of marginal cost. Although $\xi_j(\theta)$ and $\eta_j(\theta)$ are not directly observed, they are recovered from the model at any candidate parameter $\theta$: $\xi_j(\theta)$ is obtained by inverting the market-share equation following \citet*{berry1994estimating}, and $\eta_j(\theta)$ is obtained from the firms' Bertrand-Nash pricing first-order conditions after recovering the model-implied marginal cost $mc_j(\theta)$. The moment function $g(X_j,\theta) = Z_j \otimes u_j(\theta)$ is therefore computable at each $\theta$, even though the structural unobservables themselves are not. Stacking the demand and supply moments yields a 31-dimensional sample moment vector $\widehat g(\theta) = n^{-1}\sum_{j=1}^n g(X_j,\theta)$.

Estimation is conducted by efficient two-step GMM. The second-step estimator $\widehat\theta$ minimizes
\[
    \widehat Q(\theta) = \widehat g(\theta)' \widehat W(\widehat\phi)\, \widehat g(\theta),
\]
where $\widehat W(\widehat\phi) = \bigl\{n^{-1}\sum_j g(X_j,\widehat\phi)g(X_j,\widehat\phi)'\bigr\}^{-1}$ is the estimated efficient weight matrix and $\widehat\phi$ is the first-step GMM estimator with identity weight matrix.

The parameter of interest is the average markup,
\[
    c(\theta) = \frac{1}{n}\sum_j \frac{p_j - mc_j(\theta)}{p_j},
\]
where $p_j$ is the observed price and $mc_j(\theta)$ is the model-implied marginal cost recovered from the supply-side first-order conditions. The influence function of $c(\theta)$ follows from the chain rule: $\partial c(\theta)/\partial\theta'$ multiplied by the influence function of the GMM estimator.

AGS study sensitivity under local violations of instrument validity, modeled as direct effects of instruments entering demand or cost equations at rate \(1/\sqrt n\). We compute the corresponding normalized sensitivity coefficients using both AGS and our MRS measure for the same excluded instruments. The AGS sensitivities are taken directly from their replication files, while MRS is computed using the misspecification-robust influence function derived in Proposition~\ref{prp:gmm_ifs}(iii).

Figure~\ref{fig:blpsensecoef} compares the two sensitivity measures. The attenuation under MRS is most pronounced for the demand-side instruments. In the AGS calculation, several demand-side instruments have sizable positive sensitivity for the average markup. Under MRS, these demand-side sensitivities are close to zero and in some cases change sign. On the supply side, the comparison is more heterogeneous: MRS often reduces sensitivity, but some supply-side moments remain influential and a number of rankings change. The main empirical finding is that allowing for misspecification substantially reorders the sensitivity profile. Table~\ref{tab:blp_bias} revisits the asymptotic bias calculations in AGS using MRS, showing that violations related to economies of scope and to cross-firm demand spillovers have substantially smaller effects under MRS relative to AGS.

The Hansen $J$ statistic is $345$ on $25$ degrees of freedom with $p<0.001$, decisively rejecting the moment conditions. By Corollary~\ref{cor:miss}, $\widehat\Delta_{markup}<1$ reflects the part of the influence function outside the moment span $\mathcal L(\nu_i)$. Both channels are active in BLP: the Jacobian channel $M_\gamma\gamma_i$ because $G(X_j,\theta)$ varies non-trivially across products through the nonlinear demand system, and the weight-matrix channel $M_\omega\omega_i$ because $\widehat W$ depends on the first-step $\widehat\phi$. Their joint contribution gives $\widehat\Delta_{markup}=0.56$:\footnote{The MRS-based quantities for BLP are computed in MATLAB R2026a; under R2024a the average-markup informativeness is $0.58$ in place of $0.56$, a second-decimal toolchain difference with no effect on any conclusion.} even after conditioning on the full vector of moments, $44\%$ of the asymptotic variance of the markup estimator is unexplained by the linear projection on the moments. The $J$-test detects misspecification; $\Delta$ quantifies the share of structural efficiency it costs.

For completeness, we also compute the iterated GMM analogue using Proposition~\ref{prp:gmm_ifs}(iv). The spectral radius $\rho(-A^{-1}B)\approx 0.64$ remains below one along the iteration path, so the iterated estimator is locally well-defined; the iteration converges within a small number of weight updates, leaving the Hansen $J$ essentially unchanged at $\approx 345$, while the iterated $\widehat\Delta_{markup}=0.24$ falls well below the two-step value of $0.56$.

\begin{figure}[H]
    \centering
    \includegraphics[width=0.95\linewidth]{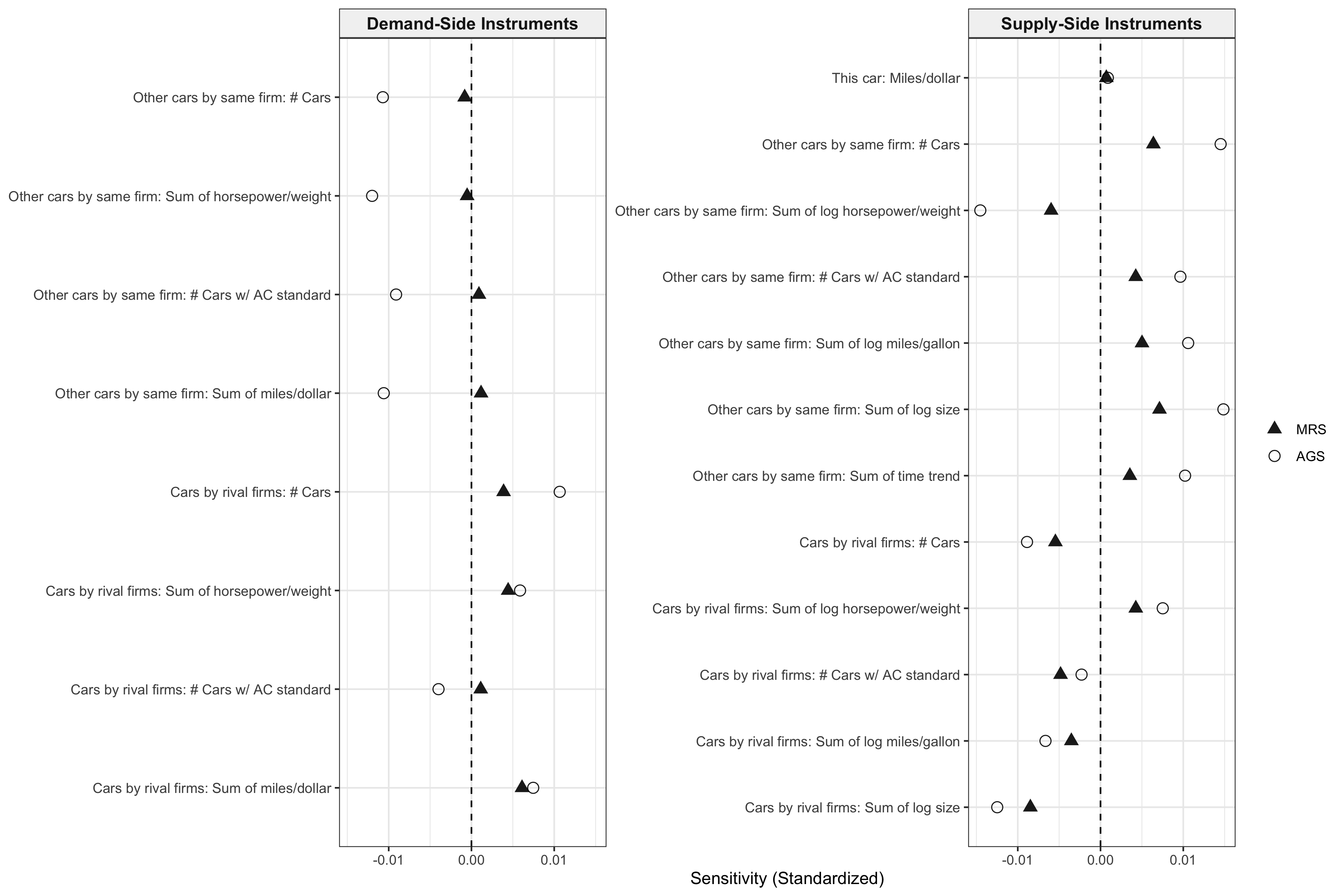}
    \caption{AGS and MRS sensitivities of the average markup with respect to the
    excluded instruments in the BLP automobile demand and supply model. Hollow
    circles plot the AGS sensitivities of \citet{andrews2017measuring}; filled
    triangles plot the misspecification-robust sensitivities derived from
    Proposition~\ref{prp:gmm_ifs}(iii). Sensitivities are reported as the
    asymptotic effect of a one-standard-deviation perturbation of each
    instrument, expressed in units of one percent of the average price.}
    \label{fig:blpsensecoef}
\end{figure}

\begin{table}[ht]
    \centering
    \begin{tabular}{lcc}
        \toprule
        Misspecification scenario            & AGS bias (pp) & MRS bias (pp) \\
        \midrule
        Supply moments, same-firm spillover  & $-17.3$       & $-7.1$        \\
        Supply moments, rival-firm spillover & $+20.9$       & $+16.0$       \\
        Demand moments, same-firm spillover  & $-12.8$       & $-1.3$        \\
        Demand moments, rival-firm spillover & $+25.2$       & $+9.0$        \\
        \bottomrule
    \end{tabular}
    \caption{Asymptotic bias of the average markup under four local violations of an excluded-instrument restriction at rate $1/\sqrt n$, normalized to one percent of the average price. ``AGS bias'' uses the AGS sensitivity; ``MRS bias'' uses Proposition~\ref{prp:gmm_ifs}(iii). Units: percentage points of the average markup.}
    \label{tab:blp_bias}
\end{table}

\subsection{BPP Income Dynamics and Consumption Insurance}
\label{sec:bpp}

Our second application revisits the model of household income dynamics and consumption insurance of \citet*{blundell2008consumption} (BPP). BPP estimate the model by diagonally weighted minimum distance (DWMD); we report both that scheme and the optimal weight matrix, and the contrast between the two is itself diagnostic.

The BPP model also provides a clean illustration of Proposition~\ref{prp:eff_channel} and Corollary~\ref{cor:miss}. By construction, the Jacobian of the MD estimator is deterministic across households, so $\gamma_i = 0$ and the Jacobian channel vanishes. Under the optimal weight the total misspecification residual reduces to the weight-matrix channel $r_k(X_i) = M_\omega\omega_i$, the rank-one factorization of Proposition~\ref{prp:eff_channel}, which generates structural-efficiency loss whenever this term has a nonzero component orthogonal to $\mathcal L(\nu_i)$. We estimate the BPP model with time-varying shock variances, comprising $35$ parameters and $325$ second-moment conditions. Even when the Jacobian contributes no sampling variation, the weight-matrix channel can drive informativeness substantially below one.

The BPP model decomposes log income into a permanent and a transitory component, with consumption growth depending on permanent and transitory income shocks through partial-insurance coefficients $\phi$ and $\psi$. The full structural equations and the construction of the non-collinear moment set are given in Appendix~\ref{sec:bppmoment}. We estimate the specification with time-varying variances of the permanent shock, the transitory shock, and the consumption innovation, holding $\phi$ and $\psi$ constant across periods; the parameter vector $\beta$ collects these time-varying variances along with $\phi$, $\psi$, the MA coefficient, and the measurement-error variance. Let $m_i$ denote the household-level vector of sample cross-products of income and consumption growth,\footnote{The i.i.d. assumption is imposed at the household level. Although the moments use within-household time-series covariances, these are collected into the household-level vector $m_i$, and within-household serial dependence is absorbed into $m_i$. The asymptotic approximation treats households as independently sampled.} and let $m(\beta)$ denote the corresponding theoretical covariances. The sample moment conditions are $g(X_i, \beta) = m_i - m(\beta)$.

 The OMD estimator minimizes $\widehat g(\beta)'\widehat W\widehat g(\beta)$ with $\widehat W = [n^{-1}\sum_i (m_i - \bar m)(m_i - \bar m)']^{-1}$ where $\bar{m}=n^{-1}\sum_{i}m_{i}$,  the inverse of the sample covariance matrix of the data moments.\footnote{Our construction of this sample covariance follows \citet*[Appendix D]{blundell2008consumption}, except that for unbalanced panels each entry is divided by the product of the two moments' observation counts rather than by the number of households for whom both moments are observed. The two coincide for a balanced panel. For the cohort-A subsample BPP's normalization produces a covariance estimate that is not positive-definite and so cannot be used to form the OMD or DWMD weight matrix; ours fixes this with no other change to the construction.} Because $\widehat W$ is built from the data moments alone, it does not depend on $\beta$: OMD is one-step GMM with an estimated weight matrix, its influence function is given by Proposition~\ref{prp:eff_channel}, and there is nothing to iterate.\footnote{In the unbalanced panel the per-moment subsamples leave a residual dependence of the weight on $\beta$ through the missing-data pattern, so the formally iterated estimator is not exactly the one-step; re-estimating with the updated weight converges in three steps with negligible movement in the estimates.} 
 
 We also estimate the model by DWMD, the scheme used by BPP, with weight $\widehat D = (\operatorname{diag}\widehat V)^{-1}$, weighting each moment by the inverse of its own sampling variance, and by EWMD, the equal-weighting scheme of \citet*{altonji1996small}, with the identity weight. The three estimators share the same moments and model map and differ only in how much of the weight is estimated: all of $\widehat V^{-1}$ for OMD, its diagonal for DWMD, none for EWMD.

Parameter estimates and standard errors are reported in Table~\ref{tab:bppparam}. The Hansen $J$ statistic for overidentifying restrictions is $538.7$ on $290$ degrees of freedom with $p<0.001$, indicating that the BPP moment conditions are decisively rejected.

Table~\ref{tab:bppparam} also reports the misspecification-robust informativeness. Under OMD, $\widehat\Delta = 0.79$ for both $\phi$ and $\psi$; across all $35$ parameters $\widehat\Delta_k$ ranges from $0.70$ to $0.87$ with median $0.79$. The misspecification-robust standard errors exceed the conventional ones ($0.043$ versus $0.034$ for $\phi$); the gap is exactly the weight-matrix channel that the conventional formula omits.

The results under DWMD are markedly different. The partial-insurance coefficients are estimated at $\widehat\phi=0.68$ and $\widehat\psi=0.03$, close to the values reported by BPP and well away from the OMD estimates of $0.33$ and $0.07$; EWMD gives $\widehat\phi=0.61$ and $\widehat\psi=0.00$. Since the three weightings share a probability limit under correct specification, this divergence is itself a symptom of misspecification. The DWMD informativeness, however, is $\widehat\Delta\approx0.99$ for $\phi$, for $\psi$, and for every one of the $35$ parameters. Figure~\ref{fig:BPPdelta} plots the full distribution under both weightings: the OMD values are scattered between $0.70$ and $0.87$, while the DWMD values cluster just below one.

The contrast is structural. OMD's weight inverts the sample covariance of the moments. By Proposition~\ref{prp:eff_channel}, once the overidentifying restrictions are rejected, this construction makes the estimated weight contribute variation that the moments cannot account for, and the contribution grows with the strength of the rejection. With the BPP moments decisively rejected, roughly a fifth of the asymptotic variance of $\widehat\phi$ and $\widehat\psi$ reflects estimation of the weight matrix rather than the moments. DWMD keeps only the diagonal of that weight and discards the cross-moment terms, so it carries no comparable contribution; its variance stays almost entirely explained by the moments, and $\widehat\Delta\approx0.99$. EWMD is the limiting case: the identity weight estimates nothing, so by Proposition~\ref{prp:eff_channel} the weight-matrix channel is absent and $\Delta_k=1$ for every parameter, which the numerical computation reproduces exactly.

The simpler weights are not free: the misspecification-robust standard error of $\phi$ rises from $0.043$ under OMD to $0.113$ under DWMD and $0.109$ under EWMD, the efficiency-informativeness trade of Remark~\ref{rem:altonji}. The loss of informativeness in BPP is thus a feature of the optimal weight matrix specifically, not of weighted minimum distance in general.
\begin{table}[ht]
    \centering
    \begin{tabular}{lcccccccc}
        \toprule
                      & \multicolumn{4}{c}{$\phi$ (permanent)} & \multicolumn{4}{c}{$\psi$ (transitory)}                              \\
        \cmidrule(lr){2-5}\cmidrule(lr){6-9}
        Estimator     & Est.                                   & SE (conv.)                              & SE (MR) & $\widehat\Delta$
                      & Est.                                   & SE (conv.)                              & SE (MR) & $\widehat\Delta$ \\
        \midrule
        OMD           & $0.327$                                & $0.034$                                 & $0.043$ & $0.79$
                      & $0.070$                                & $0.029$                                 & $0.039$ & $0.79$           \\
        DWMD          & $0.682$                                & $0.093$                                 & $0.113$ & $0.99$
                      & $0.028$                                & $0.043$                                 & $0.043$ & $0.99$           \\
        EWMD          & $0.610$                                & $0.108$                                 & $0.109$ & $1.00$
                      & $0.003$                                & $0.047$                                 & $0.049$ & $1.00$           \\
        \bottomrule
    \end{tabular}
    \caption{Partial-insurance coefficient estimates for the full BPP model ($35$ parameters, $325$ moment conditions, $n=1765$ households). $\phi$ and $\psi$ are consumption responses to permanent and transitory shocks; ``SE (conv.)'' omits the weight-matrix channel, ``SE (MR)'' is misspecification-robust, $\widehat\Delta$ is misspecification-robust informativeness. OMD uses the optimal weight $\widehat V^{-1}$ (Proposition~\ref{prp:eff_channel}), DWMD the diagonal weight $(\operatorname{diag}\widehat V)^{-1}$, EWMD the identity weight.}
    \label{tab:bppparam}
\end{table}

\begin{figure}[ht]
    \centering
    \includegraphics[width=0.7\linewidth]{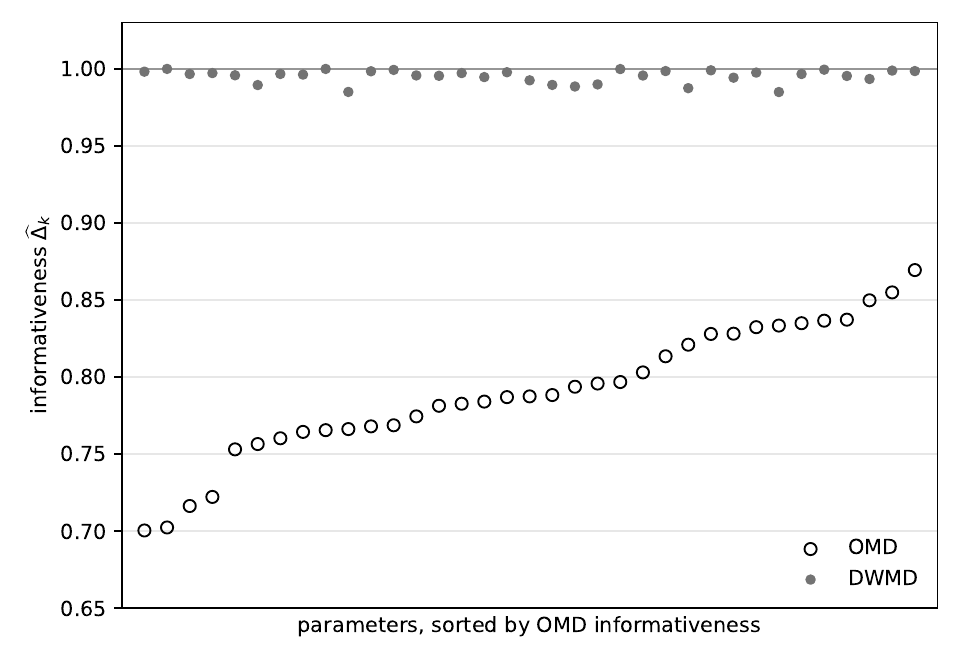}
    \caption{Misspecification-robust informativeness $\widehat\Delta_k$ of every BPP parameter under OMD (hollow circles) and DWMD (filled circles), sorted by the OMD value. The thin line marks the fixed-weight bound $\Delta_k=1$, attained exactly by EWMD.}
    \label{fig:BPPdelta}
\end{figure}

\subsection{Income and Democracy}
\label{sec:ajry}

Our third application revisits the income-and-democracy regressions of \citet*{acemoglu2008income} (AJRY), a leading example of dynamic panel difference-GMM. AJRY estimate the autoregressive specification
\[
    d_{it} = \alpha\,d_{i,t-1} + \gamma\,y_{i,t-1} + \mu_t + \delta_i + u_{it},
\]
where $d_{it}$ is a measure of democracy, $y_{i,t-1}$ is lagged log income per capita, $\mu_t$ are time effects, and $\delta_i$ are country fixed effects. First-differencing eliminates $\delta_i$, and the resulting moment conditions $\mathbb E[Z_i\Delta u_{i}] = 0$ use Arellano-Bond instruments built from lagged levels of $d$ and $y$. Because the data form a country panel, the cross-sectional unit is the country and within-country serial dependence is absorbed into the stacked cluster moment $Z_i'\Delta u_i$. The influence functions and standard errors are cluster-robust, computed at the country level as in Appendix~\ref{app:cluster}. The parameter of interest is $\gamma$, the effect of income on democracy.

We replicate the five-year panel specification of \citet*{hansen2021inference}, which reports the iterated GMM estimator of AJRY with misspecification-robust standard errors. The iterated estimate is $\widehat\alpha = 0.744$ and $\widehat\gamma = -0.009$, with misspecification-robust standard errors $0.128$ and $0.039$ respectively. The Hansen $J$ statistic at the iterated fixed point is $45.3$ on $44$ degrees of freedom ($p=0.42$), failing to reject at conventional levels.\footnote{\label{fn:ajryJ}The Hansen $J$ statistic is
$J = n\,\widehat g'\widehat W\widehat g$, with $\widehat g$ the mean cluster
moment vector and $\widehat W$ the inverse of the uncentered cluster
second-moment matrix $n^{-1}\sum_i(Z_i'\Delta \widehat{u}_i)(Z_i'\Delta \widehat{u}_i)'$,
evaluated at the relevant parameter estimate.} The one-step estimator that AJRY report uses the deterministic Arellano-Bond weight matrix and yields a Hansen $J$ of $71.3$ ($p=0.006$). The two-step efficient estimator yields $\widehat\gamma = -0.012$ with a robust standard error of $0.049$ and a Hansen $J$ of $61.4$ ($p=0.04$). All three estimators find the income coefficient statistically insignificant, but they deliver materially different specification-test verdicts.

AJRY report a one-step estimate of $\gamma$ that is statistically insignificant, concluding that income has no causal effect on democracy. Its misspecification-robust informativeness is $\widehat\Delta_\gamma = 0.93$. The Arellano-Bond weight does not depend on the parameter, so the one-step estimator activates the Jacobian channel but no weight-matrix channel; the modest loss is the Jacobian channel alone. The Jacobian channel is active because the per-unit Jacobian $G(X_i,\theta)$ varies across countries through the lagged democracy and income terms in $Z_i'\Delta X_i$. Adopting the efficient weight matrix lowers informativeness rather than raising it. The two-step efficient estimator has $\widehat\Delta_\gamma = 0.81$ and the iterated estimator $\widehat\Delta_\gamma = 0.77$. Figure~\ref{fig:AJRYjdelta} plots the $J$ statistic and $\widehat\Delta_\gamma$ along the iteration path, and the two diagnostics give opposite verdicts on iteration. Iterating the weight matrix moves the overidentifying-restrictions test from rejection at the one-step and two-step estimators ($p=0.006$ and $p=0.04$) to non-rejection at the fixed point ($p=0.42$), while $\widehat\Delta_\gamma$ falls in parallel from $0.93$ to $0.77$.

The non-rejection, however, depends on how the weight matrix is centered. The statistic of footnote~\ref{fn:ajryJ} uses the uncentered second-moment matrix; with the centered covariance, the two statistics at the same parameter value are linked by $J = J_c/(1+J_c/n)$, by the Sherman--Morrison formula. Under correct specification the two are asymptotically equivalent and either yields a valid $\chi^2$ test, so the centering is usually treated as innocuous. Under misspecification they diverge by a first-order amount, and the centered version is the more powerful test \citep*{hall2000covariance}. Figure~\ref{fig:AJRYjdelta} plots the two statistics along their respective iteration paths. The centered and uncentered iterations converge to the same fixed-point estimator, where the uncentered statistic is below the critical value ($45.3$, $p=0.42$) and the centered statistic far above it ($70.4$, $p=0.007$); the centered test rejects at every step of its path. A specification verdict that flips with the centering convention is not a reliable reading of misspecification. The informativeness, by contrast, is invariant to the convention: the centered and uncentered weights produce the same iterated estimator and influence function, so $\widehat\Delta_\gamma=0.77$ either way, and the structural-efficiency loss is a property of the estimator rather than of the test.

\begin{figure}[ht]
    \centering
    \includegraphics[width=0.8\linewidth]{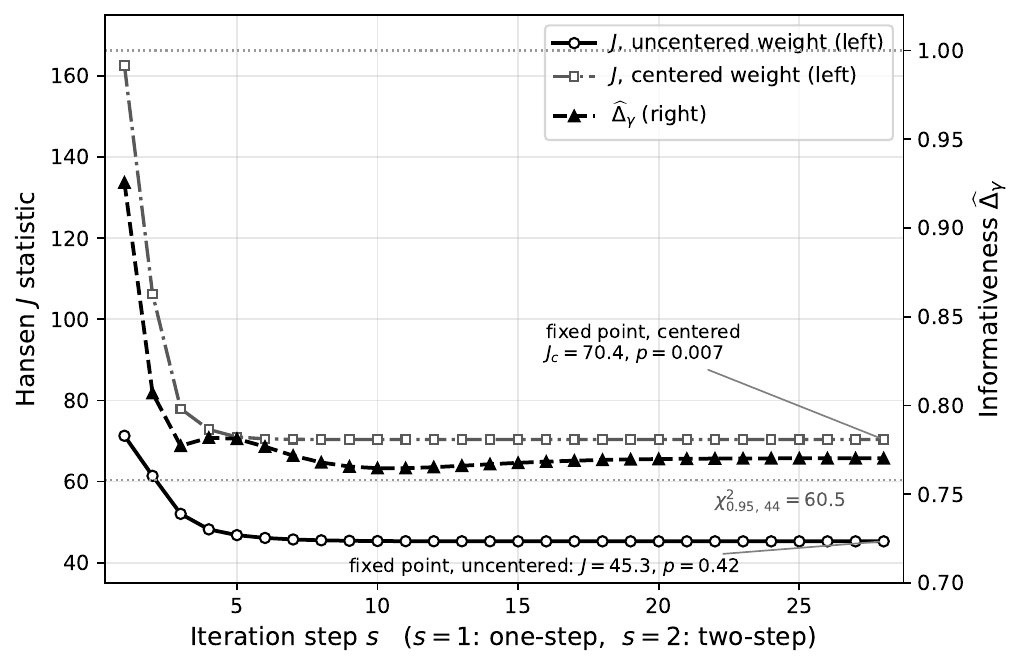}
    \caption{Hansen $J$ statistics and informativeness $\widehat\Delta_\gamma$ along the iteration paths in the \citet*{acemoglu2008income} specification (five-year panel, full sample). Circles: the uncentered efficient-weight iteration; squares: the fully centered iteration, with the weight equal to the inverse cluster covariance updated at each step. Both start from the Arellano-Bond one-step estimate ($s=1$) and converge to the same fixed-point estimator, where $J = J_c/(1+J_c/n)$ with $n=127$ countries: $J=45.3$ versus $J_c=70.4$. Dotted line: the $5\%$ critical value $\chi^2_{0.95,44}=60.5$. $\widehat\Delta_\gamma$ (right axis) is shown along the uncentered path.}
    \label{fig:AJRYjdelta}
\end{figure}

Figure~\ref{fig:AJRYmulti} traces the efficient-weight iteration from five first-step weights. Panel (a) reproduces Figure~1(b) of \citet*{hansen2021inference}: the income coefficient converges to a common fixed point from every start, consistent with the contraction rate $\rho(-A^{-1}B)\approx 0.61$ below one. The informativeness likewise reaches the common fixed-point value $\widehat\Delta_{\gamma}=0.77$ from every start (panel b). Because the $s$-step estimators along the path converge to distinct pseudo-true values, their step-wise informativeness values are not directly comparable, but the panel shows that the approach to the fixed point is non-monotone.

\begin{figure}[ht]
    \centering
    \includegraphics[width=\linewidth]{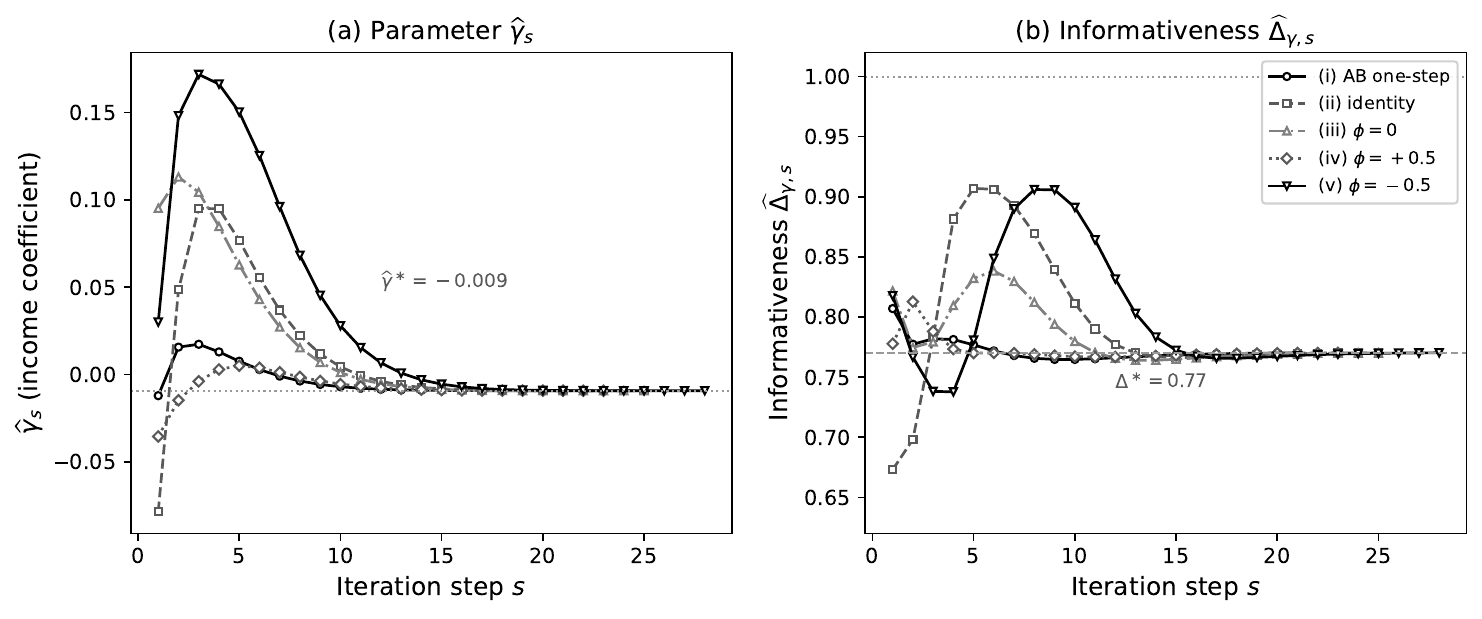}
    \caption{Iteration paths of difference GMM in AJRY specification (five-year panel, full sample) from five first-step weights
        $W(\phi)$, with $\phi$ equal to the Arellano-Bond one-step estimate, the
        identity-weight one-step estimate, the zero vector, 0.5 times the vector of ones, and
        -0.5 times the vector of ones. Panel (a) reproduces Figure~1(b) of
        \citet*{hansen2021inference}: the income coefficient $\widehat\gamma_s$
        converges to a common fixed point $\widehat\gamma^\ast=-0.009$ from every
        start. Panel (b):
        the informativeness $\widehat\Delta_{\gamma,s}$ evaluated at each step's
        estimator. The iteration uses the uncentered efficient weight. Both
        panels use the recursive $s$-step
        influence function; the iterated influence function of
        Proposition~\ref{prp:gmm_ifs}(iv) is its $s\to\infty$ limit.}
    \label{fig:AJRYmulti}
\end{figure}

Iteration affects the two diagnostics differently across the applications. In AJRY both move, the test passing to non-rejection while $\widehat\Delta_\gamma$ falls from $0.81$ to $0.77$. In BLP, by contrast, iteration leaves the Hansen $J$ essentially unchanged at $\approx 345$ while $\widehat\Delta_{markup}$ falls sharply from $0.56$ to $0.24$ (Section~\ref{sec:blp}). $J$ and $\widehat\Delta_k$ measure different functionals of the same misspecification: $J$ records how far the moments depart from zero at the pseudo-true value, and $\widehat\Delta_k$ the share of the estimator's variance the moments explain, so iteration reshapes the Jacobian and weight-matrix channels, large in BLP and modest in AJRY, without moving $J$ in step.

\section{Conclusion}\label{sec:conclusion}

This paper develops a sensitivity and informativeness framework for GMM estimators that remains valid under general moment misspecification. By expressing sensitivity measures through influence functions, we provide a unified characterization of how deviations in moment conditions affect GMM estimators across one-step, two-step, and iterated procedures, with continuously updating GMM treated in Appendix~\ref{sec:cugmm}. Once the misspecification-robust influence function is in hand, the sensitivity matrix $\Lambda$ and the informativeness measure $\Delta_{k}$ follow at essentially no additional cost.

The informativeness measure $\Delta_{k}$ is the diagnostic contribution. It quantifies the share of an estimator's asymptotic variance driven by sampling variation in the moments, as opposed to variation arising from the Jacobian or the estimated weight matrix, and is complementary to the Hansen $J$-test: the $J$-test asks whether the moment restrictions are jointly violated; $\Delta_{k}$ asks how much of the estimator's variance is left unexplained by the moments after such a violation. The income-and-democracy application illustrates the complementarity. At the iterated fixed point the $J$-test verdict depends on the centering of the weight matrix ($p=0.42$ uncentered, $p=0.007$ centered), while $\widehat\Delta_\gamma = 0.77$ under either convention, so nearly a quarter of the income coefficient's asymptotic variance is unexplained by the moments regardless of how the test is run.

A practical implication concerns the choice of weight matrix. In minimum distance estimation, the optimal weight that delivers classical efficiency under correct specification carries a hidden cost under misspecification: a loss of informativeness, or structural efficiency, that is absent under correct specification. By Proposition~\ref{prp:eff_channel}, the very construction that attains this efficiency, the inversion of the moments' second-moment matrix, is the one that makes the estimated weight contribute variation the moments cannot account for once the overidentifying restrictions are rejected, and this informativeness loss grows with the strength of the rejection. Simpler weights avoid it. Diagonally weighted minimum distance, which inverts only the moments' marginal variances, raises $\widehat\Delta$ for all $35$ parameters of our BPP application from a median of $0.79$ to about $0.99$, and equal weighting, which estimates no weight at all, attains $\Delta_{k}=1$ exactly. Pursuing classical efficiency under misspecification therefore trades informativeness for an efficiency gain the optimal weight no longer delivers.

We recommend that practitioners using over-identified GMM report $\Delta_{k}$ alongside misspecification-robust standard errors. Where $\Delta_{k}$ is well below one, the moments explain only a small share of the estimator's asymptotic variance, and any economic interpretation of the parameter should be made with that in mind. The framework does not correct for misspecification; it clarifies how and through which channels misspecification enters the estimator.

%\section*{Declaration of generative AI and AI-assisted technologies in the manuscript preparation process}
%\vspace{-1em}
%During the preparation of this work the authors used Claude (Anthropic) in order to check the code and to assist with editing and improving the language of the manuscript. After using this tool, the authors reviewed and edited the content as needed and take full responsibility for the content of the published article.

%%%%%%%%%%%%%%%%%%%%%%%%%%%%%%%%%%%%%%%%%%%%%%%%
%\begin{singlespace}
\bibliographystyle{ecta}
\bibliography{references.bib}
%\end{singlespace}
%%%%%%%%%%%%%%%%%%%%%%%%%%%%%%%%%%%%%%%%%%%%%%%%%

%%%%%%%%%%%%%%%%%%%%%%%%%%%%%%%%%%%%%%%%%%%%%%%%%
%%%%% These commands start the appendix and change the Table & Figure numbering
%\newpage
\appendix
\setcounter{table}{0}
\renewcommand{\tablename}{Appendix Table}
\renewcommand{\figurename}{Appendix Figure}
\renewcommand{\thetable}{A\arabic{table}}
\setcounter{figure}{0}
\renewcommand{\thefigure}{A\arabic{figure}}
\setcounter{equation}{0}
\renewcommand{\theequation}{A\arabic{equation}}
%%%%%%%%%%%%%%%%%%%%%%%%%%%%%%%%%%%%%%%%%%%%%%%%%

\section{Appendix: Proofs \label{sec:proofs}}

\begin{proof}[Proof of Proposition \ref{prp:gmm_ifs}]
    We begin by establishing preliminary uniform convergence results implied by
    Assumption~\ref{ass:regularity}.

    Assumptions~\ref{ass:regularity}(i)--(iii) imply a Uniform Law of Large Numbers (ULLN) over the parameter space \(\Theta\) for the sample moments and their derivatives that appear in the sample first-order conditions \citep*[see, e.g.,][Theorem 3]{andrews1992generic}. In particular,
    \begin{align*}
        \sup_{\theta\in\Theta}\|\widehat g(\theta)-\mathbb E[g(X_i,\theta)]\|
         & \xrightarrow{p}0, \\
        \sup_{\theta\in\Theta}\|\widehat G(\theta)-\mathbb E[G(X_i,\theta)]\|
         & \xrightarrow{p}0, \\
        \sup_{\theta\in\Theta}\|\widehat R(\theta)-\mathbb E[R(X_i,\theta)]\|
         & \xrightarrow{p}0.
    \end{align*}
    Since \(H(\theta)\) is an affine function of \(R(\theta)\) with coefficients
    depending on \(g(\theta)\) and the weight matrix \(W\), it follows that
    \[
        \sup_{\theta\in\Theta}\|\widehat H(\theta)-\mathbb E[H(X_i,\theta)]\|
        \xrightarrow{p}0.
    \]

    For one-step and two-step GMM, uniform convergence of the sample objective function and its derivatives follows from Assumption~\ref{ass:regularity}(i)--(iii) and (vi). In particular, the sample criterion $Q_n(\theta)$ converges uniformly to its population counterpart $Q(\theta)$. Standard extremum estimator arguments \citep*[e.g.,][Theorem~2.1]{newey1994large} therefore imply consistency, $\widehat\theta\xrightarrow{p}\theta_0$ (and $\widehat\phi\xrightarrow{p}\phi_0$ for the first step).

    For iterated GMM, the contraction condition in Assumption~\ref{ass:regularity}(v), together with the ULLN over $\Theta$, ensures that the sample updating map converges uniformly to its population counterpart and that the iterated GMM sequence converges with probability approaching one to the unique fixed point $\theta_0$. The argument is directly analogous to that of \citet*{hansen2021inference}, who establish existence and consistency of the iterated GMM estimator under misspecification.

    Since \(\theta_0\) is the unique interior minimizer of \(Q(\theta)\), the population
    first-order condition holds
    \[
        G'Wg=0,
    \]
    where \(W\) is the corresponding limit weight matrix.

    With probability approaching one, the estimator satisfies the sample
    first-order condition (FOC) \(F_n(\widehat\theta)=0\), where \(F_n(\theta) = \widehat G(\theta)'\widehat W\widehat g(\theta)\) is the \(p\times 1\) gradient of the GMM criterion and \(\widehat W\) is the weight matrix used by the estimator: deterministic for one-step GMM, \(\widehat W(\widehat\phi)\) for two-step GMM, and \(\widehat W(\theta)\) for iterated GMM. We write \(F_{n,\theta}(\theta) = \partial F_n(\theta)/\partial\theta'\) for its Jacobian, and \(F_{n,\phi}(\theta,\phi) = \partial F_n(\theta,\phi)/\partial\phi'\) when the weight depends on a first-step parameter \(\phi\). The explicit forms of these derivatives for each estimator are given in the corresponding cases below. Because \(F_n(\theta)\) is a \(p \times 1\) vector, we apply the mean value theorem row-by-row around \(\theta_0\), which yields
    \[
        0 = F_n(\theta_0)
        + F_{n,\theta}(\widetilde\theta)(\widehat\theta-\theta_0),
    \]
    where \(F_{n,\theta}(\widetilde\theta)\) is a matrix whose \(k\)-th row is evaluated at an intermediate point \(\widetilde\theta^{(k)}\) on the line segment between \(\widehat\theta\) and \(\theta_0\). Because \(\widehat\theta\xrightarrow{p}\theta_0\), every \(\widetilde\theta^{(k)}\xrightarrow{p}\theta_0\), and by the ULLN, the sample Jacobian \(F_{n,\theta}(\widetilde\theta)\) converges uniformly in probability to the appropriate population curvature matrix.

    Rearranging,
    \[
        \sqrt n(\widehat\theta-\theta_0)
        = -\bigl[F_{n,\theta}(\widetilde\theta)\bigr]^{-1}\sqrt n F_n(\theta_0).
    \]
    Thus, for each estimator, the derivation proceeds by (i) establishing the probability limit of
    \(F_{n,\theta}(\widetilde\theta)\), and (ii) obtaining an asymptotic linear expansion
    \[
        \sqrt n F_n(\theta_0)
        = \frac1{\sqrt n}\sum_{i=1}^n f(X_i)+o_p(1).
    \]
    The influence function then follows as \(\psi(X_i)=-A^{-1}f(X_i)\).

    For two-step GMM, additional moment conditions are
    required to derive asymptotic linear representations. In particular,
    Assumption~\ref{ass:regularity}(vi)(b) guarantees the finite fourth moments
    needed to apply a central limit theorem to
    \(g(X_i,\theta_0)g(X_i,\theta_0)'\) and hence to justify the influence function
    expansions for efficient GMM estimators.

    To formalize the \(o_p(1)\) remainder terms in the following multivariable Taylor expansions, let \(\Delta G = \widehat{G}(\theta_0) - G\), \(\Delta g = \widehat{g}(\theta_0) - g\), and \(\Delta W = \widehat{W} - W\) (evaluated at the relevant probability limit). By the Central Limit Theorem and Assumption~\ref{ass:regularity}, \(\Delta G\), \(\Delta g\), and \(\Delta W\) are all \(O_p(n^{-1/2})\). Consequently, any pairwise product (e.g., \(\Delta G' W \Delta g\)) is \(O_p(n^{-1})\), and any three-way product (e.g., \(\Delta G' \Delta W \Delta g\)) is \(O_p(n^{-3/2})\). When scaled by \(\sqrt{n}\), these higher-order cross-products are at most \(O_p(n^{-1/2})\) and are therefore \(o_p(1)\).

    \vspace{0.5em}
    \noindent\textbf{One-Step GMM with Deterministic Weight $W$}

    The sample FOC is $F_n(\theta) = \widehat G(\theta)' W \widehat g(\theta)$ with derivative
    \begin{equation}
        \label{eq:one_step_foc}
        F_{n,\theta}(\theta) = \widehat G(\theta)' W \widehat G(\theta) + (\widehat g(\theta)' W \otimes I_p)\widehat R(\theta).
    \end{equation}
    By the ULLN and Continuous Mapping Theorem,
    \[
        F_{n,\theta}(\widetilde\theta) \xrightarrow{p} G'WG + H = A.
    \]
    Expanding $F_n(\theta_0)$ around the population limits $G$ and $g$ and using $G'Wg = 0$,
    \[
        \sqrt{n}F_n(\theta_0) = G'W\sqrt{n}(\widehat g - g) + \sqrt{n}(\widehat G - G)'Wg + o_p(1).
    \]
    The influence function of the FOC is, by the vectorization identity $(G(X_i,\theta_0) - G)'Wg = (g'W\otimes I_p)\gamma_i$,
    \[
        f_1(X_i) = G'W\nu_i + (g'W\otimes I_p)\gamma_i.
    \]
    Hence $\psi^{1s}(X_i) = -A^{-1}f_1(X_i)$, matching Proposition~\ref{prp:gmm_ifs}(i).

    \vspace{0.5em}
    \noindent\textbf{One-Step GMM with $\widehat W$ Estimated Independent of $\theta$}

    The FOC is $F_n(\theta) = \widehat G(\theta)' \widehat W \widehat g(\theta)$. Since $\widehat W$ does not depend on $\theta$ and $\widehat W \xrightarrow{p} W$ by Assumption~\ref{ass:regularity}(vi)(a), $F_{n,\theta}(\theta)$ has the same form as \eqref{eq:one_step_foc} with $W$ replaced by $\widehat W$, and the limit remains $A$.

    Expanding $\widehat G$, $\widehat g$, and $\widehat W$ around their limits,
    \[
        \sqrt{n}F_n(\theta_0) = G'W\sqrt{n}(\widehat g - g) + \sqrt{n}(\widehat G - G)'Wg + G'\sqrt{n}(\widehat W - W)g + \mathcal R_n,
    \]
    where $\mathcal R_n$ collects higher-order cross-products and is $o_p(1)$. By the vectorization identity, the third term equals $(g' \otimes G')\sqrt{n}\,\vect(\widehat W - W)$, whose influence function is $(g'\otimes G')\omega_i$. Define
    \[
        f_2(X_i) = f_1(X_i) + (g'\otimes G')\omega_i.
    \]
    Then
    \[
        \psi^{1s,W}(X_i) = -A^{-1}f_2(X_i),
    \]
    matching Proposition~\ref{prp:gmm_ifs}(ii).

    \vspace{0.5em}
    \noindent\textbf{Two-Step GMM}

    Let $\widehat\phi$ be the first-step estimator, converging in probability to $\phi_0$ with influence function $\psi^\phi(X_i)$. The second-step estimator $\widehat\theta$ uses $\widehat W(\widehat\phi)$ and converges to $\theta_0$, the minimizer of $Q(\theta) = g(\theta)'W(\phi_0)g(\theta)$. Generally $\phi_0 \neq \theta_0$ under misspecification.

    The second-step estimator satisfies $F_n(\widehat\theta, \widehat\phi) = \widehat G(\widehat\theta)'\widehat W(\widehat\phi)\widehat g(\widehat\theta) = 0$. Applying the mean value theorem row-by-row around $(\theta_0, \phi_0)$,
    \[
        0 = F_n(\theta_0, \phi_0) + F_{n,\theta}(\widetilde\theta, \widetilde\phi)(\widehat\theta - \theta_0) + F_{n,\phi}(\widetilde\theta, \widetilde\phi)(\widehat\phi - \phi_0),
    \]
    where $(\widetilde\theta, \widetilde\phi) \xrightarrow{p} (\theta_0, \phi_0)$. Rearranging,
    \[
        \sqrt{n}(\widehat\theta - \theta_0) = -\bigl[F_{n,\theta}(\widetilde\theta, \widetilde\phi)\bigr]^{-1}\Bigl(\sqrt{n}F_n(\theta_0, \phi_0) + F_{n,\phi}(\widetilde\theta, \widetilde\phi)\sqrt{n}(\widehat\phi - \phi_0)\Bigr).
    \]

    By the ULLN and CMT, $F_{n,\theta}(\widetilde\theta, \widetilde\phi) \xrightarrow{p} A = G'W(\phi_0)G + H$ with $H = (g'W(\phi_0)\otimes I_p)R$. The derivative with respect to $\phi$ is $F_{n,\phi}(\theta, \phi) = (\widehat g(\theta)'\otimes \widehat G(\theta)')\,\partial\vect(\widehat W(\phi))/\partial\phi'$. By Assumption~\ref{ass:regularity}(vi)(b), $\partial\vect(\widehat W(\widetilde\phi))/\partial\phi' \xrightarrow{p} S = \partial\vect(W(\phi))/\partial\phi'|_{\phi_0}$, so $F_{n,\phi}(\widetilde\theta, \widetilde\phi) \xrightarrow{p} B = (g'\otimes G')S$.

    Expanding $\sqrt{n}F_n(\theta_0, \phi_0)$ around $(G, W(\phi_0), g)$ and using the FOC $G'W(\phi_0)g = 0$,
    \begin{align*}
        \sqrt{n}F_n(\theta_0, \phi_0) & = G'W(\phi_0)\sqrt{n}(\widehat g - g) + \sqrt{n}(\widehat G - G)'W(\phi_0)g \\
                                      & \quad + G'\sqrt{n}(\widehat W(\phi_0) - W(\phi_0))g + o_p(1).
    \end{align*}
    The first two terms produce $f_1(X_i)$ evaluated at $W(\phi_0)$, and the third produces $(g'\otimes G')\omega_i$. Together they recover $f_2(X_i)$ evaluated at $W(\phi_0)$. By Slutsky's theorem,
    \[
        \sqrt{n}(\widehat\theta - \theta_0) = -A^{-1}\!\left(\frac{1}{\sqrt{n}}\sum_{i=1}^n f_2(X_i) + B\,\frac{1}{\sqrt{n}}\sum_{i=1}^n \psi^\phi(X_i)\right) + o_p(1),
    \]
    so
    \[
        \psi^{2s}(X_i) = -A^{-1}\bigl(f_2(X_i) + B\psi^\phi(X_i)\bigr),
    \]
    matching Proposition~\ref{prp:gmm_ifs}(iii), with $A$, $B$, $f_2$ evaluated at $W(\phi_0)$.

    \vspace{0.5em}
    \noindent\textbf{Iterated GMM}

    At the fixed point, $\widehat\theta$ satisfies $F_n(\widehat\theta) = \widehat G(\widehat\theta)'\widehat W(\widehat\theta)\widehat g(\widehat\theta) = 0$. Applying the mean value theorem around $\theta_0$,
    \[
        0 = F_n(\theta_0) + F_{n,\theta}(\widetilde\theta)(\widehat\theta - \theta_0).
    \]
    The total derivative of $F_n(\theta)$ with respect to $\theta$ via the product rule is
    \[
        F_{n,\theta}(\theta) = \bigl[\widehat G(\theta)'\widehat W(\theta)\widehat G(\theta) + (\widehat g(\theta)'\widehat W(\theta)\otimes I_p)\widehat R(\theta)\bigr] + \bigl[(\widehat g(\theta)'\otimes \widehat G(\theta)')\,\partial\vect(\widehat W(\theta))/\partial\theta'\bigr].
    \]
    At the fixed point, $W = W(\theta_0)$. By the ULLN and Assumption~\ref{ass:regularity}(vi)(b), the first bracket converges to $A$ and the second to $B$, with $S = \partial\vect(W(\theta))/\partial\theta'|_{\theta_0}$ and $B = (g'\otimes G')S$. Hence $F_{n,\theta}(\widetilde\theta) \xrightarrow{p} A + B$.

    The expansion of $\sqrt{n}F_n(\theta_0)$ follows the two-step derivation evaluated at $W(\theta_0)$, yielding $f_2(X_i)$ as the influence function. Therefore
    \[
        \psi^{it}(X_i) = -(A+B)^{-1}f_2(X_i),
    \]
    matching Proposition~\ref{prp:gmm_ifs}(iv). \qedhere
\end{proof}

\begin{proof}[Proof of Corollary \ref{cor:miss}]
    $(M_\nu)_k\nu_i\in\mathcal L(\nu_i)$, so $\Pi_\nu^\perp\psi_k(X_i) = \Pi_\nu^\perp r_k(X_i)$. By the definition of $\Delta_k$ as an $R^2$,
    \[
        \Delta_k = 1 - \frac{\mathbb E[(\Pi_\nu^\perp\psi_k(X_i))^2]}{\mathbb E[\psi_k(X_i)^2]} = 1 - \frac{\mathbb E[(\Pi_\nu^\perp r_k(X_i))^2]}{\mathbb E[\psi_k(X_i)^2]}.
    \]
\end{proof}

\begin{proof}[Proof of Proposition \ref{prp:eff_channel}]
    Since $f$ is nonrandom,
    $\partial g(X_i,\theta)/\partial\theta'=-\partial f(\theta)/\partial\theta'$
    is nonrandom, so $\gamma_i=0$ and the Jacobian channel vanishes; and
    $\widehat W=\widehat V^{-1}$ depends on the data only through $\{m_i\}$, not
    through a preliminary estimate of $\theta$, so the first-step channel is
    absent. The sample covariance $\widehat V$ has influence function
    $\nu_i\nu_i'-V$, so by the delta method for matrix inversion,
    $\mathrm d(M^{-1})=-M^{-1}(\mathrm dM)M^{-1}$ at $M=V$, the weight has
    influence function $-W(\nu_i\nu_i'-V)W=W-r_ir_i'$ with $r_i=W\nu_i$, i.e.\
    $\omega_i=\vect(W-r_ir_i')$. Using $G'Wg=0$,
    \[
        (g'\otimes G')\omega_i=G'(W-r_ir_i')g=-(G'r_i)(r_i'g)
        =-(G'W\nu_i)(\nu_i'Wg),
    \]
    and Proposition~\ref{prp:gmm_ifs}(ii) gives
    $\psi(X_i)=-A^{-1}[\,G'W\nu_i+(g'\otimes G')\omega_i\,]
    =M_\nu\nu_i(1-\nu_i'Wg)$.

    From $WVW=W$:
    $\mathbb E[(M_\nu\nu_i)(M_\nu\nu_i)']=A^{-1}G'WVWGA^{-1}=V_0$,
    $\mathbb E[\nu_i'Wg]=0$, $\mathbb E[(\nu_i'Wg)^2]=g'WVWg=g'Wg$, and
    $\mathbb E[(M_\nu\nu_i)(\nu_i'Wg)]=M_\nu VWg=-A^{-1}G'Wg=0$ by the
    first-order condition. Hence $\mathbb E[(1-\nu_i'Wg)^2]=1+g'Wg$,
    $\mathbb E[\psi(X_i)]=0$, and, by the definition of covariance,
    \[
        \operatorname{Var}[\psi_k(X_i)]
        =\mathbb E\big[(M_\nu\nu_i)_k^2(1-\nu_i'Wg)^2\big]
        =\mathbb E[(M_\nu\nu_i)_k^2]\,\mathbb E[(1-\nu_i'Wg)^2]+\kappa_k
        =(1+g'Wg)\,V_{0,kk}+\kappa_k .
    \]
    In the notation of Corollary~\ref{cor:miss},
    $\psi_k(X_i)=(M_\nu)_k\nu_i+r_k(X_i)$ with
    $r_k(X_i)=-(M_\nu\nu_i)_k(\nu_i'Wg)$, so $\Delta_k\le1$ always; if $g=0$ then
    $\nu_i'Wg\equiv0$, $r_k\equiv0\in\mathcal L(\nu_i)$, and $\Delta_k=1$.

    If $\nu_i$ is Gaussian, the vector $G'W\nu_i$ and the scalar $\nu_i'Wg$ are
    jointly Gaussian and uncorrelated,
    $\mathbb E[(G'W\nu_i)(\nu_i'Wg)]=G'Wg=0$, hence independent. Then
    $(M_\nu\nu_i)(M_\nu\nu_i)'$ is independent of $(1-\nu_i'Wg)^2$, so
    $\kappa_k=0$ and
    $\operatorname{Var}[\psi(X_i)]
    =\mathbb E[(M_\nu\nu_i)(M_\nu\nu_i)']\,\mathbb E[(1-\nu_i'Wg)^2]
    =(1+g'Wg)V_0$. Because all third moments of the mean-zero Gaussian $\nu_i$
    vanish, $\mathbb E[r_k(X_i)\,c'\nu_i]=0$ for every $c$, so
    $\Pi_\nu^\perp r_k=r_k$ and, by Corollary~\ref{cor:miss},
    \[
        \Delta_k=1-\frac{\mathbb E[r_k(X_i)^2]}{\mathbb E[\psi_k(X_i)^2]}
        =1-\frac{(g'Wg)\,V_{0,kk}}{(1+g'Wg)\,V_{0,kk}}=\frac{1}{1+g'Wg}.
    \]
\end{proof}

\begin{proof}[Proof of Proposition \ref{prp:sens_to_1s}]
    Let \(F_{n}(\theta, \phi) = \widehat{G}(\theta)^{\prime} \widehat{W}(\phi) \widehat{g}(\theta)\) denote the sample FOC function for the second-step GMM estimator. By definition, the second-step estimator \(\widehat{\theta}(\widehat{\phi})\) exactly satisfies the sample FOC evaluated at the first-step estimate \(\widehat{\phi}\):
    \[
        F_{n}(\widehat{\theta}(\widehat{\phi}), \widehat{\phi}) = 0.
    \]

    Under Assumption \ref{ass:regularity}, the sample moments and their derivatives are continuously differentiable in a neighborhood of the estimates. Applying the Implicit Function Theorem to the FOC with respect to \(\widehat{\phi}\) yields the finite-sample derivative
    \[
        \frac{\partial \widehat{\theta}(\widehat{\phi})}{\partial \widehat{\phi}^{\prime}} = -[F_{n,\theta}(\widehat{\theta}, \widehat{\phi})]^{-1} F_{n,\phi}(\widehat{\theta}, \widehat{\phi}),
    \]
    where \(F_{n,\theta}(\theta, \phi) = \frac{\partial F_n(\theta, \phi)}{\partial \theta^{\prime}}\), \(F_{n,\phi}(\theta, \phi) = \frac{\partial F_n(\theta, \phi)}{\partial \phi^{\prime}}\), and \(\widehat{\theta}\) is the second-step GMM estimator.

    As established in the proof of Proposition \ref{prp:gmm_ifs}, the ULLN over the compact parameter space combined with the consistency of the preliminary and second-step estimators, \((\widehat{\theta}, \widehat{\phi}) \xrightarrow{p} (\theta_0, \phi_0)\), ensures that the sample derivatives evaluated at the estimates converge in probability to their population counterparts evaluated at the pseudo-true values:
    \[
        F_{n,\theta}(\widehat{\theta}, \widehat{\phi}) \xrightarrow{p} A \quad \text{and} \quad F_{n,\phi}(\widehat{\theta}, \widehat{\phi}) \xrightarrow{p} B,
    \]
    where \(A = G^{\prime} W(\phi_0) G + H\) is the population curvature matrix of the second-step objective, and \(B = (g^{\prime} \otimes G^{\prime}) S(\phi_0)\) captures the sensitivity of the optimal weight matrix to the first-step parameter.

    By Assumption \ref{ass:regularity}(iv), the population curvature matrix \(A\) is nonsingular. Applying the Continuous Mapping Theorem to the matrix inverse, we obtain the probability limit of the finite-sample sensitivity
    \[
        \Lambda_\phi = \text{plim} \frac{\partial \widehat{\theta}(\widehat{\phi})}{\partial \widehat{\phi}^{\prime}} = -A^{-1}B.
    \]
\end{proof}

The following lemma records the geometric convergence of the iterated
estimand, used in the proof of Proposition~\ref{prp:variance_reduction}.

\begin{lemma}[Geometric convergence of the iterated estimand]
    \label{lem:estimand}
    Suppose Assumption~\ref{ass:regularity} holds, and let $\theta^{(s)}$ be
    the probability limit of the $s$-step GMM estimator, defined recursively
    by the population first-order condition
    $G(\theta^{(s)})'W(\theta^{(s-1)})g(\theta^{(s)})=0$. For every
    $\rho\in(\rho(-A^{-1}B),1)$ there exist a neighborhood $\mathcal N_0$ of
    the iterated fixed point $\theta_0$ and a constant $c_0>0$ such that, for
    any first step $\theta^{(0)}\in\mathcal N_0$,
    $\|\theta^{(s)}-\theta_0\|\le c_0\,\rho^{\,s}$ for all $s\ge0$.
\end{lemma}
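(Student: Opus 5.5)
We treat the recursion defining $\theta^{(s)}$ as a fixed-point iteration $\theta^{(s)}=T(\theta^{(s-1)})$ for a population updating map $T$. Here $T(\phi)$ is defined implicitly by the population first-order condition $F(\theta,\phi):=G(\theta)'W(\phi)g(\theta)=0$. The plan has three steps: first define $T$ locally via the implicit function theorem, then compute its derivative at $\theta_0$, and finally run the standard adapted-norm argument that turns a spectral-radius bound into a geometric contraction.

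\textbf{Step 1: the map $T$.} The function $F$ is continuously differentiable near $(\theta_0,\theta_0)$. This follows from the smoothness of $g$ in Assumption~\ref{ass:regularity}(iii), where dominated convergence lets us interchange expectation and differentiation, together with the continuous differentiability of $W(\cdot)$ in (vi)(b). At the fixed point we have $F(\theta_0,\theta_0)=0$. The partial derivatives there are
\[
\partial F/\partial\theta'=G'WG+(g'W\otimes I_p)R=A,\qquad \partial F/\partial\phi'=(g'\otimes G')S=B,
\]
by the same product-rule computation used in the proof of Proposition~\ref{prp:gmm_ifs}. Since $A$ is nonsingular by Assumption~\ref{ass:regularity}(iv), the implicit function theorem gives a $C^1$ map $T$ on a neighborhood $U$ of $\theta_0$ with $T(\theta_0)=\theta_0$ and $DT(\theta_0)=-A^{-1}B$. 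We must also check that $T(\phi)$ coincides with the probability limit $\theta^{(s)}$, that is, with the minimizer of $g(\theta)'W(\phi)g(\theta)$. For this we invoke the local uniqueness of the minimizer, as in Assumption~\ref{ass:regularity}(ii),(v), together with continuity of the argmin in $\phi$. Identifying $T$ with the argmin map is a point we expect to state as a maintained consequence of (v) rather than prove in full.

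\textbf{Step 2: adapted norm.} Fix $\rho\in(\rho(-A^{-1}B),1)$ and set $\rho_1=(\rho(-A^{-1}B)+\rho)/2$. By the standard lemma (via the Jordan form, or $\|x\|_*=\sum_{j\ge0}\rho_1^{-j}\|(A^{-1}B)^jx\|$), there is a vector norm $\|\cdot\|_*$ on $\mathbb R^p$ whose induced operator norm satisfies $\|A^{-1}B\|_*\le\rho_1<\rho$. By continuity of $DT$, choose a closed $\|\cdot\|_*$-ball $\mathcal N_0=\{\phi:\|\phi-\theta_0\|_*\le r\}\subset U$ on which $\|DT(\phi)\|_*\le\rho$.

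\textbf{Step 3: iteration.} The mean value inequality on the convex set $\mathcal N_0$ gives
\[
\|T(\phi)-\theta_0\|_*=\|T(\phi)-T(\theta_0)\|_*\le\rho\|\phi-\theta_0\|_*.
\]
Hence $T$ maps $\mathcal N_0$ into itself, and by induction $\|\theta^{(s)}-\theta_0\|_*\le\rho^s\|\theta^{(0)}-\theta_0\|_*\le\rho^s r$. Norm equivalence, $\|x\|\le K\|x\|_*$, yields the claim with $c_0=Kr$. This constant is uniform over starting points in $\mathcal N_0$.

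The main obstacle is conceptual rather than computational. Assumption~\ref{ass:regularity}(v) only asserts a contraction at $\theta_0$. We read this as $\rho(-A^{-1}B)<1$, as stated in the text after Proposition~\ref{prp:sens_to_1s}. The spectral-radius condition does not make $T$ contractive in the Euclidean norm, so the adapted norm in Step~2 is essential, and it is also why the rate is $\rho$ strictly above the spectral radius rather than the spectral radius itself.
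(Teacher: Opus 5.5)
Your proposal is correct and follows essentially the same route as the paper: the implicit function theorem gives the updating map $T$ with $DT(\theta_0)=-A^{-1}B$, and an adapted norm makes $T$ a $\rho$-contraction near $\theta_0$ (the paper cites Bernstein's Fact~9.8.4, whereas you build the norm explicitly as a series); iteration and norm equivalence then give the bound. Choosing $\mathcal N_0$ as a closed $\|\cdot\|_*$-ball and applying the mean value inequality makes precise the forward-invariance step that the paper asserts without detail.
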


\begin{proof}
    By Assumption~\ref{ass:regularity}(iii), (iv), and (vi)(b), the implicit
    function theorem applied to $G(\theta)'W(\phi)g(\theta)=0$ defines the
    population updating map $T(\phi)$ continuously differentiably near
    $\theta_0$, with $T'(\theta_0)=-A^{-1}B$
    (Proposition~\ref{prp:sens_to_1s}). Fix $\rho\in(\rho(-A^{-1}B),1)$;
    applying \citet*[Fact~9.8.4]{bernstein2009matrix} to
    $\rho^{-1}(-A^{-1}B)$, whose spectral radius is below one, gives a
    submultiplicative matrix norm $\|\cdot\|_*$ with $\|-A^{-1}B\|_*<\rho$,
    and we equip $\mathbb R^p$ with a compatible vector norm, also written
    $\|\cdot\|_*$. By continuous differentiability there is a neighborhood
    $\mathcal N_0$ of $\theta_0$ on which
    $\|T(\theta)-\theta_0\|_*\le\rho\|\theta-\theta_0\|_*$; since $\rho<1$,
    $\mathcal N_0$ is forward invariant and iterating from
    $\theta^{(0)}\in\mathcal N_0$ gives
    $\|\theta^{(s)}-\theta_0\|_*\le\rho^{\,s}\|\theta^{(0)}-\theta_0\|_*$.
    The claim follows by equivalence of norms on $\mathbb R^p$, with $c_0$
    determined by the norm-equivalence constants. This records, in the form
    used below, the contraction result of \citet*{hansen2021inference}.
\end{proof}

\begin{proof}[Proof of Proposition~\ref{prp:variance_reduction}]
    Fix $\tilde\rho\in(\rho(-A^{-1}B),1)$ and
    $\rho_1\in(\rho(-A^{-1}B),\tilde\rho)$. Let $\psi^{(s)}$ denote the
    idealized recursion with all matrices fixed at $\theta_0$,
    \[
        \psi^{(s)}(X_i)=-A^{-1}f_2(X_i)-A^{-1}B\,\psi^{(s-1)}(X_i),\qquad
        \psi^{(1)}(X_i)=\psi^{1s,W}(X_i)=-A^{-1}f_2(X_i),
    \]
    with $f_2$ as in the proof of Proposition~\ref{prp:gmm_ifs} evaluated at
    $\theta_0$. Since $\rho(-A^{-1}B)<1$ by
    Assumption~\ref{ass:regularity}(v), the Neumann series gives
    $\psi^{it}=-(A+B)^{-1}f_2=\sum_{k\ge0}(-A^{-1}B)^k(-A^{-1}f_2)$, so
    $\psi^{it}$ is the fixed point of the recursion, the difference
    $d_s=\psi^{(s)}-\psi^{it}$ satisfies $d_s=(-A^{-1}B)\,d_{s-1}$, and
    \[
        \|\psi^{(s)}(X_i)-\psi^{it}(X_i)\|_{L^2(P)}
        \le\|(-A^{-1}B)^{s-1}\|_{\mathrm{op}}\,
        \|\psi^{1s,W}(X_i)-\psi^{it}(X_i)\|_{L^2(P)}
        \le c_1\tilde\rho^{\,s-1}\,
        \|\psi^{1s,W}(X_i)-\psi^{it}(X_i)\|_{L^2(P)},
    \]
    where $\|(-A^{-1}B)^{k}\|_{\mathrm{op}}\le c_1\tilde\rho^{\,k}$ follows
    from applying \citet*[Fact~9.8.4]{bernstein2009matrix} to
    $\tilde\rho^{-1}(-A^{-1}B)$, whose spectral radius is below one, together
    with submultiplicativity of the resulting norm $\|\cdot\|_*$ and
    equivalence of norms.

    For the evaluation error $e_s=\psi^{ss}-\psi^{(s)}$, subtracting the two
    recursions gives
    \[
        e_s=(-A_s^{-1}B_s)\,e_{s-1}+\mathcal R_s,\qquad
        \mathcal R_s=\big(A^{-1}f_2-A_s^{-1}f_{2,s}\big)
        +\big(A^{-1}B-A_s^{-1}B_s\big)\psi^{(s-1)} .
    \]
    Since $\theta^{(0)}\in\mathcal N_0$, Lemma~\ref{lem:estimand} gives
    $\|\theta^{(s)}-\theta_0\|=O(\rho_1^{s})$, and by
    Assumption~\ref{ass:regularity}(iii) and the Lipschitz weight conditions
    the maps $(\theta,\phi)\mapsto A^{-1}B$ and
    $(\theta,\phi)\mapsto A^{-1}f_2$ are locally Lipschitz near
    $(\theta_0,\theta_0)$, the latter in $L^2(P)$. Hence
    $\|\mathcal R_s\|_{L^2(P)}=O(\rho_1^{\,s-1})$, $e_1$ is bounded in
    $L^2(P)$, and $\|(-A_k^{-1}B_k)-(-A^{-1}B)\|_*=O(\rho_1^{\,k-1})$, so the
    products of the step-specific matrices satisfy
    \[
        \Big\|\prod_{k=j+1}^{s}(-A_k^{-1}B_k)\Big\|_*
        \le\prod_{k=j+1}^{s}\big(\tilde\rho+O(\rho_1^{\,k-1})\big)
        \le\tilde\rho^{\,s-j}\prod_{k\ge1}
        \big(1+O(\rho_1^{\,k-1})/\tilde\rho\big)
        \le c_2\,\tilde\rho^{\,s-j},
    \]
    the infinite product converging because $\sum_k\rho_1^{k}<\infty$.
    Unrolling the $e_s$ recursion,
    \[
        \|e_s\|_{L^2(P)}
        \le c_2\tilde\rho^{\,s-1}\|e_1\|_{L^2(P)}
        +c_2\sum_{j=2}^{s}\tilde\rho^{\,s-j}\,\|\mathcal R_j\|_{L^2(P)}
        \le c_3\Big(\tilde\rho^{\,s-1}
        +\sum_{j=2}^{s}\tilde\rho^{\,s-j}\rho_1^{\,j-1}\Big)
        \le C_{\tilde\rho}\,\tilde\rho^{\,s-1},
    \]
    the last step because
    $\sum_{j=2}^{s}\tilde\rho^{\,s-j}\rho_1^{\,j-1}
    =\tilde\rho^{\,s-1}\sum_{j=2}^{s}(\rho_1/\tilde\rho)^{\,j-1}
    \le\tilde\rho^{\,s-1}\,\rho_1/(\tilde\rho-\rho_1)$, the geometric series
    in $\rho_1/\tilde\rho$ being convergent. The triangle inequality combines
    the two bounds and yields the stated rate.

    For the consequences, Cauchy--Schwarz gives
    $|\operatorname{Var}(\psi^{ss}_k)-\operatorname{Var}(\psi^{it}_k)|
    \le\|\psi^{ss}_k-\psi^{it}_k\|_{L^2(P)}
    \,(\|\psi^{ss}_k\|_{L^2(P)}+\|\psi^{it}_k\|_{L^2(P)})
    =O(\tilde\rho^{\,s-1})$. For the informativeness, write
    $\nu_i(\theta)=g(X_i,\theta)-\mathbb E[g(X_i,\theta)]$; the projection
    direction drifts with the step, but
    $\|\nu_i(\theta^{(s)})-\nu_i(\theta_0)\|_{L^2(P)}=O(\rho_1^{\,s})$ by the
    mean-square Lipschitz continuity of $\theta\mapsto g(X_i,\theta)$ under
    Assumption~\ref{ass:regularity}(iii), so
    $\mathbb E[\psi^{ss}\nu_i(\theta^{(s)})']$ and
    $\mathbb E[\nu_i(\theta^{(s)})\nu_i(\theta^{(s)})']$ converge to their
    fixed-point counterparts at the rate $\tilde\rho^{\,s-1}$. The limiting
    second-moment matrix is nonsingular, as required throughout for the
    sensitivity $\Lambda$, and matrix inversion is locally Lipschitz at a
    nonsingular matrix, so the projection coefficients and the projected
    variance converge at the same rate; when
    $\operatorname{Var}[\psi^{it}_k(X_i)]>0$ the ratio converges,
    $\Delta_k^{ss}\to\Delta_k^{it}$, at the rate $\tilde\rho^{\,s-1}$.
\end{proof}

\newpage
\section{Appendix: CUGMM}
\label{sec:cugmm}

The Continuously Updating GMM (CUGMM) estimator treats the weight matrix as a function of the parameter and updates it jointly with \(\theta\). The parameter dependence of the weight matrix complicates the asymptotic analysis under misspecification. This appendix provides a formal treatment of the CUGMM estimator. We establish its consistency, derive its misspecification-robust influence function, and provide the explicit derivations for the \citet*{schennach2007point} example.

The CUGMM estimator is defined as the minimizer of the sample objective function
\[
    \widehat{Q}_{cue}(\theta) = \widehat{g}(\theta)'\widehat{\Omega}(\theta)^{-1}\widehat{g}(\theta),
\]
where \(\widehat{\Omega}(\theta) = \frac{1}{n}\sum_{i=1}^n g(X_i, \theta)g(X_i, \theta)'\) is the uncentered sample second-moment matrix.

The CUGMM objective does not depend on whether the weight matrix uses the uncentered second-moment matrix \(\widehat{\Omega}(\theta)\) or the centered covariance \(\widehat{V}(\theta) = \widehat{\Omega}(\theta) - \widehat{g}(\theta)\widehat{g}(\theta)'\). By the Sherman-Morrison formula, \(\widehat{Q}_{cue}(\theta) = \widehat{Q}_c(\theta)/(1 + \widehat{Q}_c(\theta))\), where \(\widehat{Q}_c(\theta) = \widehat{g}(\theta)'\widehat{V}(\theta)^{-1}\widehat{g}(\theta)\), and \(x/(1+x)\) is strictly increasing for \(x \ge 0\). The two objectives are monotone transformations of each other and share the same minimizer, in sample and in population, so \(\widehat{\theta}_{cue}\) and its pseudo-true value \(\theta_0\) do not depend on the centering. We use the uncentered matrix \(\Omega(\theta)\) without loss of generality.

To characterize the local behavior of CUGMM, we define the following higher-order population curvature matrices evaluated at \(\theta_0\). Let \(W = \Omega(\theta_0)^{-1}\) and \(v = Wg\). Furthermore, let \(S = \frac{\partial}{\partial \theta'} \text{vec}(\Omega(\theta))\big|_{\theta_0}\). We define
\begin{align*}
    T & = \frac{\partial}{\partial \theta'} \text{vec}(S(\theta)')\Big|_{\theta_0},                        \\
    K & = \left((v \otimes v)' \otimes I_p\right)T + 2S'(v \otimes I_q)\left(WG - (g'W \otimes W)S\right).
\end{align*}
The misspecification-robust population curvature matrix for CUGMM is defined as \(A_{cu} = 2(A + B) - K\), where \(A\) and \(B\) are the population curvature matrices of iterated GMM evaluated at \(W\), as defined in Section \ref{sec:ifgmm}.

We impose the following regularity conditions:
\begin{assumption}
    \label{ass:cugmm}
    \begin{enumerate}
        \item[(i)] There exists a unique \(\theta_0\) in the interior of \(\Theta\) such that for any \(\epsilon > 0\), \(\inf_{\theta \in \Theta: \|\theta - \theta_0\| \ge \epsilon} Q_{cue}(\theta) > Q_{cue}(\theta_0)\), where \(Q_{cue}(\theta) = \mathbb{E}[g(X_i,\theta)]'\Omega(\theta)^{-1}\mathbb{E}[g(X_i,\theta)]\).
        \item[(ii)] The population curvature matrix \(A_{cu}\) is nonsingular.
        \item[(iii)] For the population centered covariance matrix \(V(\theta) = \Omega(\theta) - \mathbb{E}[g(X_i,\theta)]\mathbb{E}[g(X_i,\theta)]'\), there exists a constant \(c > 0\) such that \(\inf_{\theta \in \Theta} \lambda_{\min}(V(\theta)) \ge c\).
    \end{enumerate}
\end{assumption}

Assumption \ref{ass:cugmm}(iii) precludes the variance collapse highlighted by \citet*{kleibergen2025double}. By ensuring uniform non-singularity, we guarantee that the estimator does not spuriously minimize the objective by pushing the parameter to the boundary and inflating the variance determinant. Assumption \ref{ass:cugmm}(i) ensures that identification strength dominates the degree of misspecification.

\begin{proposition}[Consistency of CUGMM under Misspecification]
    \label{prp:cugmm_consistency}
    Suppose Assumption \ref{ass:regularity} and Assumption \ref{ass:cugmm} hold. Then, the CUGMM estimator converges in probability to the unique pseudo-true value \(\widehat{\theta}_{cue} \xrightarrow{p} \theta_0\).
\end{proposition}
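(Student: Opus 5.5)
The plan is the standard extremum-estimator argument \citep*[Theorem~2.1]{newey1994large}. Identification is Assumption~\ref{ass:cugmm}(i), and $\Theta$ is compact by Assumption~\ref{ass:regularity}(ii). What remains is uniform convergence, $\sup_{\theta\in\Theta}|\widehat Q_{cue}(\theta)-Q_{cue}(\theta)|\xrightarrow{p}0$, together with continuity of $Q_{cue}$. Given these, the usual argument applies. For any $\epsilon>0$ let $\eta=\inf_{\|\theta-\theta_0\|\ge\epsilon}Q_{cue}(\theta)-Q_{cue}(\theta_0)>0$. The event $\|\widehat\theta_{cue}-\theta_0\|\ge\epsilon$ then forces $\sup_\theta|\widehat Q_{cue}-Q_{cue}|\ge\eta/2$, whose probability tends to zero.

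The first step is uniform convergence of the ingredients. By the ULLN already used in the proof of Proposition~\ref{prp:gmm_ifs}, Assumption~\ref{ass:regularity}(i)--(iii) give $\sup_\theta\|\widehat g(\theta)-g(\theta)\|\xrightarrow{p}0$. The envelope condition $\mathbb E[\sup_\theta\|g(X,\theta)\|^2]<\infty$, combined with continuity of $g(X,\cdot)$, gives a ULLN for the outer product \citep*[Theorem~3]{andrews1992generic}, so $\sup_\theta\|\widehat\Omega(\theta)-\Omega(\theta)\|\xrightarrow{p}0$. The same dominated-convergence argument shows that $g(\theta)$ and $\Omega(\theta)$ are continuous on $\Theta$.

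The second step, and the main obstacle, is controlling the inverse weight uniformly. Here I would not work with $\Omega^{-1}$ directly. Instead I would use the Sherman--Morrison representation noted above: $Q_{cue}=Q_c/(1+Q_c)$ with $Q_c(\theta)=g(\theta)'V(\theta)^{-1}g(\theta)$, and likewise in sample. The key fact is that $\Omega(\theta)=V(\theta)+g(\theta)g(\theta)'\succeq V(\theta)$, so Assumption~\ref{ass:cugmm}(iii) gives $\lambda_{\min}(\Omega(\theta))\ge c$ uniformly on $\Theta$. Hence $\sup_\theta\|\Omega(\theta)^{-1}\|\le 1/c$. On the event $\sup_\theta\|\widehat\Omega-\Omega\|\le c/2$, which has probability approaching one, Weyl's inequality gives $\lambda_{\min}(\widehat\Omega(\theta))\ge c/2$ for all $\theta$. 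The identity $\widehat\Omega^{-1}-\Omega^{-1}=-\widehat\Omega^{-1}(\widehat\Omega-\Omega)\Omega^{-1}$ then yields $\sup_\theta\|\widehat\Omega(\theta)^{-1}-\Omega(\theta)^{-1}\|\le (2/c^2)\sup_\theta\|\widehat\Omega-\Omega\|\xrightarrow{p}0$. This uniform eigenvalue floor is exactly what rules out the variance-collapse pathology of \citet*{kleibergen2025double}. Without it, the inverse could blow up near parameter values where $\Omega$ degenerates, and uniformity would fail.

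The third step combines the pieces. Because $\sup_\theta\|g(\theta)\|<\infty$ by compactness and continuity, the trilinear form $\widehat g'\widehat\Omega^{-1}\widehat g$ differs from $g'\Omega^{-1}g$ by a sum of terms, each containing one uniformly vanishing factor multiplied by uniformly bounded factors. This gives uniform convergence of $\widehat Q_{cue}$ to $Q_{cue}$. Continuity of $Q_{cue}$ follows from continuity of $g(\cdot)$ and $\Omega(\cdot)^{-1}$ on the set where $\lambda_{\min}(\Omega)\ge c$. Invoking Theorem~2.1 of \citet*{newey1994large} then gives $\widehat\theta_{cue}\xrightarrow{p}\theta_0$. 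Measurability of the minimizer is handled as usual by taking an approximate minimizer.
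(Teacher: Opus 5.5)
Your proposal is correct and follows essentially the same route as the paper. Both arguments use the ULLN for $\widehat g$ and $\widehat\Omega$, the eigenvalue floor $\lambda_{\min}(\Omega(\theta))\ge c$ from $\Omega=V+gg'\succeq V$, the $(2/c^2)$ Lipschitz bound on the inverse, uniform convergence of $\widehat Q_{cue}$, and the well-separated-minimum argument from Assumption~\ref{ass:cugmm}(i). The one wrinkle is that your Sherman--Morrison detour is never used: the eigenvalue floor comes from $\Omega\succeq V$ alone, and you then work with $\Omega^{-1}$ directly, exactly as the paper does.
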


\begin{proof}[Proof of Proposition \ref{prp:cugmm_consistency}]
    First, we show that Assumption \ref{ass:cugmm}(i) implies \(\theta_0\) is the unique minimizer of the population objective \(Q_{cue}(\theta)\). Suppose, for a contradiction, there exists another minimizer \(\theta_1 \neq \theta_0\). Then \(Q_{cue}(\theta_1) \le Q_{cue}(\theta_0)\). Let \(\epsilon = \|\theta_1 - \theta_0\| > 0\). By Assumption~\ref{ass:cugmm}(i), \(\inf_{\|\theta - \theta_0\| \ge \epsilon} Q_{cue}(\theta) > Q_{cue}(\theta_0)\). Since \(\|\theta_1 - \theta_0\| \ge \epsilon\), it follows that \(Q_{cue}(\theta_1) \ge \inf_{\|\theta - \theta_0\| \ge \epsilon} Q_{cue}(\theta) > Q_{cue}(\theta_0)\), which contradicts \(Q_{cue}(\theta_1) \le Q_{cue}(\theta_0)\). Thus, \(\theta_0\) is unique.

    By the uniform integrability conditions in Assumption \ref{ass:regularity}, the ULLN applies to the sample moments and the uncentered sample covariance matrix over the compact set \(\Theta\):
    \[
        \sup_{\theta \in \Theta} \|\widehat{g}(\theta) - \mu(\theta)\| \xrightarrow{p} 0, \quad \text{and} \quad \sup_{\theta \in \Theta} \|\widehat{\Omega}(\theta) - \Omega(\theta)\| \xrightarrow{p} 0,
    \]
    where \(\mu(\theta) = \mathbb{E}[g(X_i, \theta)]\).

    Because \(\Omega(\theta) = V(\theta) + \mu(\theta)\mu(\theta)'\), we have \(\Omega(\theta) \ge V(\theta)\) in the positive semi-definite sense. Therefore, Assumption \ref{ass:cugmm}(iii) implies \(\inf_{\theta \in \Theta} \lambda_{\min}(\Omega(\theta)) \ge c > 0\) holds as well.

    Combined with the uniform convergence \(\sup_{\theta}\|\widehat{\Omega}(\theta) - \Omega(\theta)\| \xrightarrow{p} 0\), this lower bound implies that, with probability approaching one, \(\lambda_{\min}(\widehat{\Omega}(\theta)) \ge c/2\) for all \(\theta \in \Theta\). On the set of symmetric matrices with smallest eigenvalue at least \(c/2\), the inverse satisfies the Lipschitz bound \(\|\widehat{\Omega}(\theta)^{-1} - \Omega(\theta)^{-1}\| = \|\widehat{\Omega}(\theta)^{-1}(\Omega(\theta) - \widehat{\Omega}(\theta))\Omega(\theta)^{-1}\| \le (2/c^2)\|\widehat{\Omega}(\theta) - \Omega(\theta)\|\). Taking the supremum over \(\theta\) gives
    \[
        \sup_{\theta \in \Theta} \|\widehat{\Omega}(\theta)^{-1} - \Omega(\theta)^{-1}\| \le \frac{2}{c^2}\,\sup_{\theta \in \Theta} \|\widehat{\Omega}(\theta) - \Omega(\theta)\| \xrightarrow{p} 0.
    \]
    We bound the difference between the sample and population objective functions using the triangle inequality and sub-multiplicativity of norms:
    \begin{align*}
        |\widehat{Q}_{cue}(\theta) - Q_{cue}(\theta)| \le & \ \|\widehat{g}(\theta) - \mu(\theta)\|^2 \|\widehat{\Omega}(\theta)^{-1}\| \nonumber                \\
                                                          & + 2\|\mu(\theta)\| \|\widehat{\Omega}(\theta)^{-1}\| \|\widehat{g}(\theta) - \mu(\theta)\| \nonumber \\
                                                          & + \|\mu(\theta)\|^2 \|\widehat{\Omega}(\theta)^{-1} - \Omega(\theta)^{-1}\|.
    \end{align*}
    Since \(\mu(\theta)\) and \(\Omega(\theta)^{-1}\) are uniformly bounded over \(\Theta\), each term converges to zero uniformly in probability. Thus, \(\sup_{\theta \in \Theta} |\widehat{Q}_{cue}(\theta) - Q_{cue}(\theta)| \xrightarrow{p} 0\).

    By definition, the estimator satisfies \(\widehat{Q}_{cue}(\widehat{\theta}_{cue}) \le \widehat{Q}_{cue}(\theta_0)\). Using the uniform convergence established above, it follows that \(Q_{cue}(\widehat{\theta}_{cue}) \le Q_{cue}(\theta_0) + o_p(1)\). Finally, for any \(\epsilon > 0\), there exists an \(\eta > 0\) such that \(\inf_{\|\theta - \theta_0\| \ge \epsilon} Q_{cue}(\theta) > Q_{cue}(\theta_0) + \eta\). The event \(\|\widehat{\theta}_{cue} - \theta_0\| \ge \epsilon\) implies \(\{Q_{cue}(\widehat{\theta}_{cue}) > Q_{cue}(\theta_0) + \eta\}\). However, since \(Q_{cue}(\widehat{\theta}_{cue}) \le Q_{cue}(\theta_0) + o_p(1)\), the probability of the latter converges to zero. Therefore, \(\lim_{n\rightarrow\infty} \mathbb{P}(\|\widehat{\theta}_{cue} - \theta_0\| \ge \epsilon) = 0\), establishing that \(\widehat{\theta}_{cue} \xrightarrow{p} \theta_0\).
\end{proof}

\begin{proposition}[Influence Function for CUGMM]
    \label{prp:cugmm_if}
    Suppose the conditions of Proposition \ref{prp:cugmm_consistency} hold. The influence function for the CUGMM estimator under misspecification is given by
    \[
        \psi^{cu}(X_i) = -A_{cu}^{-1} f_{cu}(X_i),
    \]
    where \(f_{cu}(X_i) = f_1(X_i) - f_2(X_i)\). Letting \(v = Wg\), \(\psi_\Omega(X_i) = g(X_i,\theta_0)g(X_i,\theta_0)' - \Omega(\theta_0)\), and \(\widetilde{\psi}_v(X_i) = W\nu(X_i) - W\psi_\Omega(X_i)v\), the components are defined as
    \begin{align*}
        f_1(X_i) & = 2(G(X_i,\theta_0) - G)'v + 2G'\widetilde{\psi}_v(X_i),                                                             \\
        f_2(X_i) & = \widetilde{\psi}_{Sv}(X_i) + S'\left(v \otimes \widetilde{\psi}_v(X_i) + \widetilde{\psi}_v(X_i) \otimes v\right),
    \end{align*}
    with \(\widetilde{\psi}_{Sv}(X_i) = 2(g(X_i,\theta_0)'v)G(X_i,\theta_0)'v - S'(v \otimes v)\).
\end{proposition}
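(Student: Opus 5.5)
The argument mirrors the template of Proposition~\ref{prp:gmm_ifs}: linearize the first-order condition of $\widehat Q_{cue}$ at $\theta_0$, identify the probability limit of its Jacobian as $A_{cu}$, and obtain an asymptotically linear expansion of the score at $\theta_0$ whose summand is $f_{cu}(X_i)$. Consistency comes from Proposition~\ref{prp:cugmm_consistency}.

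\emph{Step 1: the score.} Write $\widehat v(\theta)=\widehat\Omega(\theta)^{-1}\widehat g(\theta)$ and $\widehat S(\theta)=\partial\vect\widehat\Omega(\theta)/\partial\theta'$. Using $\mathrm d(\Omega^{-1})=-\Omega^{-1}(\mathrm d\Omega)\Omega^{-1}$ and $\vect(\widehat v'\,\mathrm d\widehat\Omega\,\widehat v)=(\widehat v\otimes\widehat v)'\mathrm d\vect\widehat\Omega$, the gradient is
\[
F_n(\theta)=2\widehat G(\theta)'\widehat v(\theta)-\widehat S(\theta)'\bigl(\widehat v(\theta)\otimes\widehat v(\theta)\bigr).
\]
With probability approaching one, $F_n(\widehat\theta_{cue})=0$, since $\theta_0$ is interior and $\lambda_{\min}(\widehat\Omega)\ge c/2$ uniformly by the argument in the proof of Proposition~\ref{prp:cugmm_consistency}. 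The population first-order condition is $2G'v-S'(v\otimes v)=0$. I would use it exactly as $G'Wg=0$ was used in Proposition~\ref{prp:gmm_ifs}, namely to drop the constant term of the expansion.

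\emph{Step 2: linearize the score at $\theta_0$.} First, $\sqrt n(\widehat v(\theta_0)-v)$ has influence function $W\nu_i-W\psi_\Omega(X_i)v=\widetilde\psi_v(X_i)$, by the delta method for the inverse. Second, $\sqrt n(\widehat G-G)'v$ gives $(G(X_i,\theta_0)-G)'v$. Third, $\widehat S(\theta_0)'(v\otimes v)$ equals the sample mean of $\partial_\theta[(g_i'v)^2]$, which is $2(g_i'v)G_i'v$. Its centered version is therefore $\widetilde\psi_{Sv}(X_i)$.

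For the remaining piece, I would perturb $\widehat v\otimes\widehat v$ with $S$ held fixed. This yields $S'(v\otimes\widetilde\psi_v+\widetilde\psi_v\otimes v)$. Collecting the terms gives $2(G_i-G)'v+2G'\widetilde\psi_v-\widetilde\psi_{Sv}-S'(\cdots)=f_1-f_2$. Cross-product remainders are $o_p(1)$ by the same $O_p(n^{-1/2})$ bookkeeping as in the proof of Proposition~\ref{prp:gmm_ifs}.

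\emph{Step 3: the Hessian.} I would apply the mean value theorem row by row and differentiate $F_n$ once more. Then I would use the ULLN, extended to second derivatives of $\widehat\Omega$ via the third-derivative moment bounds in Assumption~\ref{ass:regularity}(iii), to show $F_{n,\theta}(\widetilde\theta)\xrightarrow{p}A_{cu}$. The derivative of $2G(\theta)'v(\theta)$ splits into three pieces:
\begin{itemize}
\item $2G'WG$ and $2(g'W\otimes I_p)R=2H$;
\item $2G'\,\partial v/\partial\theta'$;
\item the derivative of $-S(\theta)'(v\otimes v)$, which gives the $T$ term and $-2S'(v\otimes I_q)\partial v/\partial\theta'$.
\end{itemize}
Here $\partial v/\partial\theta'=WG-(g'W\otimes W)S$. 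In the $2G'\,\partial v/\partial\theta'$ piece, the part $-2G'(g'W\otimes W)S$ should be matched to $2B$ after symmetry of $\Omega$ and the Kronecker identity $(g'\otimes G')\vect(-W\,\mathrm d\Omega\,W)$ are applied. The other parts reproduce $K$ exactly. Nonsingularity of $A_{cu}$ comes from Assumption~\ref{ass:cugmm}(ii). Then $\psi^{cu}=-A_{cu}^{-1}(f_1-f_2)$.

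I expect Step~3 to be the main obstacle. The issue is the sign and convention bookkeeping needed to reconcile the raw Hessian with the stated form $2(A+B)-K$. In particular, $B$ in Section~\ref{sec:ifgmm} is defined through $S=\partial\vect W/\partial\theta'$, whereas here $S$ denotes $\partial\vect\Omega/\partial\theta'$. I would first rewrite $\partial\vect W=-(W\otimes W)\partial\vect\Omega$, verify $2B=-2(g'\otimes G')(W\otimes W)S_\Omega$, and then check that the $\vect(S')$ convention in $T$ gives $((v\otimes v)'\otimes I_p)T$ rather than its transpose. As a safeguard against convention errors, I would check the final formula on the scalar case $p=q=1$.
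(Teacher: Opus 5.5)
Your proposal is correct and follows essentially the same route as the paper's proof. Both derive the score $2\widehat G'\widehat v-\widehat S'(\widehat v\otimes\widehat v)$, take a row-wise mean-value expansion whose Jacobian converges to $A_{cu}=2(A+B)-K$, and linearize the score at $\theta_0$ using $\widetilde\psi_v$ for $\widehat v$ and the centered $2(g_i'v)G_i'v$ for $\widehat S'(v\otimes v)$. Your explicit Hessian computation, including the rewriting $\partial\vect W=-(W\otimes W)\partial\vect\Omega$ that makes $-2G'(g'W\otimes W)S=2B$ despite the two meanings of $S$ in Section~\ref{sec:ifgmm} and Appendix~\ref{sec:cugmm}, checks out and fills in a step the paper only asserts.
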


\begin{proof}[Proof of Proposition \ref{prp:cugmm_if}]
    The CUGMM sample FOC is obtained by taking the total derivative of the objective function. By the chain rule and the identity \(\frac{\partial \text{vec}(\widehat{W}(\theta))}{\partial \theta'} = -(\widehat{W}(\theta) \otimes \widehat{W}(\theta))\widehat{S}(\theta)\), the FOC is
    \begin{align*}
        F_{n}(\theta) & = 2\widehat{G}(\theta)'\widehat{W}(\theta)\widehat{g}(\theta) - \widehat{S}(\theta)'(\widehat{W}(\theta) \otimes \widehat{W}(\theta))'(\widehat{g}(\theta) \otimes \widehat{g}(\theta)) \\
                      & = 2\widehat{G}(\theta)'\widehat{W}(\theta)\widehat{g}(\theta) - \widehat{S}(\theta)'(\widehat{W}(\theta)\widehat{g}(\theta) \otimes \widehat{W}(\theta)\widehat{g}(\theta)).
    \end{align*}
    Letting \(\widehat{v}(\theta) = \widehat{W}(\theta)\widehat{g}(\theta)\), the estimator satisfies \(F_{n}(\widehat{\theta}_{cue}) = 0\). At \(\theta_0\), the population FOC is \(F(\theta_0) = 2G'v - S'(v \otimes v) = 0\).

    Applying the mean value theorem around \(\theta_0\) gives \(0 = F_n(\theta_0) + F_{n, \theta}(\widetilde{\theta})(\widehat{\theta}_{cue} - \theta_0)\). Differentiating the FOC once more in \(\theta\), the linear term \(2\widehat{G}(\theta)'\widehat{W}(\theta)\widehat{g}(\theta)\) contributes \(2(A+B)\) in the limit, exactly the iterated-GMM curvature, while the quadratic term \(\widehat{S}(\theta)'(\widehat{v}(\theta) \otimes \widehat{v}(\theta))\) contributes the additional curvature \(K\) through the derivatives of \(\widehat{S}(\theta)\) and \(\widehat{W}(\theta) \otimes \widehat{W}(\theta)\). By the uniform convergence results of Proposition~\ref{prp:cugmm_consistency} and the consistency of \(\widetilde{\theta}\), the continuous mapping theorem gives \(F_{n, \theta}(\widetilde{\theta}) \xrightarrow{p} A_{cu} = 2(A+B) - K\).

    Next, we expand \(\sqrt{n}F_{n}(\theta_0)\) around population limits \(G\), \(v\), and \(S\). Let \(\widehat{G} = G + (\widehat{G}-G)\), \(\widehat{v} = v + (\widehat{v}-v)\), and \(\widehat{S} = S + (\widehat{S}-S)\). Substituting these into the sample FOC and collecting first-order terms yields
    \begin{align*}
        F_{n}(\theta_0) & = 2\left[ (\widehat{G}-G)'v + G'(\widehat{v}-v) \right]                                                                                    \\
                        & \quad - \left[ (\widehat{S}-S)'(v \otimes v) + S'\left(v \otimes (\widehat{v}-v) + (\widehat{v}-v) \otimes v\right) \right] + O_p(n^{-1}).
    \end{align*}
    We seek the influence function \(f_{cu}(X_i)\) such that \(\sqrt{n}F_n(\theta_0) = \frac{1}{\sqrt{n}}\sum_i f_{cu}(X_i) + o_p(1)\). We define \(f_{cu}(X_i) = f_1(X_i) - f_2(X_i)\) corresponding to the two bracketed terms, where \(f_1\) and \(f_2\) are the CUGMM-specific quantities given in the proposition statement, not the like-named terms in the proof of Proposition~\ref{prp:gmm_ifs}.

    To find \(f_{cu}(X_i)\), we determine the influence function for the composite vector \(\widehat{v}\). Since \(\sqrt{n}(\widehat{v}-v) = W\sqrt{n}(\widehat{g}-g) + \sqrt{n}(\widehat{W}-W)g + o_p(1)\), and by the delta method \(\sqrt{n}(\widehat{W}-W)\) has influence function \(-W\psi_\Omega(X_i)W\), the influence function for \(\sqrt{n}(\widehat{v}-v)\) is \(\widetilde{\psi}_v(X_i) = W\nu(X_i) - W\psi_\Omega(X_i)v\). Substituting \(\widetilde{\psi}_v(X_i)\) into the first bracket yields \(f_1(X_i) = 2(G(X_i,\theta_0) - G)'v + 2G'\widetilde{\psi}_v(X_i)\).

    For the second bracket, the term \(\sqrt{n}(\widehat{S}-S)'(v \otimes v)\) requires the influence function of \(\widehat{S}\), evaluated in projection with \(v \otimes v\). Let \(M_i(\theta_0) = \frac{\partial \text{vec}(g(X_i, \theta)g(X_i, \theta)')}{\partial \theta'}\big|_{\theta_0}\), such that \(\widehat{S} = \frac{1}{n}\sum_i M_i(\theta_0)\). Using the product rule, \(M_i = [(g_i \otimes I_q) + (I_q \otimes g_i)]G_i\), where \(g_i = g(X_i, \theta_0)\) and \(G_i = G(X_i, \theta_0)\). Post-multiplying by \((v \otimes v)\) gives
    \[
        M_i'(v \otimes v) = 2(g_i'v)G_i'v.
    \]
    Demeaning this gives the influence function \(\widetilde{\psi}_{Sv}(X_i) = 2(g_i'v)G_i'v - S'(v \otimes v)\). Combining these components yields \(f_2(X_i) = \widetilde{\psi}_{Sv}(X_i) + S'\left(v \otimes \widetilde{\psi}_v(X_i) + \widetilde{\psi}_v(X_i) \otimes v\right)\).

    Collecting all terms, the influence function of the CUGMM estimator is \(\psi^{cu}(X_i) = -A_{cu}^{-1}(f_1(X_i) - f_2(X_i))\).
\end{proof}

\newpage
\section{Appendix: Derivation of Examples}

\subsection{Derivation of Example \ref{exm:schennach}}
\label{sec:schennach_deriv}

This appendix details the derivations for Example \ref{exm:schennach}. We consider the moment function \(g(X_i, \theta) = (X_i - \theta, (X_i - \theta)^2 - 1)'\), assuming the data \(X_i\) are i.i.d. draws from a normal distribution \(N(\mu, \sigma^2)\). The parameter of interest is the mean \(\theta\), while the variance is constrained to one. Let \(U_i = X_i - \mu\) denote the demeaned data, \(\delta = \mu - \theta\) the parameter bias, and \(a = \sigma^2 - 1\) the degree of misspecification. The model is correctly specified if \(\sigma^2=1\) (i.e., \(a=0\)).

We first establish the population quantities required by Proposition \ref{prp:gmm_ifs}. The expected moment vector is \(g = [0, a]'\), and the Jacobian is \(G = [-1, 0]'\). The population Hessian of the moments is \(R = \mathbb{E}[\partial \vect(G(X_i, \theta)')/\partial\theta']|_{\mu} = [0, 2]'\). The covariance matrix of the moments is \(\Omega(\mu) = \text{diag}(\sigma^2, \tau)\), where \(\tau = \mathbb{E}[(U_i^2 - 1)^2] = 3\sigma^4 - 2\sigma^2 + 1\). The efficient weight matrix is \(W = \text{diag}(1/\sigma^2, 1/\tau)\). We also define \(S = \partial \vect(\Omega(\theta))/\partial \theta' |_{\mu}\). Noting that \(\partial/\partial\theta = -\partial/\partial\delta\), we calculate the derivatives of the covariance terms at \(\mu\), yielding \(S = [0, s_{12}, s_{12}, 0]'\) with \(s_{12} = 1-3\sigma^2\).

\textbf{One-Step GMM (W=I)}

The one-step GMM estimator minimizes the objective function \(Q_{I}(\theta) = g(\theta)'g(\theta) = \delta^2 + (a+\delta^2)^2\). The first-order condition with respect to \(\delta\) is:
\begin{align*}
    \frac{\partial Q_I}{\partial \delta} = 2\delta + 4\delta(a+\delta^2) = 2\delta(1 + 2a + 2\delta^2) = 0.
\end{align*}
Substituting \(a = \sigma^2 - 1\), the condition becomes \(2\delta(2\sigma^2 - 1 + 2\delta^2) = 0\). Therefore,
\[
    \theta_0 = \begin{cases}
        \mu,                                   & \text{if } \sigma^2 \geq \frac{1}{2}, \\
        \mu \pm \sqrt{\frac{1}{2} - \sigma^2}, & \text{if } \sigma^2 < \frac{1}{2}.
    \end{cases}
\]

We analyze the sensitivity and informativeness at \(\theta_0=\mu\) under the condition \(\sigma^2 \geq 1/2\). The curvature components at \(\theta_0=\mu\) are \(H = 2a\) and \(A = G'IG + H = 2\sigma^2 - 1\). Applying Proposition \ref{prp:gmm_ifs}(i), the influence function is \(\psi^{1s}(X_i) = -A^{-1}(G'Ig(X_i, \mu) + G(X_i, \mu)'Ig) = U_i\). It follows that \(\Lambda^{1s} = [1, 0]\) and \(\Delta^{1s} = 1\).

\textbf{Two-Step GMM}

We assume the first-step estimator is the one-step GMM estimator, which requires \(\sigma^2 \geq 1/2\) to converge to \(\mu\). The second-step estimator minimizes \(Q_{2S}(\theta) = g(\theta)'W(\mu)g(\theta) = \delta^2/\sigma^2 + (a+\delta^2)^2/\tau\). The FOC is
\begin{align*}
    \frac{\partial Q_{2S}}{\partial \delta} = 2\delta\left(\frac{1}{\sigma^2} + \frac{2(a+\delta^2)}{\tau}\right) = 0.
\end{align*}
Evaluating the term in the parenthesis at \(\delta=0\) yields \(A = 1/\sigma^2 + 2a/\tau\). Under normality, \(A = (5\sigma^4 - 4\sigma^2 + 1)/(\sigma^2\tau)\), which is positive for all real \(\sigma^2\). Since \(A > 0\) and \(\delta^2/\tau \geq 0\), the term in the parenthesis is positive, ensuring \(\delta=0\) is the unique solution. Thus, the pseudo-true value is \(\theta_0 = \mu\).

We derive the influence function using the common curvature matrices for efficient GMM: \(H = 2a/\tau\), \(B = as_{12}/(\sigma^2\tau)\), and \(A_{it} = 2\sigma^2/\tau\). Using Proposition \ref{prp:gmm_ifs}(iii) with first-step influence function \(\psi^{\phi}(X_i) = U_i\), we obtain
\begin{align*}
    \psi^{2s}(X_i) = U_i - A^{-1}\frac{a}{\sigma^2\tau}Z_i,
\end{align*}
where \(Z_i = U_i^3 - 3\sigma^2 U_i\). Since \(\mathbb{E}[U_i Z_i] = 0\), the sensitivity remains \(\Lambda^{2s} = [1, 0]\). However, the variance increases, leading to \(\Delta^{2s} < 1\) whenever \(a \neq 0\).

\textbf{Iterated GMM}

The iterated GMM estimator is defined by the fixed point condition \(G(\theta)'W(\theta)g(\theta) = 0\). At \(\theta = \mu\), we have \([-1, 0] \cdot \text{diag}(1/\sigma^2, 1/\tau) \cdot [0, a]' = 0\), so \(\mu\) satisfies the condition. For convergence and uniqueness locally, we verify the contraction mapping condition \(|A^{-1}B| < 1\). Under normality, this inequality takes the form:
\begin{align*}
    \left| \frac{a(1-3\sigma^2)}{\sigma^2\tau} \right| < \frac{1}{\sigma^2} + \frac{2a}{\tau} \quad \iff \quad |-3\sigma^4 + 4\sigma^2 - 1| < 5\sigma^4 - 4\sigma^2 + 1.
\end{align*}
This holds for all \(\sigma^2 > 0\). Thus, \(\theta_0 = \mu\) is the unique fixed point.

Using Proposition \ref{prp:gmm_ifs}(iv), the influence function is \(\psi^{it}(X_i) = -A_{it}^{-1}f_{2}(X_i)\), with $f_2$ as in the proof of Proposition~\ref{prp:gmm_ifs}, which simplifies to:
\begin{align*}
    \psi^{it}(X_i) = \frac{U_i}{2\sigma^4} \left[ 5\sigma^4 - 3\sigma^2 - (\sigma^2 - 1)U_i^2 \right].
\end{align*}
We verify \(\mathbb{E}[\psi^{it}U_i] = \sigma^2\), yielding \(\Lambda^{it} = [1, 0]\). The informativeness is \(\Delta^{it} = 2\sigma^4 / (5\sigma^4 - 6\sigma^2 + 3)\), which is less than 1 under misspecification.

\textbf{Continuously Updating GMM}

The CUGMM estimator minimizes \(Q_{cue}(\theta) = g(\theta)'\Omega(\theta)^{-1}g(\theta)\). Note that \(\Omega(\theta) = V(\theta) + g(\theta)g(\theta)'\), where \(V(\theta) = \text{Var}(g(X_i, \theta))\). Substituting into the objective function gives \(Q_{cue}(\theta) = g(\theta)'(V(\theta) + g(\theta)g(\theta)')^{-1}g(\theta)\).

Using the Sherman-Morrison formula,
\begin{align*}
    Q_{cue} & = g'\left(V^{-1} - \frac{V^{-1}gg'V^{-1}}{1 + g'V^{-1}g}\right)g \\
            & = g'V^{-1}g - \frac{g'V^{-1}gg'V^{-1}g}{1 + g'V^{-1}g}           \\
            & = Q_{Var} - \frac{Q_{Var}^2}{1 + Q_{Var}}                        \\
            & = \frac{Q_{Var}}{1 + Q_{Var}}.
\end{align*}

Minimizing \(Q_{cue}(\theta)\) is equivalent to minimizing \(Q_{Var}(\theta) = g(\theta)'V(\theta)^{-1}g(\theta)\), as \(Q_{cue} = Q_{Var} / (1 + Q_{Var})\) is a monotonically increasing function of \(Q_{Var}\).

We calculate \(V(\theta)\) explicitly as a function of \(\delta\). Using the properties of the normal distribution, we obtain
\begin{align*}
    V(\theta) = \begin{pmatrix} \sigma^2 & 2\delta\sigma^2 \\ 2\delta\sigma^2 & 2\sigma^4 + 4\delta^2\sigma^2 \end{pmatrix}.
\end{align*}
The determinant is \(|V(\theta)| = 2\sigma^6\), which is constant and independent of \(\theta\). Substituting \(g(\theta) = [\delta, a+\delta^2]'\) and computing the quadratic form yields:
\begin{align*}
    Q_{Var}(\delta) = \frac{1}{2\sigma^6} \left[ \sigma^2\delta^4 + 2\sigma^2\delta^2(1 - a + \sigma^2) + \sigma^2 a^2 \right].
\end{align*}
Using \(a = \sigma^2 - 1\), the expression simplifies to \(Q_{Var}(\delta) = \frac{1}{2\sigma^4}(\delta^4 + 2\delta^2 + a^2)\). This function is convex in \(\delta^2\) and achieves a unique global minimum at \(\delta=0\) for any \(\sigma^2 > 0\). Thus, the pseudo-true value is uniquely \(\theta_0 = \mu\), provided \(\sigma^2 > 0\).

The curvature matrix of Proposition~\ref{prp:cugmm_if} is \(A_{cu} = 2A_{it} - K\), with \(K\) as defined before Assumption~\ref{ass:cugmm}. Calculating \(K\) under normality yields \(A_{cu} = 8\sigma^4/\tau^2\), which is positive, confirming a local minimum. The influence function derived via Proposition \ref{prp:cugmm_if} is
\begin{align*}
    \psi^{cue}(X_i) = U_i - \frac{a}{2\sigma^2} Z_i.
\end{align*}
Therefore, \(\Lambda^{cue} = [1, 0]\). The informativeness is \(\Delta^{cue} = (1 + \frac{3}{2}(\sigma^2 - 1)^2)^{-1}\), which is less than 1 when \(a \neq 0\).

\subsection{Derivation of Example \ref{exm:schennach2}}
\label{app:schennach2}

We consider the estimation of the mean from i.i.d. data \(X_i \sim N(0, \sigma^2)\). Two researchers use one-step GMM with the identity weight matrix and misspecified moments due to the incorrect assumption that \(X_i \sim N(\theta, 1)\). The first researcher uses the moment function \(g_1(X_i, \theta) = (X_i-\theta, (X_i-\theta)^2-1)'\) and the second researcher uses the moment function \(g_2(X_i, \theta) = (X_i-\theta, (X_i-\theta)^4-3)'\). In both cases, the pseudo-true value is \(\theta_0 = 0\).

For the first researcher, at \(\theta_0=0\), the population moment is:
\[
    g_1 = \mathbb{E}[g_1(X_i, 0)] = \begin{pmatrix} 0 \\ \sigma^2 - 1 \end{pmatrix}.
\]
Let \(a = \sigma^2 - 1\).
The Jacobian is \(G_1(X_i, \theta) = (-1, -2(X_i-\theta))'\). The expected Jacobian at \(\theta_0=0\) is:
\[
    G_1 = \mathbb{E}[G_1(X_i, 0)] = \begin{pmatrix} -1 \\ 0 \end{pmatrix}.
\]
The curvature matrix is \(A_1 = G_1'G_1 + (g_1' \otimes I_1)R_1\). The Hessian of the second moment is \(\mathbb{E}[\nabla_\theta (-2(X_i-\theta))] = 2\). Thus \(H_1 = a(2) = 2a\).
\[
    A_1 = (-1)^2 + 2a = 1 + 2a.
\]

From Proposition \ref{prp:gmm_ifs}(i), the influence function is
\[
    \psi_1(X_i) = -A_1^{-1} f_{1s,1}(X_i),
\]
where \(f_{1s,1}(X_i) = G_1'g_1(X_i, 0) + G_1(X_i, 0)'g_1\), and
\[
    G_1'g_1(X_i, 0) = \begin{pmatrix} -1 & 0 \end{pmatrix} \begin{pmatrix} X_i \\ X_i^2 - 1 \end{pmatrix} = (-1)(X_i) + (0)(X_i^2 - 1) = -X_i,
\]
\[
    G_1(X_i, 0)'g_1 = \begin{pmatrix} -1 & -2X_i \end{pmatrix} \begin{pmatrix} 0 \\ a \end{pmatrix} = (-1)(0) + (-2X_i)(a) = -2aX_i.
\]
Summing these terms yields \(f_{1s,1}(X_i) = -X_i - 2aX_i = -(1+2a)X_i\).
Substituting into the influence function formula
\[
    \psi_1(X_i) = -\frac{1}{1+2a} \left( -(1+2a)X_i \right) = X_i.
\]
The asymptotic variance is \(V_1 = \mathbb{E}[\psi_1(X_i)^2] = \mathbb{E}[X_i^2] = \sigma^2\).
Since \(\psi_1(X_i)\) is perfectly correlated with the first element of \(\nu(X_i)\), \(\Delta_1 = 1\).

For the second researcher, at \(\theta_0=0\),
\[
    g_2 = \mathbb{E}[g_2(X_i, 0)] = \begin{pmatrix} 0 \\ 3\sigma^4 - 3 \end{pmatrix}.
\]
Let \(b = 3(\sigma^4 - 1)\).
The Jacobian is \(G_2(X_i, \theta) = (-1, -4(X_i-\theta)^3)'\). The expected Jacobian is:
\[
    G_2 = \mathbb{E}[G_2(X_i, 0)] = \begin{pmatrix} -1 \\ 0 \end{pmatrix}.
\]
The curvature matrix \(A_2\) involves the expected Hessian of the second moment, \(\mathbb{E}[12X_i^2] = 12\sigma^2\). Thus, \(A_2 = 1 + b(12\sigma^2) = 1 + 12b\sigma^2\).
\[
    G_2'g_2(X_i, 0) = \begin{pmatrix} -1 & 0 \end{pmatrix} \begin{pmatrix} X_i \\ X_i^4 - 3 \end{pmatrix} = -X_i.
\]
\[
    G_2(X_i, 0)'g_2 = \begin{pmatrix} -1 & -4X_i^3 \end{pmatrix} \begin{pmatrix} 0 \\ b \end{pmatrix} = -4bX_i^3.
\]

The influence function is
\[
    \psi_2(X_i) = -A_2^{-1} (-X_i - 4bX_i^3) = \frac{1}{A_2} (X_i + 4bX_i^3).
\]
To calculate the variance \(V_2 = \mathbb{E}[\psi_2^2]\), we use the higher moments of the normal distribution (\(\mathbb{E}[X^4]=3\sigma^4, \mathbb{E}[X^6]=15\sigma^6\)),
\[
    V_2 = \frac{1}{A_2^2} \left( \sigma^2 + 24b\sigma^4 + 240b^2\sigma^6 \right).
\]

Under symmetry, \(\psi_2\) is uncorrelated with the residuals of the second moment condition \(X_i^4-3\). Therefore, the informativeness \(\Delta_2\) depends only on the correlation with the first moment \(X_i\).
Using Definition \ref{def:info},
\[
    \Delta_2 = \frac{(\mathbb{E}[\psi_2 X_i])^2}{\mathbb{E}[\psi_2^2] \mathbb{E}[X_i^2]} = \frac{(\mathbb{E}[\psi_2 X_i])^2}{V_2 \sigma^2}.
\]
The covariance is calculated as
\[
    \mathbb{E}[\psi_2 X_i] = \frac{1}{A_2} (\mathbb{E}[X_i^2] + 4b\mathbb{E}[X_i^4]) = \frac{\sigma^2 + 12b\sigma^4}{1 + 12b\sigma^2} = \sigma^2.
\]
Substituting this back into the expression for \(\Delta_2\),
\[
    \Delta_2 = \frac{(\sigma^2)^2}{V_2 \sigma^2} = \frac{\sigma^2}{V_2} = \frac{V_1}{V_2}.
\]
This shows that informativeness \(\Delta_2\) is exactly the ratio of the efficient variance \(V_1\) to the inflated variance \(V_2\).

\newpage
\section{Appendix: Additional Results}

\subsection{Influence Function for Clustered Data}\label{app:cluster}

We adopt the clustered-sampling framework of \citet*{hansen2019asymptotic}, specialized to the panel setting in which the cluster is the cross-sectional unit. The data consist of \(n\) units indexed \(i = 1, \dots, n\), where unit \(i\) is observed over \(T_i\) periods \(X_i = (X_{i1}, \dots, X_{iT_i})\). As in \citet*{hansen2019asymptotic}, the panel lengths \(T_i\) are nonrandom, units are independent across \(i\), and serial dependence within a unit is left unrestricted.

The GMM estimator is built from the unit-level moment \(g_i(\theta) = \sum_{t=1}^{T_i} g(X_{it}, \theta)\), which aggregates the observation-level moments over the periods of unit \(i\). Writing the unit-level influence function as \(\psi_i = \sum_{t=1}^{T_i}\psi(X_{it})\), the estimator has the asymptotically linear representation
\[
    \sqrt{n}\,(\widehat\theta - \theta_0) = \frac{1}{\sqrt{n}}\sum_{i=1}^n \psi_i + o_p(1),
\]
where the \(\psi_i\) are mean zero and independent across units. As the number of units \(n \to \infty\), the central limit theorem for clustered samples of \citet*{hansen2019asymptotic} gives the asymptotic normality of \(\widehat\theta\), with asymptotic variance estimated by the unit sample analogue of \(\mathrm{Var}(\psi_i)\). The sensitivity and informativeness measures are therefore computed at the unit level, using \(\psi_i\) and the unit moment \(\nu_i = \sum_{t=1}^{T_i}\nu(X_{it})\) in place of their observation-level counterparts.

%\newpage
\subsection{Unique Moments in the BPP Model}
\label{sec:bppmoment}

In the BPP model, the structural equations are
\begin{align*}
    \Delta y_t & = \zeta_t + \epsilon_t + (\theta - 1)\epsilon_{t-1} - \theta \epsilon_{t-2} \\
    \Delta c_t & = \phi \zeta_t + \psi \epsilon_t + \xi_t + \Delta u_t
\end{align*}

We first illustrate the construction of non-collinear moments in the stationary case, where the shock variances are constant over time. Within a three-period window the model implies the following $11$ unique non-collinear second moments.

\begin{enumerate}
    \item \( \mathbb{E}[(\Delta c_t)^2] = \phi^2 \sigma_\zeta^2 + \psi^2 \sigma_\epsilon^2 + \sigma_\xi^2 + 2\sigma_u^2 = C_0 \)

    \item \( \mathbb{E}[\Delta c_t \Delta c_{t-1}] = -\sigma_u^2 = C_1 \)

    \item \( \mathbb{E}[\Delta c_{t-1} \Delta c_{t+1}] = 0 \)

    \item \( \mathbb{E}[(\Delta y_t)^2] = \sigma_\zeta^2 + 2(1 - \theta + \theta^2)\sigma_\epsilon^2 = I_0 \)

    \item \( \mathbb{E}[\Delta y_t \Delta y_{t-1}] = -(1 - \theta)^2 \sigma_\epsilon^2 = I_1 \)

    \item \( \mathbb{E}[\Delta y_{t-1} \Delta y_{t+1}] = -\theta \sigma_\epsilon^2 = I_2 \)

    \item \( \mathbb{E}[\Delta c_t \Delta y_t] = \phi \sigma_\zeta^2 + \psi \sigma_\epsilon^2 = R_0 \)

    \item \( \mathbb{E}[\Delta c_t \Delta y_{t+1}] = \psi (\theta - 1) \sigma_\epsilon^2 = R_1 \)

    \item \( \mathbb{E}[\Delta c_{t-1} \Delta y_{t+1}] = -\psi \theta \sigma_\epsilon^2 = R_2 \)

    \item \( \mathbb{E}[\Delta c_t \Delta y_{t-1}] = 0 \)

    \item \( \mathbb{E}[\Delta c_{t+1} \Delta y_{t-1}] = 0 \)
\end{enumerate}

Let \(\Delta c = (\Delta c_{t-1}, \Delta c_t, \Delta c_{t+1})'\) and \(\Delta y = (\Delta y_{t-1}, \Delta y_t, \Delta y_{t+1})'\). Then the 3-period covariance matrix for each \(t\) is
\[
    \Omega = \begin{pmatrix}
        C_0 & C_1 & 0   & R_0 & R_1 & R_2 \\
        C_1 & C_0 & C_1 & 0   & R_0 & R_1 \\
        0   & C_1 & C_0 & 0   & 0   & R_0 \\
        R_0 & 0   & 0   & I_0 & I_1 & I_2 \\
        R_1 & R_0 & 0   & I_1 & I_0 & I_1 \\
        R_2 & R_1 & R_0 & I_2 & I_1 & I_0
    \end{pmatrix}
\]

The model estimated in Section~\ref{sec:bpp} allows the variances of the permanent shock, the transitory shock, and the consumption innovation to vary over time. The stationary three-period window above is the per-window building block: the full moment vector stacks the analogous non-collinear elements for every three-period window in the panel, and applying the same collinearity removal across all available periods yields $35$ parameters and $325$ non-collinear second-moment conditions.

For reference, we also record the covariance structure in the stationary case without measurement error. The income process consists of a random walk permanent component $\zeta_t$ and a serially uncorrelated transitory component $\epsilon_t$. The growth equation is:
\[
    \Delta y_t = \zeta_t + \Delta v_t = \zeta_t + \epsilon_t - \epsilon_{t-1}
\]

The consumption process follows a random walk with sensitivity to permanent shocks ($\phi$) and transitory shocks ($\psi$), plus an independent innovation $\xi_t$. Measurement error is assumed to be zero in this specific derivation:
\[
    \Delta c_t = \phi \zeta_t + \psi \epsilon_t + \xi_t
\]

Assume that $\zeta_t, \epsilon_t, \xi_t$ are mutually uncorrelated with means zero and variances $\sigma_\zeta^2, \sigma_\epsilon^2, \sigma_\xi^2$, and that all parameters are stationary. Let \(\Delta c = (\Delta c_{t-1}, \Delta c_t, \Delta c_{t+1})'\) and \(\Delta y = (\Delta y_{t-1}, \Delta y_t, \Delta y_{t+1})'\). We obtain the following elements in the 3-period covariance matrix.
\begin{align*}
    \mathbb{E}[(\Delta c_t)^2]                & = \phi^2 \sigma_\zeta^2 + \psi^2 \sigma_\epsilon^2 + \sigma_\xi^2 \equiv C \\
    \mathbb{E}[\Delta c_t \Delta c_{t-1}]     & = 0                                                                        \\
    \mathbb{E}[\Delta c_t \Delta c_{t+1}]     & = 0                                                                        \\
    \mathbb{E}[(\Delta y_t)^2]                & = \sigma_\zeta^2 + 2\sigma_\epsilon^2 \equiv I                             \\
    \mathbb{E}[\Delta y_t \Delta y_{t+1}]     & = -\sigma_\epsilon^2                                                       \\
    \mathbb{E}[\Delta y_{t-1} \Delta y_{t+1}] & = 0                                                                        \\
    \mathbb{E}[\Delta c_t \Delta y_t]         & = \phi \sigma_\zeta^2 + \psi \sigma_\epsilon^2 \equiv R                    \\
    \mathbb{E}[\Delta c_t \Delta y_{t+1}]     & = -\psi \sigma_\epsilon^2                                                  \\
    \mathbb{E}[\Delta c_t \Delta y_{t-1}]     & = 0                                                                        \\
    \mathbb{E}[\Delta c_{t-1} \Delta y_{t+1}] & = 0                                                                        \\
    \mathbb{E}[\Delta c_{t+1} \Delta y_{t-1}] & = 0                                                                        \\
\end{align*}

And the covariance matrix becomes
\[
    \Omega =
    \begin{pmatrix}
        C                      & 0                      & 0 & R                  & -\psi\sigma_\epsilon^2 & 0                      \\
        0                      & C                      & 0 & 0                  & R                      & -\psi\sigma_\epsilon^2 \\
        0                      & 0                      & C & 0                  & 0                      & R                      \\
        R                      & 0                      & 0 & I                  & -\sigma_\epsilon^2     & 0                      \\
        -\psi\sigma_\epsilon^2 & R                      & 0 & -\sigma_\epsilon^2 & I                      & -\sigma_\epsilon^2     \\
        0                      & -\psi\sigma_\epsilon^2 & R & 0                  & -\sigma_\epsilon^2     & I
    \end{pmatrix}
\]

% \newpage
% \section{Appendix: Extra Tables and Figures \label{sec:appTF}}
% \renewcommand{\thetable}{C\arabic{table}}
% \setcounter{table}{0}
% \renewcommand{\thefigure}{C\arabic{figure}}
% \setcounter{figure}{0}

\end{document}